\documentclass[a4paper,UKenglish,cleveref,autoref,thm-restate,numberwithinsect]{lipics-v2021}

\usepackage{mathtools}
\usepackage{nicefrac}
\usepackage{hyphenat}
\usepackage{booktabs}
 
\usepackage{tikz}
\usetikzlibrary{arrows.meta,graphs,shadows,patterns,positioning,calc,backgrounds,fit}
\usepackage{pgfplots}
\pgfplotsset{compat=1.18}
\usepackage{pgfplotstable}
 
\usepackage{xspace}
 
\newtheorem{rr}[theorem]{Reduction Rule}
 
\Crefname{rr}{Rule}{Rules}
\crefname{rr}{Rule}{Rules}
\Crefname{theorem}{Theorem}{Theorems}
\crefname{theorem}{Thm.}{Thms.}
\Crefname{lemma}{Lemma}{Lemmata}
\crefname{lemma}{Lem.}{Lems.}
\Crefname{observation}{Observation}{Observations}
\crefname{observation}{Obs.}{Obs.}
 
\newcommand{\appendixstar}{\ensuremath{\star}\xspace}
 
\newcommand{\lemDP}{}
\newcommand{\lemMAPPDtable}{}
\newcommand{\lemMxTPDtable}{}
\newcommand{\lemMnTPDtable}{}
 
\DeclareRobustCommand{\tikzdot}[1]{\tikz[baseline=-0.6ex]{\node[draw,fill=#1,inner sep=2pt,circle] at (0,0) {};}}

\newcommand{\Oh}{\ensuremath{\mathcal{O}}}

\DeclareMathOperator{\twwithoutN}{{tw}}
\DeclareMathOperator{\swwithoutN}{{sw}}
\DeclareMathOperator{\nswwithoutN}{{nsw}}

\DeclareMathOperator{\anc}{anc}
\DeclareMathOperator{\DP}{DP}

\DeclareMathOperator{\GW}{GW}
\DeclareMathOperator{\parents}{parents}
\DeclareMathOperator{\children}{children}

\newcommand{\w}{{\ensuremath{\omega}}\xspace}
\newcommand\lb{\linebreak}
\newcommand\Recc[1]{Recurrence~(\ref{#1})}

\newcommand{\Wh}[1]{{\normalfont W[#1]}\xspace}
\newcommand{\NP}{{\normalfont{NP}}\xspace}

\newcommand{\Instance}{{\ensuremath{\mathcal{I}}}\xspace}
\newcommand{\Net}{{\ensuremath{{N}}}\xspace}
\newcommand{\Tree}{{\ensuremath{{T}}}\xspace}

\newcommand{\tw}{{\ensuremath{\twwithoutN_{\Net}}}\xspace}

\newcommand{\nsw}{{\ensuremath{\nswwithoutN_{\Net}}}\xspace}
\newcommand{\PD}{\PDsub\Tree}
\newcommand{\PDsub}[1]{{\mathrm{PD}_{#1}}\xspace}
 
\newcommand{\indeg}{{\ensuremath{\deg^-}}\xspace}
\newcommand{\outdeg}{{\ensuremath{\deg^+}}\xspace}
 
\newlength{\boxindent}
\newcommand{\problemdef}[3]{%
	\begin{trivlist}
		\item \tikz{\node [draw,inner sep=4.5pt] {\begin{minipage}{\textwidth-9.4pt}
					\raggedright
					\normalsize\textsc{#1}
 
					\smallskip
 
					\settowidth{\boxindent}{\textbf{Compute:} }%
					\leavevmode\hangindent=\boxindent
					\rlap{\textbf{Input:}}\phantom{\textbf{Compute:}}~#2
 
					\leavevmode\hangindent=\boxindent
					\textbf{Compute:}~#3
			\end{minipage}};}
	\end{trivlist}
}
 
\newcommand{\PROB}[1]{{{\normalfont\textsc{#1}}}\xspace}

\newcommand{\KP}{\PROB{Knapsack}}

\newcommand{\ILP}{\PROB{Integer Linear Program}}

\newcommand{\scanwidth}{\textsf{scanwidth}\xspace}
\newcommand{\panda}{\textsf{PaNDA}\xspace}
\newcommand{\phylozoo}{\textsf{PhyloZoo}\xspace}
 
\newcommand{\NSWTE}{\PROB{Node Scanwidth tree-extension}}

\newcommand{\MAPPD}{\PROB{MAP-PD}}
\newcommand{\MnTPD}{\PROB{Min-Tree-PD}}
\newcommand{\MxTPD}{\PROB{Max-Tree-PD}}

\newcommand{\MAPPDlong}{\PROB{Budgeted-Maximize-All-Paths-PD}}
\newcommand{\MnTPDlong}{\PROB{Minimum-Switching-Tree-PD}}
\newcommand{\MxTPDlong}{\PROB{Budgeted-Maximum-Switching-Tree-PD}}

\newcommand{\bMAPPD}{\PROB{b-\MAPPD}}

\newcommand{\bMxTPD}{\PROB{b-\MxTPD}}

\newcommand{\SbMAPPD}{{\normalfont\texttt{Optimize b-\MAPPD}}\xspace}
\newcommand{\SbMxTPD}{{\normalfont\texttt{Optimize b-\MxTPD}}\xspace}
\newcommand{\SbMnTPD}{{\normalfont\texttt{Compute \MnPD($A$)}}\xspace}
 
\newcommand{\MAP}{\ensuremath{\text{PD}^{\text{map}}_N}}
\newcommand{\MxPD}{\ensuremath{\text{PD}^{\text{max}}_N}}
\newcommand{\MnPD}{\ensuremath{\text{PD}^{\text{min}}_N}}

\newcommand{\B}{\ensuremath{\overline{B}}\xspace}

\title{Tractable Maximization of Budgeted Phylogenetic Diversity on Networks Utilizing Node Scanwidth}
\titlerunning{Budgeted Phylogenetic Diversity on Networks}

\author{Niels Holtgrefe}{Delft Institute of Applied Mathematics, Delft University of Technology, Delft, The Netherlands \and \url{www.NielsHoltgrefe.nl}}{n.a.l.holtgrefe@tudelft.nl}{https://orcid.org/0009-0001-6162-9668}{}
\author{Jannik Schestag}{Department of Informatics, University of Bergen, Bergen, Norway \and \url{www.JannikSchestag.eu}}{j.schestag@uib.no}{https://orcid.org/0000-0001-7767-2970}{}

\authorrunning{N. Holtgrefe and J. Schestag}
\Copyright{Niels Holtgrefe and Jannik Schestag}

\ccsdesc{Theory of computation~Parameterized complexity and exact algorithms}
\ccsdesc{Applied computing~Bioinformatics}

\keywords{Phylogenetic diversity, scanwidth, node scanwidth, fixed-parameter tractability, dynamic programming}

\hideLIPIcs
\nolinenumbers

\begin{document}

\maketitle

\begin{abstract}
	\textbf{Background:} Identifying a subset of taxa that maximizes phylogenetic diversity is a cornerstone of quantitative conservation planning.
	Traditionally, phylogenetic diversity is defined over a phylogenetic tree in which leaves resemble present-day taxa and the branch lengths capture the estimated evolutionary distinctiveness.
	While maximizing phylogenetic diversity is computationally tractable on trees with unit costs, the problem becomes computationally intractable when transitioning to phylogenetic networks or to budgeted versions in which protecting taxa incurs non-homogeneous costs.
	This paper addresses these two challenges together, providing definitions and a comprehensive analysis of three distinct variants of budgeted phylogenetic diversity on networks.
	
	\textbf{Results:} We conduct our study through the lens of a small structural parameter, node scanwidth ($\nswwithoutN$), which measures the tree-likeness of a phylogenetic network.
	Given a tree-extension of width $\nswwithoutN$, we show that two of the considered variants can be optimized in $\Oh^*(3^{\nswwithoutN} \cdot B^2)$ time, where $B$ is the budget.
	For the computationally harder third variant, we provide an algorithm to compute phylogenetic diversity scores in $\Oh^*(4^{\nswwithoutN})$ time.
	We further contribute the first exact algorithms to compute node scanwidth itself.
	On highly reticulated, simulated networks with several hundred taxa and heterogeneous costs, our implementation computes phylogenetic diversity scores and optimal node scanwidth in fractions of a second, and the budgeted optimization algorithms significantly outperform existing benchmarks previously limited to unit-cost scenarios.
	
	\textbf{Conclusions:} Node scanwidth proves to be a small and computable parameter that makes budgeted phylogenetic diversity tractable on realistic networks, scaling comfortably to a thousand taxa.
	This narrows the gap between realistic models of reticulate evolution and the computational tools available for conservation planning.
\end{abstract}
\newpage

\section{Background}
\label{sec:introduction}

The biosphere is currently undergoing what scientists describe as the Sixth Mass Extinction~\mbox{\cite{barnosky2011has,cowie2022sixth}}.
Unlike previous events driven by natural phenomena, the current decline in biodiversity is largely anthropogenic, resulting from habitat destruction, climate change, and over-exploitation.
In this urgent context, conservation biologists face the daunting ``agony of choice'': Given limited resources, which species should be prioritized for protection?~\cite{vane}

Functional diversity---the range of biological traits and roles within an ecosystem---is often considered the gold standard for prioritization~\cite{margules2000systematic,sarkar2006biodiversity}.
However, collecting exhaustive trait data for hundreds of thousands of species remains a major bottleneck~\cite{FAITH1992}.
Phylogenetic diversity~(PD) has thus emerged as a desired proxy, operating on the principle that the evolutionary history captured in a phylogeny reflects the underlying feature diversity of a clade.
Classically, the basis of PD is a phylogenetic tree in which the leaves represent present-day species and edge weights can model phylogenetic differences between species.
The PD score of a set of species~$A$ then is the total weight of the edges between the root of the tree and any species in~$A$~\cite{FAITH1992}.

Fortunately, PD can be maximized with a greedy algorithm on trees~\cite{FAITH1992,Pardi2005,steel}, such that even very large instances can be solved optimally within seconds~\cite{minh}.
Several extensions that make the problem more realistic were introduced, but always at the cost of computational complexity.
One of the first generalizations introduced was adding integer costs to species, which however also generalizes the weakly \NP-hard \KP problem~\cite{Pardi2007}.

While PD on trees is well-understood, biological reality is rarely strictly bifurcating~\cite{huson2010phylogenetic}. Evolutionary processes such as hybridization, horizontal gene transfer, and incomplete lineage sorting necessitate the use of phylogenetic networks~\cite{huson2010phylogenetic,LinzThesis}; see~\Cref{fig:example-PD-1}. By moving beyond trees, PD measures can account for the reticulate nature of evolution, capturing the unique contributions of taxa with multiple ancestries. This ensures that conservation priorities reflect the true genetic heritage and evolutionary potential stored within a complex ecosystem, rather than an oversimplified approximation of a tree.
On networks, however, even simple variants of PD maximization become \NP-hard~\cite{bordewichNetworks,AVGTree,WickeFischer2018}.
Moreover, while budgetary constraints (in its most general form the Generalized Noah's Ark Problem) have been studied on trees~\cite{GNAP,Pardi2007}, their integration with network models remains unexplored.

\paragraph*{Our contribution}
In this paper, we address this gap by considering budgeted PD on networks, where taxa are assigned integer costs.
As optimization of PD on trees with a budget is already \NP-hard~\cite{Pardi2007}, and many variants of PD on networks are \Wh{1}-hard with respect to the solution size~\cite{MAPPD,MaxNPD}, we cannot hope for tractable algorithms with respect to tree-likeness or the budget alone.
Instead, we propose a combined parameterization with respect to the budget plus the tree-likeness of the network, captured by its \emph{node scanwidth}.
Informally, node scanwidth measures the maximum number of ancestral lineages that are simultaneously `active' across any horizontal cut of the network as it is processed from the leaves upward~\cite{BruchholdThesis,berry2020scanning}, so that tree-like networks have small node scanwidth and densely reticulate ones have large node scanwidth.
It has so far hardly been considered in the literature~\cite{BruchholdThesis,twvssw}, and computing it is a prerequisite for everything that follows. We therefore begin, in~\Cref{sec:compute-nsw}, by giving reduction rules and an exact algorithm that compute an optimal node scanwidth together with its decomposition.
 
With this decomposition in hand, in~\Cref{sec:apply}, we turn to the diversity problems themselves.
We introduce three budgeted variants of PD maximization on networks, in which taxa may carry different costs of protection, and show how to solve them efficiently in the combined parameter of budget~($B$) and node scanwidth~($\nsw$).
For two of the variants we give $\Oh^*(3^{\nsw} \cdot \min^2(B, \B))$ time algorithms for the optimization, where $\B$ is the total cost of all taxa minus the budget. Parameterizing by $\min(B, \B)$ rather than $B$ alone means the running time is governed by whichever of $B$ and $\B$ is smaller, which is the quantity that is small in practice.
These results also imply that the corresponding unbudgeted problems are `fixed-parameter tractable' in node scanwidth alone, when given an optimal node scanwidth extension.
For the computationally harder third variant, we show that diversity values can be computed in~$\Oh^*(4^\nsw)$ time.
 
Finally, in~\Cref{sec:software} we provide a full implementation of all four algorithms and validate them experimentally.
Our tests confirm that node scanwidth can be computed efficiently on large and realistic networks, and that the resulting diversity implementation improves upon the existing algorithm in~\cite{PaNDA}, even on non-binary networks.

\paragraph*{Related work} 
In recent years, phylogenetic diversity on networks has been studied extensively, with various measures and complexity results established for rooted and semi-directed models~\cite{bordewichNetworks,MAPPD,MAPPDDviability,AVGTree,MaxNPD,AVGTree2,WickeFischer2018}.
While theoretical progress is significant, practical implementations remain scarce.
To our knowledge, the only scalable software packages are \panda~\cite{PaNDA} and an implementation for a PD variant in~\cite{AVGTree} for which we consider lower and upper bounds.
Leveraging bounded scanwidth, \panda can optimize unit-cost instances with hundreds of taxa in a matter of minutes~\cite{PaNDA}, while in~\cite{AVGTree} a bigger parameter, level, is considered.

\section{Preliminaries}
\label{sec:prelims}

Some statements with technical proofs are marked with \appendixstar. Their full proofs are deferred to the appendix; in most cases, a proof sketch is provided in the main text.

For any integer~$n$, we let~$[n] := \{1,\dots,n\}$ and~$[n]_0 := [n] \cup \{0\}$.
We use~$\uplus$ to denote disjoint unions.
We extend functions~$f: A \to B$ and~$g: A \to \mathbb{N}$ to subsets~$A' \subseteq A$ by~$f(A') := \bigcup_{a\in A'} f(a)$ and by~$g_\Sigma(A') := \sum_{a\in A'} g(a)$, where $A$ is an arbitrary set and $B$ a family of sets.
We use the standard~$\Oh$-notation and~$\Oh^*$, in which in the latter, polynomial factors are ignored.

We use standard graph notation~\cite{diestel2025graph}. For a directed graph~$G = (V, E)$, we write~$\indeg(v)$ and~$\outdeg(v)$ for the in-degree and out-degree of~$v \in V$, respectively. 
We write~$E_v^- := \{ uv \in E \mid u \in V \}$ for the set of incoming edges at~$v$ and $V(F) := \{ u, v \mid uv \in F\}$ for the endpoints of an edge set~$F \subseteq E$.
For some~$V' \subseteq V$, we let~$G[V']$ be the subgraph induced by~$V'$.
For a vertex~$v$, we write~$G-v$ for~$G[V\setminus \{v\}]$.

\paragraph*{Phylogenetic networks}
\looseness=-1
A \emph{(rooted, edge-weighted) phylogenetic network}~$\Net = (V, E, \w)$ on a set of taxa~$X$ is a weakly connected, directed acyclic graph (DAG) such that (i) there is a unique vertex with in-degree zero (the \emph{root}) and its out-degree is at least two; (ii) the vertices with out-degree zero (the \emph{leaves}) have in-degree one and they are bijectively labeled by elements of $X$; (iii) all other vertices have degree at least three and in-degree one (\emph{tree-vertices}), or out-degree one (\emph{reticulation vertices}); (iv) $\w: E \to \mathbb{N}_{\geq 0}$ assigns a non-negative \emph{weight} to each edge; see \Cref{fig:example-PD-1}. Since taxa and leaves are in bijection, we refer to them interchangeably.
We write~$X(G)$ for the set of leaves of a directed graph~$G$---not only for networks.
A \emph{phylogenetic tree} is a phylogenetic network without reticulation vertices.
A \emph{chain vertex} is a vertex with in-degree and out-degree~1.
Chain vertices do not occur in phylogenetic networks; however, they may occur after some algorithmic steps defined in this work.

For an edge~$uv$ in a graph~$G$, we say~$u$ is a $G$-\emph{parent} of~$v$ and~$v$ is a $G$-\emph{child} of~$u$.
We drop the prefix $G$, if the graph is clear from the context.
The set of children and parents of~$v$ are~$\children(v)$ and~$\parents(v)$, respectively.
A vertex~$u$ is an \emph{ancestor} of~$v$, written~$u \in \anc(v)$, if~$G$ contains a directed, possibly empty, path from~$u$ to~$v$.
A vertex~$v \in V$ or an edge~$uv \in E$ has \emph{offspring in~$A \subseteq X$} if~$v$ is an ancestor of some vertex in~$A$.

\paragraph*{Phylogenetic diversity}

\begin{figure}[t]
	\centering\hspace{-8ex}
	\includegraphics[width=0.90\linewidth]{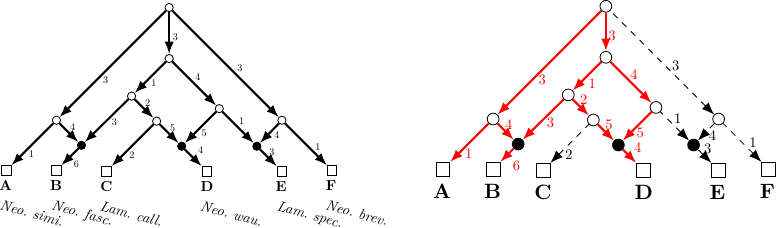}
	\caption{Left: A phylogenetic network~$N$ representing an idealized evolutionary heritage of a set of fish species~\cite{KDS+07}.
		The edge weights are included for illustrative purposes and are not taken from~\cite{KDS+07}.
		Right: A visualization of all paths leading to~$Z = \{A,B,D\}$. The value~$\MAP(Z)$ is~$41$.
	}
	\label{fig:example-PD-1}
\end{figure}

For a phylogenetic tree~$\Tree$ on $X$, the \emph{phylogenetic diversity} of a set of taxa~$A \subseteq X$, denoted by $\PD(A)$, is the total weight of all edges in~$\Tree$ with offspring in~$A$, or equivalently, the total weight of edges on paths from the root to taxa in~$A$. We consider three different extensions of phylogenetic diversity to networks. On trees, all three measures coincide with~$\PD(A)$.

A natural first extension to a phylogenetic network~$N$ on $X$ considers \emph{all} paths from the root to taxa simultaneously. The \emph{all-paths phylogenetic diversity} of~$A \subseteq X$ in~$N$, denoted by~$\MAP (A)$, is the total weight of all edges in~$N$ with offspring in~$A$ (see \Cref{fig:example-PD-1}).

A second family of PD on networks arises from switchings. Formally, letting $R(\Net) \subseteq V$ be the set of reticulations of a phylogenetic network~$\Net = (V, E, \w)$, a \emph{switching}~$\sigma: R(\Net) \to E$ is a function that maps each reticulation vertex to one of its incoming edges (called \emph{reticulation edges}). The \emph{switching tree~$\Tree_\sigma$} of~$\Net$ for a switching~$\sigma$ is the subgraph of~$\Net$ obtained by deleting all reticulation edges except for those in the image of~$\sigma$. Note that~$\Tree_\sigma$ may have chain vertices and unlabeled leaves.
Nevertheless, we extend the definition of phylogenetic diversity naturally to switching trees. 
We denote by $S(\Net)$ the set of all switching trees of~$\Net$. Since different switching trees may yield different diversity values, we obtain an optimistic and a pessimistic diversity variant. For a phylogenetic network~$N$ on $X$ and a set of taxa $A \subseteq X$, we define the \emph{max-tree phylogenetic diversity}~$\MxPD (A) \;:=\; \max_{\Tree_\sigma \in S(\Net)} \PDsub{\Tree_\sigma}(A)$ and the \emph{min-tree phylogenetic diversity}~$\MnPD (A) \;:=\; \min_{\Tree_\sigma \in S(\Net)} \PDsub{\Tree_\sigma}(A)$. See \Cref{fig:example-PD-2}.
These three definitions of phylogenetic diversity were introduced in~\cite{bordewichNetworks,WickeFischer2018}.

To model conservation under limited resources, each taxon~$x \in X$ carries an integer \emph{cost}~$c(x) \in \mathbb{N}_{\ge0}$. A \emph{budget}~$B \in \mathbb{N}_{\geq 0}$ limits the total cost of selected taxa; we set~$\B := c_\Sigma(X) - B$ and note that~$c_\Sigma(A) \leq B$ if and only if $c_\Sigma(X \setminus A) \geq \B$, a symmetry we exploit algorithmically. Optimizing PD over taxon subsets under a budget is the natural goal.

\problemdef{\MAPPDlong~(\bMAPPD)}{
    A phylogenetic network~$\Net = (V, E, \w)$ on~$X$, costs~$c \colon X \to \mathbb{N}_{\ge0}$, and a budget~$B$.
}{
    A set~$A \subseteq X$ with~$c_\Sigma(A) \leq B$ that maximizes $\MAP(A)$.
}

We define the problem \MxTPDlong~(\bMxTPD) analogously, except that the objective is to maximize $\MxPD(A)$.

For~\MnTPD, however, even computing~$\MnPD(A)$ for a fixed set~$A$ is \NP-hard~\cite{bordewichNetworks}. We therefore study the evaluation problem directly. We note that when computing the $\MnPD$-value for a predetermined set of taxa, introducing a budget parameter is unnecessary.

\problemdef{\MnTPDlong~(\MnTPD)}{
    A phylogenetic network~$\Net = (V, E, \w) $ on $X$ and a set~$A \subseteq X$.
}{
    The value~$\MnPD(A)$.
}

We follow the switching tree framework as established in \cite{AVGTree, AVGTree2}, which can be viewed as a reformulation of the display tree model used in \cite{bordewichNetworks}. Unlike display trees, switching trees may contain ``leaves'' not in the set of taxa $X$ (see \Cref{fig:example-PD-2}). However, as paths to leaves that are not in~$X$ do not contribute to any PD score, the two formulations are equivalent.

\begin{figure}[t]
	\centering\hspace{-8ex}
	\includegraphics[width=0.90\linewidth]{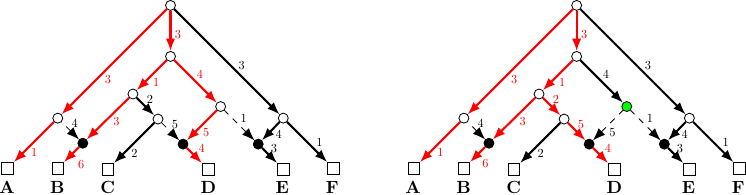}
	\caption{
		Two switching trees of the idealized network in \Cref{fig:example-PD-1} are shown, with the paths leading to~$Z = \{A,B,D\}$ highlighted.
		Left is the switching tree of maximal cost. The value~$\MxPD(Z)$ is~$30$.
		Right is the switching tree of minimal cost. The value~$\MnPD(Z)$ is~$28$.
		Observe that the vertex~\tikzdot{green} is a leaf of the switching tree, but not a taxon.
	}
	\label{fig:example-PD-2}
\end{figure}

\paragraph*{Node scanwidth}
A \emph{tree-extension} of a DAG~$G = (V, E)$ is a rooted tree~$\Gamma$ on~$V$ such that for every edge~$uv \in E$, the vertex~$u$ is an ancestor of~$v$ in~$\Gamma$. For each~$v \in V$, let~$\Gamma_{\leq v}$ denote the subtree of~$\Gamma$ rooted at~$v$. The \emph{node (scanwidth) bag}\footnote{Unlike in e.g. the standard definition of treewidth, the bag is not given explicitly but obtained implicitly.} of~$v$ in~$\Gamma$ is $\GW(v,\Gamma) \;:=\; \parents \bigl(V(\Gamma_{\leq v})\bigr) \setminus V(\Gamma_{\leq v})$, where we consider the $G$-parents of $V(\Gamma_{\leq v})$, and not in~$\Gamma$. We write~$\GW(v)$, if~$\Gamma$ is clear from context. The \emph{node scanwidth} of~$\Gamma$ is~$\nswwithoutN(\Gamma) := \max_{v \in V} |\GW(v,\Gamma)|$, and the \emph{node scanwidth} ($\nswwithoutN_G$) of~$G$, is the smallest node scanwidth of all its tree-extensions. \NSWTE is the associated optimization problem of finding an optimal tree-extension.

The \emph{(edge) scanwidth} ($\swwithoutN$) is defined analogously, except that in $\GW$, the edges between~$V(\Gamma_{\leq v})$ and~$V \setminus V(\Gamma_{\leq v})$ are counted~\cite{berry2020scanning}.
We observe that~$\nswwithoutN_G \leq \swwithoutN_G$, but it is not possible to bound~$\swwithoutN_G$ in any function in~$\nswwithoutN_G$~\cite{BruchholdThesis}.

\paragraph*{Computing dynamic programming algorithms efficiently}
In this section, we introduce a helpful tool that we use throughout the paper and which is a standard technique for dynamic programming algorithms over trees~\cite{bulteau2019parameterized,cygan}, but we formally define it here for completeness.
First, we define~$P(S)$ for a set~$S$ as the set of partitions of~$S$ (with~$t$ blocks, where~$t$ is determined later).
We define~$d(k)$ for an integer~$k$ to be the set of tuples~$(k^{(1)},\dots,k^{(t)})$, where each is an entry in~$[k]_0$, with~$\sum_{i=1}^{t} k^{(i)} = k$.

\begin{restatable}[\appendixstar]{lemma}{lemDP}
\label{lem:DP}
	Let~$\Tree$ be a tree and~$v$ a vertex with children~$w_1,\dots,w_{\outdeg(v)}$.
	Let~$\DP$ be a dynamic programming table indexed by i) the vertices of~\Tree, ii) sets~$S \subseteq M$ for some set~$M$, and iii) integers~$k$.
	For a given vertex~$v$, all values of
	\begin{equation}
		\label{eq:DP}
		\DP[v,S,k] = 
		\max_{(S_{1},\dots,S_{\outdeg(v)}) \in P(S)}
		\;\;\;
		\max_{(k_1,\dots,k_{\outdeg(v)}) \in d(k)}
		\;\;\;
		\sum_{j=1}^{{\outdeg(v)}} \DP[w_j,S_j,k_j]
	\end{equation}
	can be computed in~$\Oh(3^{|M|} \cdot  k^2 \cdot {\outdeg(v)})$~time, assuming that~$\DP[w_j,\cdot,\cdot]$ is known for all~$w_j$.
	The same is also true if the maxima are replaced with minima.
\end{restatable}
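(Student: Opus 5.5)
The plan is the standard sequential merging of children, one at a time, so that every step combines only two tables. Write $d := \outdeg(v)$. For $i \in [d]$ I introduce an auxiliary table
\[
	\DP_i[S,k'] := \max_{(S_1,\dots,S_i) \in P(S)} \; \max_{(k_1,\dots,k_i) \in d(k')} \; \sum_{j=1}^{i} \DP[w_j,S_j,k_j]
\]
for all $S \subseteq M$ and $k' \in [k]_0$. Here $P(S)$ and $d(k')$ are read with $t=i$ blocks, and blocks are allowed to be empty, as in~\eqref{eq:DP}. Then $\DP_1[S,k'] = \DP[w_1,S,k']$ by definition, and $\DP_d[S,k] = \DP[v,S,k]$.

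The key step is the recurrence
\[
	\DP_{i+1}[S,k'] = \max_{S' \subseteq S} \; \max_{k'' \in [k']_0} \; \Bigl( \DP_i[S \setminus S', k' - k''] + \DP[w_{i+1}, S', k''] \Bigr).
\]
I would prove it by a two-sided argument.

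For ``$\le$'', take an optimal $(S_1,\dots,S_{i+1})$ and $(k_1,\dots,k_{i+1})$. Set $S' := S_{i+1}$ and $k'' := k_{i+1}$. The remaining tuples are then a partition of $S\setminus S'$ into $i$ blocks and a decomposition of $k'-k''$ into $i$ parts. Their sum is therefore at most $\DP_i[S\setminus S',k'-k'']$.

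For ``$\ge$'', take optimal witnesses for $\DP_i[S\setminus S',k'-k'']$ and append $S'$ and $k''$. This yields a feasible partition of $S$ and a feasible decomposition of $k'$, because $S'$ is disjoint from $S \setminus S'$ and the integer parts sum correctly.

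Since $\max$ distributes over the sum of independent terms, nothing else is needed. The identical argument works with $\min$, which gives the last sentence of the statement.

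For the running time, each layer $i \to i+1$ ranges over pairs $S' \subseteq S \subseteq M$ and pairs $k'' \le k' \le k$. The number of pairs $S' \subseteq S \subseteq M$ is $3^{|M|}$, since each element of $M$ lies in $S'$, in $S\setminus S'$, or outside $S$. The number of integer pairs is $\Oh(k^2)$. With constant-time table lookups, one layer therefore costs $\Oh(3^{|M|} k^2)$. Over the $d-1$ layers this gives $\Oh(3^{|M|}\cdot k^2\cdot \outdeg(v))$ in total, which is the claimed bound.

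The only delicate point is bookkeeping rather than mathematics. One has to state explicitly that empty blocks are allowed in $P(S)$, otherwise the recurrence and $\DP_1$ would not match~\eqref{eq:DP}. One also has to note that enumerating pairs $S'\subseteq S$ in $\Oh(3^{|M|})$ total time is possible, for instance by iterating over submasks.

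If $k = 0$, the factor $k^2$ should be read as $(k+1)^2$. Equivalently, the bound is understood for $k \ge 1$, which is harmless in the $\Oh$-notation.
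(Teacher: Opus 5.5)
Your proposal is correct and takes the same approach as the paper. Both use an auxiliary table indexed by the number of children processed so far, the same two-table merging recurrence over $S' \subseteq S$ and $k'' \le k'$, and the same count of $3^{|M|}\cdot k^2$ operations per layer. Your explicit two-sided verification of the recurrence and your remarks on empty blocks and the case $k=0$ are more careful than the paper's one-line correctness argument.
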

The point of this lemma is that the natural way to combine child subproblems, by enumerating all partitions of the boundary set $S$ across the children, is far too expensive. The recurrence above instead processes the children one at a time, which reduces the cost of the combination step to $O(3^{|M|})$, the number of pairs $(S', S)$ with $S' \subseteq S \subseteq M$.
The set $M$ can be chosen to be suitable for the algorithm.
In our applications, we will have~$M$ to be (a subset of) the node scanwidth bag of~$v$.

\section{Computing node scanwidth}
\label{sec:compute-nsw}

In this section, we show how to solve \NSWTE, that is, we find an optimal tree-extension together with the node scanwidth for a given DAG~$G$.
We do this in two steps; first we apply reduction rules to identify and remove easy structures in~$G$, and then we show how to solve \NSWTE on the remaining small subgraphs with an algorithm based on~\cite{holtgrefe2026exact}, or with an \ILP.

Throughout this section, let~$G = (V,E)$ be a DAG---not necessarily a phylogenetic network.
As we operate with general DAGs, reticulations in this section are defined as vertices with in-degree at least~2 and no constraint on the out-degree.
This ensures that every non-leaf, non-root vertex is either a reticulation, a tree-vertex, or a chain vertex.
We note that there might be several roots, and leaves may be reticulations.

\subsection{Preliminary observations}
We first collect five basic facts about node scanwidth that we rely on throughout the rest of the paper.

The bag of any vertex contains all its parents.
We conclude the following.
\begin{observation}
	\label{obs:in-deg}
	The node scanwidth of any DAG is at least the maximum in-degree of its vertices.
\end{observation}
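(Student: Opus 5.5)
The statement follows directly from the definition of the node bag, so the plan is to fix an arbitrary tree-extension and exhibit one bag that contains all parents of a chosen vertex. Let $G=(V,E)$ be a DAG, let $\Gamma$ be any tree-extension of $G$, and let $v\in V$ be arbitrary. We will show $\parents(v)\subseteq \GW(v,\Gamma)$. This gives $|\GW(v,\Gamma)|\ge \indeg(v)$ and hence $\nswwithoutN(\Gamma)\ge \max_{v}\indeg(v)$. Since $\Gamma$ was arbitrary, taking the minimum over all tree-extensions yields $\nswwithoutN_G\ge \max_v \indeg(v)$.

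To prove the inclusion, take any $G$-parent $u$ of $v$, so $uv\in E$. First, $v\in V(\Gamma_{\le v})$, hence $u\in \parents(V(\Gamma_{\le v}))$, because parents are taken in $G$ and extended to sets by union. It remains to show $u\notin V(\Gamma_{\le v})$. By the definition of a tree-extension, $u$ is an ancestor of $v$ in $\Gamma$. If $u$ also lay in $\Gamma_{\le v}$, then $v$ would be an ancestor of $u$ in $\Gamma$. In a rooted tree this forces $u=v$, which is impossible because $G$ is acyclic and so has no loops. Therefore $u\in \GW(v,\Gamma)$.

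The only point needing care is the case $u=v$, which is ruled out by acyclicity; there is no real obstacle. Also, distinct parents give distinct elements of the bag, so counting parents exactly gives the in-degree.
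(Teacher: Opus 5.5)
Your proof is correct and follows the paper's own one-line argument. The paper simply notes that the bag of every vertex contains all its parents, which is the inclusion $\parents(v)\subseteq\GW(v,\Gamma)$ you prove; you also spell out why a $G$-parent of $v$ cannot lie in $\Gamma_{\le v}$ (mutual ancestry in a rooted tree forces equality, which acyclicity excludes), a step the paper leaves implicit.
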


For a given DAG~$G$ with tree-extension~$\Gamma$, we observe that~$\Gamma$ is also a tree-extension when removing edges from~$G$.
For each vertex~$v$, identifying~$v$ with its $\Gamma$-parent, or removing $v$ if it is the root of~$\Gamma$, creates a tree-extension of~$G-v$ that does not have larger node scanwidth.
\begin{observation}
	\label{obs:subnet}
	For any DAG~$G$, the node scanwidth does not increase when considering a (not necessarily induced) subgraph.
\end{observation}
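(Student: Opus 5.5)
The plan is to prove the two cases separately: first deletion of edges, then deletion of a single vertex. A general subgraph is obtained from $G$ by a sequence of vertex deletions followed by edge deletions, so induction on the number of operations finishes the proof. In each case I will start from an optimal tree-extension $\Gamma$ of $G$. I will build a tree-extension $\Gamma'$ of the smaller graph $G'$ whose bags are contained in bags of $\Gamma$. That gives $\nswwithoutN_{G'} \le \nswwithoutN(\Gamma') \le \nswwithoutN(\Gamma) = \nswwithoutN_G$.

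\emph{Edge deletion.} Let $G' = (V, E')$ with $E' \subseteq E$. Every edge $uv \in E'$ is also in $E$, so $u$ is a $\Gamma$-ancestor of $v$, and $\Gamma$ is still a tree-extension of $G'$. The subtrees $\Gamma_{\le v}$ are unchanged. Since $G'$-parents are a subset of $G$-parents, $\parents_{G'}(V(\Gamma_{\le v})) \subseteq \parents_G(V(\Gamma_{\le v}))$. Removing the same set $V(\Gamma_{\le v})$ from both sides gives $\GW_{G'}(v,\Gamma) \subseteq \GW_G(v,\Gamma)$.

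\emph{Vertex deletion.} Let $G' = G - x$. If $x$ is not the root of $\Gamma$, let $p$ be its $\Gamma$-parent. Define $\Gamma'$ by deleting $x$ and attaching every $\Gamma$-child of $x$ to $p$; this is the "identification" described before the observation.

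If $x$ is the root of $\Gamma$, I need to handle one subtlety: deleting $x$ may leave a forest with several components. In that case I will first argue that we may assume the root of $\Gamma$ is not $x$. Alternatively, if the root of $\Gamma$ has several children, I will take one child $r$ and hang the other children's subtrees below it at its top. I still need to check that this second option does not enlarge bags; hanging new subtrees under $r$ only adds vertices to $\Gamma_{\le r}$, so I will verify the bag of $r$ directly. If the paper's convention allows a rooted forest, this issue disappears.

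To check $\Gamma'$, ancestor relations among the vertices $V \setminus \{x\}$ are preserved, because contracting $x$ into $p$ keeps every descendant of $x$ below $p$. Hence $\Gamma'$ is a tree-extension of $G'$. For $v \neq x$, the set $V(\Gamma'_{\le v})$ is either $V(\Gamma_{\le v})$ or $V(\Gamma_{\le v}) \setminus \{x\}$; the second case happens exactly when $x$ lies below $v$. Its $G'$-parents are $G$-parents of $V(\Gamma_{\le v})$ other than $x$. Removing $x$ from the subtree cannot add any parent outside it, since $x$ is no longer a vertex of $G'$. So $\GW(v,\Gamma') \subseteq \GW(v,\Gamma) \setminus \{x\}$.

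The main obstacle is the root case of the vertex deletion. The bag-inclusion computation is routine; the real care is needed in keeping $\Gamma'$ a single rooted tree without creating a vertex whose bag exceeds the old maximum width.
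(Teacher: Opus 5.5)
Your proof is correct and follows the paper's argument: keep $\Gamma$ under edge deletions, and for $G-x$ identify $x$ with its $\Gamma$-parent; it is more careful than the paper at the root, where the paper just deletes $x$ and tacitly accepts a possible forest. Of your two fixes there, only re-hanging works in general, because if $x$ is the unique source of $G$ (e.g.\ the root of a phylogenetic network) every tree-extension is rooted at $x$; the check is immediate, since no edge of $G$ joins two different subtrees of the root of $\Gamma$, and after re-hanging $V(\Gamma'_{\le r}) = V\setminus\{x\}$, so $\GW(r,\Gamma')=\emptyset$, while all other subtrees are unchanged and their bags only lose $x$.
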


We continue with an observation that will be particularly useful when computing node scanwidth.
\begin{observation}
	\label{obs:deg2-tree}
	Let~$v$ be a chain vertex with exactly one $G$-child~$w$.
	There is an optimal tree-extension~$\Gamma$ of~$G$ in which~$w$ is the only child of~$v$ in~$\Gamma$ and $\GW(v) \subseteq \GW(w) \cup \parents(v)$ and $\GW(w) \subseteq \GW(v) \cup \parents(w)$.
\end{observation}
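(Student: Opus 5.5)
The plan is to start from an arbitrary optimal tree-extension~$\Gamma$ of~$G$ and modify it locally into a tree-extension~$\Gamma'$ in which~$w$ is the unique $\Gamma'$-child of~$v$. We then check that no bag grows in size, and read off the two inclusions from the new structure. Throughout, let~$p$ be the unique $G$-parent of the chain vertex~$v$. Since~$pv \in E$, the vertex~$p$ is a $\Gamma$-ancestor of~$v$, so~$v$ is not the root of~$\Gamma$. Since~$vw\in E$, the vertex~$v$ is a $\Gamma$-ancestor of~$w$.

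\textbf{Construction.} First contract~$v$ into its $\Gamma$-parent, that is, attach all $\Gamma$-children of~$v$ to the $\Gamma$-parent of~$v$. This preserves the ancestor relation among all vertices other than~$v$. Let~$q$ be the resulting parent of~$w$. Now subdivide the tree edge~$qw$ by~$v$, and call the result~$\Gamma'$. To see that~$\Gamma'$ is a tree-extension, note that edges not incident to~$v$ remain valid because ancestry among the other vertices is unchanged. The edge~$vw$ is valid by construction. For the edge~$pv$: $p$ was a $\Gamma$-ancestor of~$w$ and~$p\neq w$, so~$p$ is~$q$ or an ancestor of~$q$ in~$\Gamma'$, and hence an ancestor of~$v$.

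\textbf{Inclusions.} In~$\Gamma'$ we have~$V(\Gamma'_{\le v}) = V(\Gamma'_{\le w})\cup\{v\}$. Therefore
\[
\GW(v,\Gamma') = \bigl(\parents(V(\Gamma'_{\le w}))\cup\{p\}\bigr)\setminus \bigl(V(\Gamma'_{\le w})\cup\{v\}\bigr)\subseteq \GW(w,\Gamma')\cup\parents(v).
\]
Conversely,~$\GW(w,\Gamma')\subseteq \GW(v,\Gamma')\cup\{v\}$, and~$v\in\parents(w)$, which gives the second inclusion.

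\textbf{Width does not increase.} This is the main step, and it is a case analysis on how the subtree vertex set of each~$u\neq v$ changes. The vertex~$v$ itself is handled by the display above: $\GW(w,\Gamma')$ contains~$v$, and~$\GW(v,\Gamma')$ is obtained from it by replacing~$v$ with at most~$p$, so~$|\GW(v,\Gamma')|\le|\GW(w,\Gamma')|$. For~$u\neq v$ there are three cases.
\begin{itemize}
\item If~$u$ is unrelated to~$v$ in both trees, or is an ancestor of both~$v$ and~$w$ in both, then its subtree set is unchanged, and so is its bag.
\item If~$u$ was a $\Gamma$-ancestor of~$v$ but is not an ancestor of~$w$, then its subtree loses exactly~$v$. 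Since~$w$ is not in the subtree, $v$ is not a parent of any vertex in it, so~$v$ does not enter the bag. The bag can therefore only lose~$p$, and the new bag is a subset of the old one.
\item If~$u$ lies strictly between~$v$ and~$w$ in~$\Gamma$, then its subtree gains exactly~$v$. The old bag contained~$v$, because~$w$ was in the subtree and~$v$ was not. The new bag equals the old bag with~$v$ removed and possibly~$p$ added, so its size does not increase.
\end{itemize}
Finally, bags of vertices in~$\Gamma'_{\le w}$ are unchanged, since their subtrees are the same in both trees. Hence~$\nswwithoutN(\Gamma')\le\nswwithoutN(\Gamma)$, so~$\Gamma'$ is optimal, which proves the statement.
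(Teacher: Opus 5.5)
Your proof is correct and uses the same construction as the paper: remove $v$ from $\Gamma$ by attaching its $\Gamma$-children to its $\Gamma$-parent, reinsert $v$ directly above $w$, and check that only the bags of vertices strictly between $v$ and $w$ change, each trading $v$ for at most $p$. Two small remarks on your case analysis: your second case is vacuous, since every $\Gamma$-ancestor of $v$ is a $\Gamma'$-ancestor of $w$ and hence of $v$; and the vertices in other branches below $v$ fit your first case only if ``unrelated'' is read as ``$v$ lies in its subtree in neither tree''.

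The one difference from the paper is that you skip its preliminary step of hanging $v$'s other $\Gamma$-children below the child leading to $w$. This works in your favour: your contraction already moves those subtrees aside with their bags unchanged, whereas that regrouping step on its own can raise the width. For example, take $G$ with edges $pv,vw,pa$ and $\Gamma$ with edges $pv,va,vw$; hanging $a$ below $w$ turns the bag of $w$ from $\{v\}$ into $\{v,p\}$.
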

\begin{proof}[Proof of \Cref{obs:deg2-tree}]
	Assume first that in a tree-extension~$v$ is directly above~$w$.
	The claims about the bags of~$v$ and~$w$ directly follow.
	
	Let~$\Gamma$ be an optimal tree-extension of~$G$.
	If~$v$ has more than one child, then attach all children that do not have a path to~$w$ to another child of~$v$.
	If~$v$ is directly above~$w$ in~$\Gamma$, then we are done.
	Assume that this is not the case.
	Since there is an arc~$vw$ in~$G$, there is a directed path~$P$ from~$v$ to~$w$ in~$\Gamma$.
	Define~$\Gamma^*$ to be the tree that results from~$\Gamma$ by removing~$v$ (and attaching all $\Gamma$-children of~$v$ to the $\Gamma$-parent of~$v$) and inserting $v$ directly above~$w$.
	We observe that~$\Gamma^*$ is a tree-extension since each directed path that exists in~$\Gamma$ also exists in~$\Gamma^*$ except for the ones from~$v$ to a vertex of~$P$, which do not correspond to edges in~$G$.
	It remains to show that the node scanwidth does not increase with the operation.
	Since~$v$ only has one child~$w$, only the bags of vertices that are between~$v$ and~$w$ are affected, that is, the vertices of~$P$.
	The bags of vertices in~$P$ now contain~$\parents(v)$, but do not contain~$v$ anymore.
	Because~$v$ is a chain vertex, its size does not increase.
\end{proof}

Lastly, we consider the following easy cases.

\begin{observation}
	\label{obs:trees}
	If~$E$ is empty, then any tree with vertices~$V$ is a tree-extension of~$G = (V,E)$ of node scanwidth~0.
	
	If $G$ is a rooted tree, then~$G$ is a tree-extension of $G$ of node scanwidth~$\min(1,|E|)$.
\end{observation}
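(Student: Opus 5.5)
The statement splits into two cases, and I would prove each directly from the definition of a tree-extension and of the bag $\GW(v,\Gamma)=\parents(V(\Gamma_{\leq v}))\setminus V(\Gamma_{\leq v})$.

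\textbf{Case $E=\emptyset$.} The tree-extension condition quantifies over edges of $G$, so it holds vacuously and every rooted tree $\Gamma$ on $V$ is a tree-extension. Moreover, every vertex has an empty parent set in $G$. Hence $\parents(V(\Gamma_{\leq v}))=\emptyset$ for every $v$, so every bag is empty and the node scanwidth of $\Gamma$ is $0$.

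\textbf{Case $G$ a rooted tree.} First, $\Gamma:=G$ is a tree-extension, since for each edge $uv$ the vertex $u$ is the parent, and hence an ancestor, of $v$ in $\Gamma$. Fix a vertex $v$. Then $\Gamma_{\leq v}$ consists of $v$ and all its descendants in $G$.

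Every $w\in V(\Gamma_{\leq v})$ other than $v$ has its unique $G$-parent inside $V(\Gamma_{\leq v})$, since that parent is $v$ or a descendant of $v$. So the only vertex that can contribute a parent outside the subtree is $v$ itself. The subtree contains only descendants of $v$, and by acyclicity the parent of $v$ is not one of them. Therefore:
\begin{itemize}
\item if $v$ is the root, $\GW(v)=\emptyset$;
\item otherwise, $\GW(v)$ is exactly the singleton containing the $G$-parent of $v$.
\end{itemize}
It follows that $\nswwithoutN(\Gamma)=1$ whenever $G$ has a non-root vertex, which is equivalent to $|E|\geq 1$ because $G$ is connected. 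If $G$ is a single vertex, $E=\emptyset$ and $\nswwithoutN(\Gamma)=0$. Both situations are captured by $\min(1,|E|)$.

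\textbf{Optimality.} If one also wants to note that these tree-extensions are optimal, the lower bound comes from \Cref{obs:in-deg}: the maximum in-degree is $1$ when $E\neq\emptyset$. In the empty case the value $0$ is trivially optimal.

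Nothing here is a real obstacle. The only point needing care is the observation that, in a rooted tree, every vertex of the subtree other than $v$ has its parent inside the subtree, so $v$ is the only vertex that can contribute to the bag.
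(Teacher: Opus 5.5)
Your proof is correct and follows the same route as the paper, which only remarks that every bag is empty in an edgeless DAG and that a rooted tree serves as its own tree-extension. You also fill in the detail the paper leaves implicit: in $\Gamma = G$, each bag $\GW(v)$ is exactly the singleton containing the $G$-parent of $v$ (or empty at the root), which yields the value $\min(1,|E|)$.
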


To see that \Cref{obs:trees} is correct, we observe that in an edgeless DAG any tree-extension has an empty bag~$\GW(v)$ for every vertex~$v$ and the tree-extension of a rooted tree is simply the graph itself.

\begin{observation}
	\label{obs:1-blob}
	Let~$G$ be a bi-connected DAG with exactly one reticulation~$r$ and potentially several roots.
	Then, the node scanwidth of~$G$ is~$\indeg(r)$.
	The optimal tree-extension is a path of~$V(G)$ in any topological order.
\end{observation}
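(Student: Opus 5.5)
The lower bound is immediate: by \Cref{obs:in-deg}, every tree-extension of $G$ has node scanwidth at least $\indeg(r)$, since $\GW(r)\supseteq\parents(r)$. So the plan focuses on showing that the path $\Gamma$ given by a topological order $v_1,\dots,v_n$ of $V(G)$ (with $v_1$ at the top of the path) has $|\GW(v_i,\Gamma)|\le \indeg(r)$ for every $i$. The path is a tree-extension because every edge $v_iv_j$ satisfies $i<j$. The subtree $\Gamma_{\le v_i}$ is the suffix $S_i=\{v_i,\dots,v_n\}$. Hence $\GW(v_i)$ is the set of vertices in the prefix $\{v_1,\dots,v_{i-1}\}$ that have a $G$-child in $S_i$.

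The key step is a counting argument over the edge cut $C_i$ from the prefix into $S_i$. Every vertex of $G$ other than $r$ has in-degree at most one. Each vertex of $S_i$ therefore contributes at most one edge to $C_i$, except $r$, which contributes at most $\indeg(r)$. Moreover $|\GW(v_i)|\le|C_i|$. It thus suffices to show that $|C_i|\le\indeg(r)$. I would prove this using bi-connectivity.

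Consider the union of the $|C_i|$ edges crossing into $S_i$ and argue about the underlying undirected structure. $G$ has exactly $|V|-1+(\indeg(r)-1)-(\#\text{roots}-1)$ edges, since every non-root vertex except $r$ has in-degree one. The cycle rank is therefore $\indeg(r)-\#\text{roots}+1$. I would then use that $G$ is bi-connected with a single reticulation to establish the structure: all directed paths from the roots eventually meet only at $r$. Each root's out-paths are trees (arborescences) hanging below it, since no vertex other than $r$ has two parents. Since $G$ is connected and these arborescences can only intersect at $r$, each arborescence must contain $r$, and they are disjoint apart from the subtree below $r$. Bi-connectivity forces $r$ to be a sink or to have only a pendant structure below it. Indeed, a vertex below $r$ would make $r$ a cut vertex unless $r$ is a leaf; and bi-connectivity excludes a child of $r$ unless $G$ is a single edge, which is a trivial case.

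So $G$ consists of arborescences $A_1,\dots,A_k$ rooted at the roots. They pairwise share only $r$, each is a tree containing $r$ as a leaf, and $\indeg(r)$ counts the root-to-$r$ paths over all of them. The main obstacle is then to show that the vertices of the prefix with children in $S_i$ number at most $\indeg(r)$. Bi-connectivity again forces each $A_j$ to consist only of paths to $r$: a branch not leading to $r$ would end at a leaf attached by a bridge. Hence $G$ is a union of $\indeg(r)$ internally disjoint directed paths ending at $r$, possibly sharing initial segments as trees. In such a union of internally vertex-disjoint paths into $r$ (branching allowed only upward-merged, i.e.\ as trees directed toward $r$... each non-$r$ vertex has out-degree $\ge1$ and in-degree $\le1$), every vertex of the prefix that has a child in $S_i$ lies on some root-to-$r$ path, and distinct such vertices lie on distinct of the $\indeg(r)$ edges-into-$r$ ``lanes''. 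This holds because each lane crosses the cut at most once: along a lane, the positions in the topological order increase, and once a vertex is in $S_i$ all its descendants are too. Therefore $|\GW(v_i)|\le\indeg(r)$. I expect pinning down this structural characterization from bi-connectivity to be the delicate part, and would handle the degenerate single-edge case separately via \Cref{obs:trees}.
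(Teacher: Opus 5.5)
Your argument is correct in substance, but it bounds the bags by a different mechanism than the paper. Both proofs rest on the same structural fact, which you do derive from bi-connectivity: $r$ is a sink, and every other vertex has in-degree at most one and is an ancestor of $r$.

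The paper then argues locally. Because every vertex is an ancestor of $r$, every tree-extension is forced to be a path, i.e.\ a topological order. For consecutive vertices $v$ above $w$ one has $\GW(v)\subseteq\GW(w)\cup\parents(v)$ and $v\in\GW(w)\setminus\GW(v)$, since $v$ has a child below it. Hence $|\GW(v)|\le|\GW(w)|$, and the largest bag is $\GW(r)=\parents(r)$.

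You argue globally instead. $G$ is covered by the $\indeg(r)$ root-to-$r$ paths, and each of them crosses the descendant-closed suffix $S_i$ at most once. So the edge cut $C_i$, and hence $\GW(v_i)$, has size at most $\indeg(r)$.

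The paper's route is shorter and needs no path decomposition or matching. It also shows that \emph{every} tree-extension of such a component is a path and is optimal. Your route bounds the edge cut itself, so it shows at no extra cost that the edge scanwidth of such a component also equals $\indeg(r)$.

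Your structural paragraph contains several false statements. The final argument does not use them, but you should delete them:
\begin{itemize}
\item The cycle rank is $\indeg(r)-k$ for $k$ roots, not $\indeg(r)-k+1$.
\item The root arborescences are not trees with $r$ as a leaf. The edges $\rho a$, $\rho b$, $ar$, $br$ give a bi-connected example with one root and $\indeg(r)=2$.
\item ``Internally disjoint'' contradicts ``sharing initial segments''.
\item Bi-connectivity in fact forces a single root. The descendant sets of distinct roots meet only in $r$, which would make $r$ a cut vertex. So $G-r$ is one out-arborescence whose leaves are all parents of $r$.
\item The single-edge case cannot arise, since $G$ has a reticulation.
\end{itemize}

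Finally, make the injection from $\GW(v_i)$ to lanes explicit. A prefix vertex can lie on several lanes, and two bag vertices can share one; for instance, with edges $\rho a,\rho b,ac,ar,cr,br$ and order $\rho,a,b,c,r$, both $\rho$ and $a$ lie in $\GW(b)$ and on the lane $\rho a r$. So for each $u\in\GW(v_i)$, pick a crossing edge $uw$ and a root-to-$r$ path through it. Such a path exists because $w$ is an ancestor of $r$. Injectivity then follows because each lane crosses the cut only once.
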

Since~$G$ is bi-connected and has only one reticulation, $r$ cannot reach any vertex but can be reached by any other vertex, and thus the tree-extension has to be a path.
Since~$r$ is the only reticulation in~$G$, for each vertex~$v$ with child~$w$ in any tree-extension, we have~$|\GW(v)| \leq |\GW(w)|$, which shows that~$\GW(r)$ is the largest bag.

\subsection{Data reduction rules}
We continue with two polynomial-time reduction rules that reduce the complexity of a DAG~$G$: \Cref{rr:deg2-chain} contracts redundant chains of degree-2 vertices, and \Cref{rr:cut-vertex} splits $G$ at cut vertices into independent subinstances. Together with the base cases above, these reduce any instance to small bi-connected components with no adjacent chain vertices.

\newcommand{\rrChainVertices}[1]{
	\begin{rr}#1
		Let~$v$ and~$w$ be chain vertices and let~$v$ be the parent of~$w$.
		Suppress~$v$.
	\end{rr}
}
\rrChainVertices{\label{rr:deg2-chain}}
\newcommand{\lemChainVertices}[1]{
	\begin{lemma}#1
		\Cref{rr:deg2-chain} is correct and can be applied exhaustively in~$\Oh(|E|)$ time.
		
		Let~$G'$ be the resulting DAG with optimal tree-extension~$\Gamma'$.
		Then, adding~$v$ directly above~$w$ in~$\Gamma'$ creates an optimal tree-extension~$\Gamma^*$ of~$G$ with the same node scanwidth.
	\end{lemma}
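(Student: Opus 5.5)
The plan is to prove two inequalities between $\nswwithoutN_G$ and $\nswwithoutN_{G'}$, and to get the runtime bound from a single pass. Let $u$ be the unique parent of $v$ in $G$. Suppressing $v$ removes $v$ together with the edges $uv$ and $vw$, and adds the edge $uw$. The vertex $w$ is still a chain vertex in $G'$, with unique parent $u$.

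\textbf{Running time.} I would scan the vertices once and mark every chain vertex whose child is also a chain vertex. Each maximal path of such vertices is then contracted so that only its last chain vertex remains. With adjacency lists, suppressing a vertex takes constant time, so the whole pass takes $\Oh(|E|)$ time. Suppression never turns a vertex that is not a chain vertex into one, since all in- and out-degrees of the remaining vertices are preserved. Hence a single pass is exhaustive.

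\textbf{First direction, $\nswwithoutN_{G'} \le \nswwithoutN_G$.} Take an optimal tree-extension $\Gamma$ of $G$. By \Cref{obs:deg2-tree} we may assume that $w$ is the only $\Gamma$-child of $v$. Contract $v$ into $w$, so that the $\Gamma$-parent of $v$ becomes the $\Gamma$-parent of $w$. The result is a tree-extension of $G'$: the new edge $uw$ is respected because $u$ was a $\Gamma$-ancestor of $v$, and hence is now an ancestor of $w$. For any vertex $x \neq v$, the subtree below $x$ is unchanged apart from possibly losing $v$. The only parent relation that changes is that $w$'s parent $v$ is replaced by $u$. So the bag can change only if it contained $v$, and then $v$ is replaced by $u$. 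It is therefore never larger. Alternatively, this direction follows almost directly from \Cref{obs:subnet} applied to a subdivision argument.

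\textbf{Second direction and construction of $\Gamma^*$.} Take the optimal $\Gamma'$ of $G'$ and insert $v$ directly above $w$, with $v$ taking $w$'s former $\Gamma'$-parent. This is a tree-extension of $G$:
\begin{itemize}
\item $u$ is a $\Gamma'$-ancestor of $w$, and $u \neq w$, so $u$ is a $\Gamma^*$-ancestor of $v$;
\item $v$ is the $\Gamma^*$-parent of $w$;
\item all other edges are unchanged.
\end{itemize}
Now compare bags.
\begin{itemize}
\item For $x$ not an ancestor of $v$, nothing changes.
\item For $x = w$, the bag is $\GW_{\Gamma'}(w)$ with $u$ replaced by $v$, since $w$'s only parent is now $v$ and $u$ is not in the subtree. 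The size is the same.
\item For $x = v$, the bag is $\GW_{\Gamma'}(w)$ itself: the subtree now contains $v$, whose parent is $u$, and $v$ is the parent of $w$ inside the subtree. So this bag is no larger.
\item For $x$ a proper ancestor of $v$, the subtree gains $v$. The parents of $v$ and of $w$ are then $u$ in both cases, or lie inside the subtree, so the bag is unchanged.
\end{itemize}
Hence $\nswwithoutN(\Gamma^*) = \nswwithoutN(\Gamma') \le \nswwithoutN_G$, which gives both equality and optimality of $\Gamma^*$.

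\textbf{Main obstacle.} The delicate step is the bag bookkeeping for $x = w$ in the second direction, where $u$ is swapped for $v$. One has to verify that $v$ is indeed outside $V(\Gamma^*_{\le w})$ and that $u$ is not counted twice. The first direction relies on \Cref{obs:deg2-tree} to place $v$ directly above $w$; without that normalization, contracting $v$ into $w$ would not be well-defined.
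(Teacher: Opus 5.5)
\textbf{Gap in the second direction, case $x = w$.} You claim that $\GW(w,\Gamma^*)$ is $\GW(w,\Gamma')$ with $u$ replaced by $v$, because ``$u$ is not in the subtree''. That is not the relevant condition. $u$ leaves the bag only if $w$ is its \emph{only} $G'$-child inside $\Gamma'_{\le w}$. If $u$ has another child in $\Gamma'_{\le w}$, then $u$ stays in the bag and $v$ is added, so the bag grows by one.

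Concretely, let $G$ have edges $uv, vw, ww', uy$. Let $\Gamma'$ be the tree-extension of $G'$ in which $u$ has the single child $w$, and $w$ has children $w'$ and $y$. All its bags have size at most $1$, so it is optimal. Inserting $v$ above $w$ gives $\GW(w,\Gamma^*) = \{u,v\}$, whereas $\nswwithoutN_G = 1$ by \Cref{obs:trees}. So for an arbitrary optimal $\Gamma'$ your $\Gamma^*$ is not optimal, and the statement must be read with a suitably normalized $\Gamma'$.

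A warning sign is that your argument never uses that $w$ has out-degree one. Without that hypothesis the rule is genuinely false. For edges $ru, rp, uv, vw, ua, wa, wb, pb$, where $w$ is a tree-vertex, suppressing $v$ lowers the node scanwidth from $3$ to $2$. So some step of your argument had to break.

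\textbf{The missing idea} is the normalization used in the paper's proof. Apply \Cref{obs:deg2-tree} in $G'$ to the chain vertex $w$ and its unique child $w'$, so that $w'$ is the only $\Gamma'$-child of $w$. Then $V(\Gamma^*_{\le w}) = \{w\} \cup V(\Gamma'_{\le w'})$. The vertices of $\Gamma'_{\le w'}$ have the same parents in $G$ and $G'$, so $\GW(w,\Gamma^*) = \{v\} \cup (\GW(w',\Gamma') \setminus \{w\})$. This has size exactly $|\GW(w',\Gamma')|$, since $w \in \GW(w',\Gamma')$. The paper phrases this as a case split: if $u \notin \GW(w,\Gamma^*)$ the sizes agree, and if $u \in \GW(w,\Gamma^*)$ then $u,w \in \GW(w',\Gamma')$ gives $|\GW(w,\Gamma^*)| \le |\GW(w',\Gamma')|$.

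Your remaining bag cases and the running-time argument are fine. Note that you placed the normalization in the wrong direction. The first direction needs none, since contracting $v$ into its $\Gamma$-parent already works; the second direction cannot do without it. Also, your alternative route via \Cref{obs:subnet} does not apply, because $G'$ is not a subgraph of $G$ (the edge $uw$ is new).
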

}
\lemChainVertices{\label{lem:deg2-chain}}
\begin{proof}
	We first observe that any tree-extension~$\Gamma$ of~$G$ is a tree-extension of~$G'$, after suppressing~$v$ in~$\Gamma$, with the same node scanwidth.
	Thus, $\nswwithoutN_{G'} \le \nswwithoutN_{G}$.
	This shows that~$\Gamma^*$ is an optimal tree-extension.
	
	Let~$\Gamma'$ be any tree-extension of~$G'$.
	Let~$u$ be the $\Gamma'$-parent of~$w$ and, by~\Cref{obs:deg2-tree}, we may assume that~$w'$ is the only $\Gamma'$-child of~$w$.
	Let~$\Gamma^*$ be the tree-extension resulting from subdividing~$uw$ and naming the resulting vertex~$v$, in~$\Gamma^*$.
	Every bag in~$\Gamma^*$ equals the respective bag in~$\Gamma'$, except the bags of~$v$ and~$w$.
	By \Cref{obs:deg2-tree}, we conclude~$\GW(v,\Gamma^*) \subseteq \GW(w,\Gamma^*) \cup \{u\}$ and $\GW(w,\Gamma^*) \subseteq \GW(v,\Gamma^*) \cup \{v\}$ and since~$v \not\in \GW(v)$ we conclude~$|\GW(w,\Gamma^*)| \ge |\GW(v,\Gamma^*)|$.
	It remains to show that the size of~$\GW(w,\Gamma^*)$ equals at most the size of one of the bags of a vertex in~$\Gamma'$.
	We observe that~$w$ has different parents in~$G$ and in~$G'$ and, consequently, $\GW(w,\Gamma^*) \cup \{u\} = \GW(w,\Gamma') \cup \{v\}$.
	If~$u \not\in \GW(w,\Gamma^*)$, then both bags have the same size and we are done.
	Otherwise, if~$u \in \GW(w,\Gamma^*)$, then~$u \in \GW({w'},\Gamma^*) = \GW({w'},\Gamma')$, since~$uw \not\in E(G)$.
	By \Cref{obs:deg2-tree}, $\GW({w'},\Gamma') \supseteq \GW({w},\Gamma')$ and also~$w \in \GW({w'},\Gamma')$.
	We conclude
	$|\GW({w},\Gamma^*)| \le |\GW({w},\Gamma')| + 1 \le |\GW({w'},\Gamma')|$.
	Thus, the node scanwidth of~$\Gamma^*$ is at most the node scanwidth of~$\Gamma'$.
	
	We iterate over~$E$ twice; to compute all chain vertices and to find connected chain vertices, each.
	Any application takes constant time, so we apply \Cref{rr:deg2-chain} exhaustively in linear time.
\end{proof}
We note that, unlike in the computation of scanwidth, \Cref{rr:deg2-chain} indeed requires two chain vertices to be connected by an edge.
We give an illustrative example in \Cref{fig:dangerous-chain-removal}.
\begin{figure}[t]
	\centering
    \includegraphics{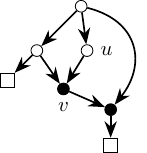}
	\caption{The node scanwidth of this example DAG is~3, as~$|\GW(v)| = 3$ in any tree-extension, but if \Cref{rr:deg2-chain} were applied to~$u$, then the node scanwidth would decrease to~$2$.}
	\label{fig:dangerous-chain-removal}
\end{figure}

\begin{rr}
	\label{rr:cut-vertex}
	We apply this reduction rule only to connected DAGs~$G$ with a single root.
	Let $v$ be a vertex of $G$. If~$v$ is a cut vertex, that is, $G - v$ has connected components~$C_1,\dots,C_\ell$,
	then solve \NSWTE independently on~$G_i := G[V(C_i) \cup \{v\}]$ for each~$i\in [\ell]$.
\end{rr}
\begin{lemma}
	\label{lem:cut-vertex}
	\Cref{rr:cut-vertex} is correct and can be applied exhaustively in~$\Oh(|E|)$~time.
	
	Let~$\Gamma_i$ be an optimal tree-extension of~$G_i$, for~$i \in [\ell]$.
	An optimal tree-extension of~$G$ is~$\Gamma := (V, \bigcup_{i\in [\ell]} E(\Gamma_i))$ with its node scanwidth being the maximum node scanwidth of any~$\Gamma_i$, for~$i\in [\ell]$.
\end{lemma}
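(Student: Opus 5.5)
The plan is to show that the glued object $\Gamma$ is a rooted tree, that it is a tree-extension of $G$, and that every bag of $\Gamma$ coincides with the corresponding bag in some $\Gamma_i$. Optimality then follows from \Cref{obs:subnet}. Each $G_i$ is a subgraph of $G$, so $\nswwithoutN_{G}\ge \max_i \nswwithoutN_{G_i}$. The matching upper bound comes from $\Gamma$ itself.

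\emph{Step 1: $\Gamma$ is a rooted tree.} Let $\rho$ be the unique root of $G$.
\begin{itemize}
\item If $\rho\neq v$, then $\rho$ lies in exactly one component, say $C_{i_0}$. For $j\neq i_0$, every vertex of $C_j$ is reached from $\rho$ by a directed path, and this path must enter $C_j$ through $v$. Hence $v$ is the unique root of $G_j$ and an ancestor of all of $V(G_j)$. Therefore every tree-extension $\Gamma_j$ is rooted at $v$.
\item If $\rho=v$, the same argument applies to every $j$.
\end{itemize}
So $\Gamma$ is obtained by hanging the trees $\Gamma_j$ ($j\neq i_0$) at the occurrence of $v$ in $\Gamma_{i_0}$. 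The vertex sets of the $\Gamma_i$ pairwise share only $v$. It follows that $\Gamma$ is a tree on $V$ rooted at $\rho$, and that ancestry inside each $\Gamma_i$ is preserved in $\Gamma$.

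\emph{Step 2: $\Gamma$ is a tree-extension of $G$.} Every edge of $G$ has both endpoints in some $V(C_i)\cup\{v\}$, because $v$ separates the components. Hence every edge lies in some $G_i$, and its tail is a $\Gamma_i$-ancestor, and thus a $\Gamma$-ancestor, of its head.

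\emph{Step 3: bags are preserved.} This is the main step. The key fact is that all $G$-parents of vertices in $C_j$ lie in $V(C_j)\cup\{v\}$.
\begin{itemize}
\item For $u\in C_j$ with $j\neq i_0$, we have $\Gamma_{\le u}=(\Gamma_j)_{\le u}$. Since the parents are the same in $G$ and in $G_j$, $\GW(u,\Gamma)=\GW(u,\Gamma_j)$.
\item For $u\in C_{i_0}\cup\{v\}$ that is not a $\Gamma_{i_0}$-ancestor of $v$, the subtree is unchanged, so the bag is unchanged.
\item For $u$ that is an ancestor of $v$ (including $v$ itself), $\Gamma_{\le u}$ additionally contains all of $V(C_j)$ for $j\neq i_0$. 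The parents of these vertices are in $V(C_j)\cup\{v\}\subseteq V(\Gamma_{\le u})$, so they contribute nothing new. Hence $\GW(u,\Gamma)=\GW(u,\Gamma_{i_0})$.
\end{itemize}
The bag of $v$ is thus read off in $\Gamma_{i_0}$. Since $v$ is the root of every other $\Gamma_j$, its bag there is empty, so there is no conflict between the different copies of $v$. Consequently $\nswwithoutN(\Gamma)=\max_i \nswwithoutN(\Gamma_i)$, which matches the lower bound from \Cref{obs:subnet}.

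\emph{Step 4: running time.} All cut vertices and the blocks can be found in $\Oh(|E|)$ time with the standard DFS (Tarjan) algorithm. Applying the rule exhaustively amounts to decomposing $G$ along its block--cut tree. By Step 1, each $G_i$ is again connected with a single root, so the rule can be applied recursively to it. The subinstances overlap only in cut vertices, so their total size is $\Oh(|V|+|E|)$. Gluing the resulting tree-extensions back together is also linear.
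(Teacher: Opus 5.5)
Your proof is correct and takes the same route as the paper. You glue the $\Gamma_i$ at $v$, which must be the root of every $\Gamma_j$ except the one containing the root of $G$; you use the absence of edges between distinct components to get a tree-extension; you show that every bag coincides with the corresponding bag in some $\Gamma_i$; and you find the cut vertices in linear time. You are in fact more careful than the paper in three places: you justify rootedness via reachability from $v$, you make the lower bound $\nswwithoutN_G \ge \max_i \nswwithoutN_{G_i}$ explicit through \Cref{obs:subnet}, and you bound the total cost of exhaustive application via the block--cut tree.
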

\begin{proof}
	We first observe that~$\Gamma$ is actually a tree.
	We observe that the pairwise intersection of the vertex sets~$V(G_i)$ is always~$\{v\}$.
	Since~$G$ is connected and has only one root, there is at most one~$G_k$ in which~$v$ has incoming edges.
	Then, $v$ is the root in~$\Gamma_i$ for~$i\neq k$.
	Since there are no edges between vertices in~$C_i$ and~$C_j$ for any tuple~$i,j\in [\ell]$, $i\ne j$, $\Gamma$ is indeed a tree-extension of~$G$.
	The bound of the node scanwidth follows with the observation that~$\GW(v,\Gamma) = \GW(v,\Gamma_k)$ and~$\GW(w,\Gamma) = \GW(w,\Gamma_i)$ for any~$w \in V(C_i)$, $i\in [\ell]$.
	
	Concerning the running time, in $\Oh(m)$ time, we can find all cut-vertices and thus vertices on which the rule can be applied.
	Any application can be done in linear time.
\end{proof}

For some well-structured bi-connected components, computing node scanwidth is easy.
For example, the node scanwidth of a bi-connected component with a single reticulation is the number of incoming edges at the reticulation.
In the appendix, we present \Cref{rr:reticulation-removal}, which reduces the number of reticulations in a certain type of more complicated bi-connected components.
As initial experiments did not yield an improved running time, we only present the reduction rule in the appendix as a source of potential inspiration.

\subsection{Dealing with remaining hard instances}
Since \NSWTE is \NP-hard~\cite{BruchholdThesis}, unless P = \NP, polynomial time reduction rules will not suffice and we adapt a known algorithm for scanwidth~\cite{holtgrefe2026exact}. We deviate by omitting its reduction rules, which are partly incorrect for node scanwidth, and by modifying the computation of the bags~$\GW$ (Lines~7, 9 and 12 in Algorithm~3 of~\cite{holtgrefe2026exact}), counting nodes instead of edges. For self-containment, we sketch the algorithm below.

The algorithm utilizes a dynamic programming table over vertex subsets~$W$, computing the node scanwidth of $G[W]$ by branching into its weakly connected components. This induces a topological ordering of the vertices, yielding a tree-extension. Branches exceeding the node scanwidth threshold are pruned; see~\cite{holtgrefe2026exact} for a correctness proof.

In the appendix, we define an \ILP formulation to solve \NSWTE. This definition is not meant to be a fast algorithm, as \ILP definitions for similar problems like \textsc{Treewidth} are comparatively slower than other algorithms~\cite{dell2017first,koster2001treewidth} and also in our tests, this formulation was worse by magnitudes. The intention of this definition is twofold. Firstly, the---to the best of our knowledge---only definition of an \ILP formulation for the edge scanwidth variant of this problem~\cite{HoltgrefeThesis} has several flaws, such as insufficient criteria for ensuring that the tree-extension is indeed a tree. Secondly, we hope that this \ILP formulation may spark new ideas on how to create faster algorithms in the future.

\section{Algorithms for budgeted phylogenetic diversity on networks}
\label{sec:apply}

\begin{figure}[t]
	\centering
    \includegraphics[width=0.68\textwidth]{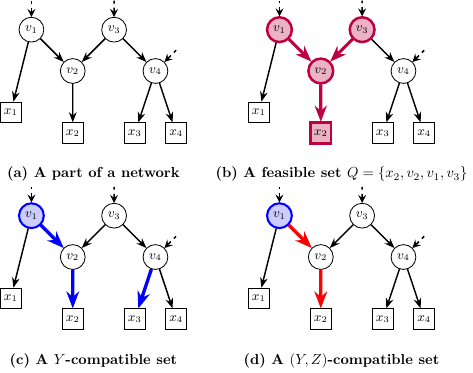}
	\caption{Visualization of the structures defined in the algorithms for a network snippet in (a); (b) shows a feasible set $Q$. For $Y=Z=\{v_1\}$, (c) shows a $Y$-compatible set and (d) shows a $(Y,Z)$-compatible set. We note that the edge $v_4x_3$ is possible in (c), but not in (d), because~$v_4 \not\in Z$.}
	\label{fig:feasible-compatible}
\end{figure}

In this section we give our algorithms for budgeted PD on networks, parameterized by node scanwidth plus the budget.
To situate the results, we first recall what is already achievable through \emph{treewidth}---a measure of tree-likeness smaller than node scanwidth~\cite{BruchholdThesis}.
Applying a meta-theorem given in~\cite{arnborg1991easy}, it follows that \bMAPPD can be solved in~$\Oh(f(\tw) \cdot \min^2(B,\B) \cdot n)$~time, where $f$ is some computable function and \tw is the treewidth of the network.
More recently, it has been shown that the unbudgeted version \MAPPD can be solved in~$\Oh^*(9^\tw)$ when given an optimal tree decomposition~\cite{MAPPD}, a result that can be extended to the budgeted version.
In this section, we consider the larger parameter node scanwidth and justify this selection with a significant improvement in the base of the running time.
More precisely, we conduct the following.

Given a phylogenetic network~\Net and a tree-extension of node scanwidth~$\nsw$, we compute and optimize phylogenetic diversity as follows.
We first show that \bMAPPD and \bMxTPD are optimized in~$\Oh(3^\nsw \cdot \min^2(B,\B) \cdot n)$~time.
The value~$\MnPD(A)$ is computed in~$\Oh(4^\nsw \cdot n)$~time, for any~$A\subseteq X$.
We restrict our focus to these PD variants, as optimizing unweighted \textsc{Network-PD} and \textsc{Average-Switching-Tree-PD} is \NP-hard even for constant edge scanwidth and therefore node scanwidth~\cite{MaxNPD,AVGTree2}.
These algorithms utilize structural definitions introduced inline, with visual examples provided in \Cref{fig:feasible-compatible}.

\subsection{Optimizing \bMAPPD and \bMxTPD}
\label{sec:algorithms-mappd-mxtpd}

We recall that~$B$ is the budget and that~\B is the total cost of all taxa minus~$B$.
The following algorithms can analogously be applied to the versions of the respective problems in which costs are uniform.
In these cases, $B$ and~\B are bounded by the instance size, such that our results imply that \MAPPD and \MxTPD are fixed-parameter tractable with respect to node scanwidth when given an optimal tree
extension.
In the case of unit costs, \B is the number of taxa that go extinct, arguably a more intuitive parameter than in the heterogeneous cost case.
Let~$\Instance = (\Net=(V,E,\w),c,B)$ be an instance of \bMAPPD or \bMxTPD, respectively.

\begin{theorem}
	\label{thm:MAPPD+MaxTree}
	Given a tree-extension with node scanwidth~$\nsw$, an optimal solution for \bMAPPD and \bMxTPD can be found in~$\Oh(3^\nsw \cdot \min^2(B,\B) \cdot n)$~time.
\end{theorem}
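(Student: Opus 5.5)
We run a bottom-up dynamic program over the given tree-extension~$\Gamma$. The combination step is \Cref{lem:DP}, with~$M$ a subset of the node scanwidth bag. For a vertex~$v$ of~$\Gamma$, write~$V_{\le v} := V(\Gamma_{\le v})$.

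\textbf{The table and its interface.} Since~$\Gamma$ is a tree-extension, every edge of~$\Net$ entering~$V_{\le v}$ from outside has its tail in~$\GW(v)$. Consequently, the only information the part of a solution inside~$V_{\le v}$ needs to expose to the outside is which vertices of~$\GW(v)$ are \emph{required} to have offspring in~$A$ through edges into~$V_{\le v}$.

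For \bMAPPD, we define~$\DP[v,Y,k]$ for~$Y \subseteq \GW(v)$ and~$k \le B$. It is the maximum, over sets~$A' \subseteq X \cap V_{\le v}$ with~$c_\Sigma(A') = k$ (or~$\le k$), of the total weight of
\begin{itemize}
\item edges inside~$V_{\le v}$ with offspring in~$A'$, and
\item edges~$uw$ with~$u \in \GW(v)$ and~$w \in V_{\le v}$ that have offspring in~$A'$,
\end{itemize}
subject to the constraint that exactly the parents in~$Y$ have an edge into~$V_{\le v}$ with offspring in~$A'$. This is the ``$Y$-compatible'' notion of \Cref{fig:feasible-compatible}.

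For \bMxTPD, we additionally fix a switching on the reticulations inside~$V_{\le v}$. Here we use the $(Y,Z)$-refinement: the weight of an edge~$uw$ is counted only if it is a chosen reticulation edge (or a tree edge) and has offspring. A vertex~$u$ in the bag is reached only through chosen edges. The refinement records which bag vertices receive a selected edge. I expect the set~$Z$ to be determined by~$Y$ up to choices made locally at~$v$, so that the table again has~$2^{|\GW(v)|}$ set-indices per~$k$.

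\textbf{Transition at~$v$.} Let~$w_1,\dots,w_t$ be the $\Gamma$-children of~$v$. The sets~$V_{\le w_j}$ are disjoint, and no~$\Net$-edges run between different children's subtrees, because any such edge would force an ancestor relation in~$\Gamma$. Hence all edges into~$V_{\le w_j}$ come from~$\GW(w_j) \subseteq \GW(v) \cup \{v\}$.

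We first guess whether~$v$ has offspring in~$A$ and call this bit~$b$:
\begin{itemize}
\item If~$v$ is a leaf, then~$b$ is whether~$v \in A$; this adds cost~$c(v)$ and sets~$k$ accordingly.
\item Otherwise, $b = 1$ requires that some child-subtree edge out of~$v$ carries offspring.
\end{itemize}
The children's required sets then must satisfy~$\bigcup_j Y_j \setminus\{v\} = Y' $, where~$Y'$ is the part of~$Y$ not explained by edges leaving~$v$ itself. If~$b=1$, we add the weights of edges from~$\parents(v) \cap \GW(v)$ into~$v$ and put those parents into~$Y$.

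The ``union equals~$Y'$'' combination is a covering rather than a partition. We convert it to the partition form of \Cref{lem:DP} in one of two ways. The first is to assign each required parent to one designated child, which is valid because counting a parent's edges is already done child-side and the maximization is monotone. The second is to first take a subset-closure, replacing~$\DP[w_j,\cdot,\cdot]$ by its maximum over subsets. Each merge of two children then ranges over pairs~$S' \subseteq S \subseteq \GW(v)$ together with a knapsack split of~$k$, giving~$\Oh(3^{\nsw}\cdot B^2)$ per child.

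Summing over all vertices yields~$\Oh(3^\nsw B^2 n)$. The answer is read off at the root with~$Y=\emptyset$.

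\textbf{The complementary budget~$\B$.} To obtain~$\min(B,\B)$ in place of~$B$, we index instead by the cost of the \emph{discarded} taxa~$X\setminus A$, capped at~$\B$, using~$c_\Sigma(A)\le B \iff c_\Sigma(X\setminus A)\ge\B$. Values above~$\B$ are truncated to~$\B$, so the index range is~$[\B]_0$ and the same merge bound applies. We run whichever version has the smaller range.

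\textbf{Main obstacle.} The hard part is the correctness of the switching-tree variant. We must ensure that each reticulation picks exactly one incoming edge, and that the edge weights counted are those of root paths in the resulting tree~$\Tree_\sigma$. The difficulty is that the choice at a reticulation~$r$ is made when~$r$ is processed, but its parents lie in~$\GW(r)$ and are resolved higher up.

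I would first try to encode the chosen parent of~$r$ into the requirement set~$Y$. The obligation that the chosen parent must have offspring only if~$r$ does would then propagate exactly like the all-paths case. I would then verify, by induction on~$\Gamma$, that the DP value equals the maximum over partial switchings and taxon sets of the induced tree-PD restricted to~$V_{\le v}$.
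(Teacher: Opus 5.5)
Your \bMAPPD{} part is essentially the paper's algorithm: tables over $Y\subseteq\GW(v)$ and a budget index in $[\min(B,\B)]_0$, children merged by \Cref{lem:DP}, and the complementary index via $c_\Sigma(X\setminus A)\ge\B$. Your ``designated child'' conversion is the paper's obligation reading of $Y$: each $y\in Y$ must have a selected child inside $\Gamma_{\le v}$. Only a closure over \emph{supersets} is sound here. Taking $\max_{S'\subseteq S}$ would let the recurrence pay $\w_\Sigma(E_v^-)$ without any child certifying that $v$ has offspring. With obligations, a candidate in the branch ``$v$ has no offspring'' may undercount, because a child may still give $v$ offspring; that is harmless for a maximum. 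The paper handles this cleanly by defining the tables over feasible sets $Z$, for which $\w_\Sigma(Z)\le\MAP(X\cap Z)$.

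The genuine gap is \bMxTPD, which you leave open as the ``main obstacle''. The $(Y,Z)$-refinement you reach for first is machinery the paper needs only for \MnTPD. Indexing by pairs $Y\subseteq Z\subseteq\GW(v)$ gives $3^{|\GW(v)|}$ states and, via $\sum_{Z\subseteq\GW(v)}3^{|Z|}=4^{|\GW(v)|}$, a $4^{\nsw}$ factor in the merge. So your $3^{\nsw}$ bound rests on an unsupported expectation that $Z$ is determined by $Y$.

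The missing idea is that a maximum never needs a full switching to be certified. The paper optimizes over $Y$-compatible edge sets $F\subseteq E(v)$: $F$ contains an outgoing edge of every $y\in Y$, and $\Net[F]$ is a directed forest all of whose leaves lie in $X$. Such an $F$ has at most one incoming edge at every reticulation. Let $\sigma(r)$ be that edge (arbitrary otherwise); then $F\subseteq E(\Tree_\sigma)$, and every edge of $F$ has offspring in $V(F)\cap X$ within $\Tree_\sigma$. Hence $\w_\Sigma(F)\le\MxPD(V(F)\cap X)$ by non-negativity of $\w$, even when $F$ is not connected to $\rho$. Conversely, the offspring-carrying edges of an optimal switching tree form such an $F$.

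With this relaxation, your ``first try'' becomes Recurrence~(\ref{eq:bMxTPD-rec}), proved by the same induction. It is the all-paths recurrence with $\w_\Sigma(E_v^-)$ replaced by $\max_{p_v\in\parents(v)}\w(p_v v)$, and $\parents(v)$ replaced by $\{p_v\}$ in the update of $Y$. Each vertex picks its single incoming edge when it is processed, and the chosen parent learns only through $Y$ that it has an outgoing edge. Neither ``exactly one edge per reticulation'' nor the connection to $\rho$ has to be tracked, since undercounting cannot hurt a maximum. This is also why \MnTPD{} does need $Z$: there, dropping edges would lower the minimum, so $Z$ is used to force every component to connect to $\rho$.
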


Before we start defining the algorithms, we need some further notation.
Let a tree-extension~$\Gamma$ of~\Net be given.
Recall that~$\Gamma_{\le v}$, for each~$v\in V$, is the subtree of~$\Gamma$ rooted at~$v$ and that~$\GW(v)$ are the vertices in~$\parents(V(\Gamma_{\le v}))$ that are not in~$\Gamma_{\le v}$.

We define the algorithms for \bMAPPD and for \bMxTPD after each other, as they are largely similar but have some differences.
For each of these problems, we define a dynamic programming algorithm with tables~$\DP_1[v,Y,b]$ and~$\DP_2[v,Y,b]$, for dimensions~$v\in V(\Net)$, $Y\subseteq \GW(v)$, and~$b\in [\min(B,\B)]_0$.
In practice, if~$B \leq \B$, then we only need table~$\DP_1$ and otherwise we only need~$\DP_2$, but here, we define both.
For brevity, we write~$\DP_i'[v,Y,b] = \max_{(Y_1,\dots,Y_{\outdeg(v)}) \in P(Y)} \max_{(b_1,\dots,b_{\outdeg(v)}) \in d(b)}  \sum_{j=1}^{{\outdeg(v)}} \DP_i[w_j,Y_j,b_j]$ in the spirit of \Cref{lem:DP}, which we use in this form in the next two algorithms.
	
\paragraph*{Algorithm for optimizing \bMAPPD (\SbMAPPD)}
To quantify table entries, we need the following definition.
A set~$Z\subseteq V$ is \emph{feasible} if each~$z\in Z$ is a leaf or has at least one \Net-child in~$Z$.
(See \Cref{fig:feasible-compatible} for an example.)
We write~$\w_\Sigma(Z)$ for~$\sum_{v\in Z} \w_\Sigma(E_v^-)$---the total weight of incoming edges at~$Z$.
We store in~$\DP_1[v,Y,b]$ the maximum value~$\w_\Sigma(Z)$ of any set~$Z \subseteq V(\Gamma_{\le v})$ such that~$Y\cup Z$ is feasible and~$c_\Sigma(Z\cap X) \le b$.
Analogously, in~$\DP_2[v,Y,b]$ we store the maximum value~$\w_\Sigma(Z)$ of any set~$Z \subseteq V(\Gamma_{\le v})$ such that~$Y\cup Z$ is feasible and~$c_\Sigma(X(\Gamma_{\le v}) \setminus Z) \ge b$.
The latter condition captures that all taxa that are not in~$Z$ have total cost of at least~$b$ and relies on the observation that~$c_\Sigma(X') \leq B$ and~$c_\Sigma(X\setminus X') \geq \B$ are equivalent.

Leaves~$x\in X$ have exactly one parent~$v_x$.
Observe that~$\GW(x) = \{v_x\}$.
We store
\begin{align}
	\label{eq:bMAPPD-base-1}
	\DP_1[x,Y,b] =~&
	\begin{cases}
		\w(v_x x) & \text{if~$b\ge c(x)$}\\
		-\infty & \text{if~$b<c(x)$ and~$Y = \{v_x\}$}\\
		0 & \text{if~$b<c(x)$ and~$Y = \emptyset$}
	\end{cases}\\
	\label{eq:bMAPPD-base-2}
	\DP_2[x,Y,b] =~&
	\begin{cases}
		\w(v_x x) & \text{if~$b = 0$,}\\
		-\infty & \text{if~$b > 0$ and~$Y = \{v_x\}$, or if~$b > c(x)$,}\\
		0 & \text{if~$0 < b\le c(x)$ and~$Y = \emptyset$.}
	\end{cases}
\end{align}

Now, let~$v$ be an internal node of \Net, either a tree-vertex or a reticulation, with incoming edges~$E_v^-$ and let~$w_1,\dots,w_{\outdeg(v)}$ be the $\Gamma$-children of~$v$.
To compute further values, we set
\begin{align}
	\label{eq:bMAPPD-rec}
	\DP_i[v,Y,b] =~&
		\max\left\{
			\DP_i'[v,Y,b];
			\w_\Sigma(E_v^-) + \DP_i'[v,(Y \setminus \parents(v)) \cup \{v\},b]
		\right\}.
\end{align}

Let~$\rho$ be the root of~\Net.
If~$B \le \B$ then the optimal $\MAP$-value is stored in~$\DP_1[\rho,\emptyset,B]$ and otherwise in~$\DP_2[\rho,\emptyset,\B]$.
Backtracking allows finding the set which maximizes~$\MAP$.

\begin{restatable}[\appendixstar]{lemma}{lemMAPPDtable}
\label{lem:MAPPD-table}
    $\DP_1[v,Y,b]$ stores the maximum value~$\w_\Sigma(Z)$ of any set~$Z \subseteq V(\Gamma_{\le v})$ such that~$Y\cup Z$ is feasible and~$c_\Sigma(Z\cap X) \le b$.
	$\DP_2[v,Y,b]$ stores the maximum value~$\w_\Sigma(Z)$ of any set~$Z \subseteq V(\Gamma_{\le v})$ such that~$Y\cup Z$ is feasible and~$c_\Sigma(X(\Gamma_{\le v}) \setminus Z) \ge b$.
\end{restatable}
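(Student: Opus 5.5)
We prove both claims by induction over $\Gamma$ from the leaves upward, i.e.\ by structural induction on $\Gamma_{\le v}$. For each $v$ we establish two inequalities. First, every table value is realized by some set $Z$ satisfying the stated conditions (soundness). Second, every such $Z$ is dominated by the table value (completeness). Throughout we keep the convention that an empty maximum is $-\infty$. The arguments for $\DP_1$ and $\DP_2$ are identical except for the budget bookkeeping, so we carry out $\DP_1$ and only note the changes for $\DP_2$.

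\textbf{Base case.} For a leaf $x$ we have $V(\Gamma_{\le x})=\{x\}$, so $Z\in\{\emptyset,\{x\}\}$.
\begin{itemize}
\item The set $Z=\{x\}$ is always feasible together with $Y$, since $x$ is a leaf. It has value $\w(v_xx)$ and cost $c(x)$.
\item The set $Z=\emptyset$ makes $Y\cup Z$ feasible exactly when $Y=\emptyset$, because $v_x$ would need a child in the set and $x$ is its only child inside $\Gamma_{\le x}$.
\end{itemize}
Comparing these two candidates with \eqref{eq:bMAPPD-base-1} and \eqref{eq:bMAPPD-base-2} settles the base case. For $\DP_2$ the uncovered cost is $c(x)$ if $Z=\emptyset$ and $0$ otherwise. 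Choosing $\{x\}$ whenever $b=0$ is optimal, since its value is nonnegative.

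\textbf{Induction step.} Let $v$ be internal with $\Gamma$-children $w_1,\dots,w_{\outdeg(v)}$. Then $V(\Gamma_{\le v})=\{v\}\uplus\biguplus_j V(\Gamma_{\le w_j})$. Given $Z\subseteq V(\Gamma_{\le v})$, split it as $Z_0=Z\cap\{v\}$ and $Z_j=Z\cap V(\Gamma_{\le w_j})$.

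The key structural fact is the following. Every $\Net$-child of a vertex $u\in V(\Gamma_{\le w_j})$ lies in $\Gamma_{\le u}\subseteq\Gamma_{\le w_j}$, because $\Gamma$ is a tree-extension. Hence feasibility of $Y\cup Z$ decomposes into three parts:
\begin{itemize}
\item each $z\in Z_j$ needs a child in $Z_j$;
\item $v$, if chosen, needs a child in $\bigcup_j Z_j$;
\item each $y\in Y\setminus\{v\}$ needs a child in $Z$.
\end{itemize}

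We then let $Y' = Y$ if $v\notin Z$, and $Y'=(Y\setminus\parents(v))\cup\{v\}$ if $v\in Z$. Each $y\in\parents(v)$ is satisfied by $v$, and $v$ itself becomes a demand pushed down to the children. Each demanding vertex $y\in Y'$ is assigned to one child subtree $w_j$ containing a child of $y$ in $Z_j$; this yields a partition $(Y_1,\dots,Y_{\outdeg(v)})$ of $Y'$. We must check that $Y_j\subseteq \GW(w_j)$. Here $y$ has a child in $\Gamma_{\le w_j}$, and $y\notin\Gamma_{\le w_j}$ because $y\in \GW(v)\cup\{v\}$ lies outside $\Gamma_{\le v}$ or equals $v$. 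This check is the point where the definition of the node bags is used, and I expect it to be the main obstacle: one must also argue that $Y'\subseteq \GW(v)\cup\{v\}$ in the second case. That holds because $\parents(v)\subseteq\GW(v)$ by \Cref{obs:in-deg}'s reasoning, so that the recursion only ever indexes legal table entries.

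The budget splits as $b_j=c_\Sigma(Z_j\cap X)$, padded up to sum to $b$, which is allowed since the $\DP_1$ entries are monotone in $b$. The value is additive: $\w_\Sigma(Z)=[v\in Z]\,\w_\Sigma(E_v^-)+\sum_j\w_\Sigma(Z_j)$. Applying the induction hypothesis to each $(w_j,Y_j,b_j)$ then gives completeness.

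Conversely, for soundness, take any choice of partition, budget split and optimal sets $Z_j$ from the children, and add $v$ in the second branch of \eqref{eq:bMAPPD-rec}. The resulting union is disjoint, meets the budget, and is feasible with $Y$ by the decomposition above. This shows the recurrence attains exactly the claimed maximum.

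For $\DP_2$ the same argument applies, using that $X(\Gamma_{\le v})=\biguplus_j X(\Gamma_{\le w_j})$ since $v$ is not a leaf. The uncovered costs add up, and the condition "at least $b$" is monotone in the opposite direction, so the split $b_j\le c_\Sigma(X(\Gamma_{\le w_j})\setminus Z_j)$ with $\sum_j b_j=b$ exists exactly when the total uncovered cost is at least $b$.
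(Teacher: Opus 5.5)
\textbf{Verdict.} Your argument follows the paper's approach: induction over $\Gamma$, a case split on whether $v\in Z$, setting $Y'=(Y\setminus\parents(v))\cup\{v\}$, and assigning each demanding vertex to one child subtree. But the decomposition step it rests on does not follow from the definition, and the paper's proof has the same gap.

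\textbf{The gap.} You assert that feasibility of $Y\cup Z$ means every $y\in Y$ has an $N$-child in $Z$. Literally, it only requires a child in $Y\cup Z$, and a bag can contain an edge of $N$. Here is a counterexample.
Take the network with edges $\rho a,\rho u,\rho r_1,ar_2,ax_1,ur_1,ur_2,r_1x_2,r_2x_3$. Take the tree-extension $\Gamma$ given by the path $\rho,a,u$, where $r_1,r_2,x_1$ are the $\Gamma$-children of $u$, $x_2$ lies below $r_1$, and $x_3$ lies below $r_2$.
Then $\GW(u)=\{\rho,a\}$, although $\rho a\in E$. Let $c(x_1)=1$, $c(x_2)=c(x_3)=10$, $Y=\{\rho,a\}$ and $b=1$.
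The set $Z=\{x_1\}$ makes $Y\cup Z$ feasible, because $\rho$ is satisfied by $a\in Y$. Yet \Recc{eq:bMAPPD-rec} gives $\DP_1[u,Y,1]=-\infty$. The reason is that $\rho$ can only be discharged by selecting $u$ or $r_1$, and either choice forces $x_2$ or $x_3$ into $Z$, which exceeds the budget.
So the lemma is false as literally stated. Your completeness direction breaks exactly here: $\rho$ has a child in no $Z_j$, so your assignment of $Y'$ to child subtrees is not a partition. The paper's sets $Y_i$ fail to partition $Y$ for the same reason.

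\textbf{The fix.} Define the table entries by the condition ``$Z$ is feasible and every $y\in Y$ has an $N$-child in $Z$''. This is what the recurrence actually computes. It agrees with the stated condition when $Y=\emptyset$, so \Cref{lem:MAPPD-correct} is unaffected. With this reading your proof goes through as written.

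\textbf{Where you improve on the paper.} Under the corrected reading your proof is slightly more careful than the paper's in two places. First, you pad the $b_j$ so that they sum to $b$; the paper's $b_i=c_\Sigma(X\cap Z_i)$ need not lie in $d(b)$. Second, you verify $Y_j\subseteq\GW(w_j)$, which the paper does not check.
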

\begin{proof}[Proof Sketch]
	The proof proceeds by induction over the tree-extension~$\Gamma$, with the leaf recurrences forming the base case.
	For an internal vertex~$v$, \Recc{eq:bMAPPD-rec} distinguishes whether or not~$v$ belongs to the feasible set.
	If~$v$ is excluded, the boundary requirements in~$Y$ and the available budget are distributed among the $\Gamma$-child subtrees, and the induction hypothesis yields feasible child solutions whose disjoint union has the claimed value.
	If~$v$ is included, the weight~$\omega_\Sigma(E_v^-)$ of its incoming edges is added, the parents of~$v$ are removed from the remembered set because they obtain~$v$ as a child, and~$v$ is passed to one child subproblem to ensure that~$v$ itself has a selected child.
	Consequently, every value produced by the recurrence corresponds to a feasible set satisfying the relevant cost bound.
	Conversely, every feasible set can be split among the $\Gamma$-child subtrees, and, depending on whether it contains~$v$, one of the two terms of \Recc{eq:bMAPPD-rec} attains at least its value.
	The same argument applies to~$\DP_2$ after replacing the upper bound on the cost of selected taxa with the lower bound on the cost of omitted taxa.
\end{proof}

\begin{lemma}
	\label{lem:MAPPD-correct}
	\SbMAPPD is correct.
\end{lemma}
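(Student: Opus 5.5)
We prove \Cref{lem:MAPPD-correct} by reducing it to \Cref{lem:MAPPD-table}, which we take as given. What remains is to show that the quantity stored at the root equals the optimum of \bMAPPD. The key step is a correspondence between budget-respecting taxon sets $A$ and feasible vertex sets $Z$. For $A\subseteq X$, let $Z_A$ be the set of vertices of $\Net$ with offspring in $A$.

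First we show that $Z_A$ is feasible and that $\w_\Sigma(Z_A)=\MAP(A)$.
\begin{itemize}
\item \emph{Feasibility.} Take a non-leaf $z\in Z_A$. It has a path to some $a\in A$, and the next vertex on that path is a child of $z$ that also lies in $Z_A$.
\item \emph{Value.} An edge $uv$ has offspring in $A$ if and only if $v\in Z_A$. Summing the incoming edge weights over $Z_A$ therefore counts each edge with offspring in $A$ exactly once, because every edge is incoming at exactly one vertex.
\item \emph{Cost.} We have $Z_A\cap X=A$, so $c_\Sigma(Z_A\cap X)=c_\Sigma(A)$.
\end{itemize}
At the root $\rho$ we have $\GW(\rho)=\emptyset$ and $V(\Gamma_{\le\rho})=V$. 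By \Cref{lem:MAPPD-table}, $\DP_1[\rho,\emptyset,B]$ is therefore the maximum of $\w_\Sigma(Z)$ over all feasible $Z\subseteq V$ with $c_\Sigma(Z\cap X)\le B$. This gives $\DP_1[\rho,\emptyset,B]\ge \MAP(A)$ for every $A$ with $c_\Sigma(A)\le B$.

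For the reverse inequality, let $Z$ be any feasible set with $c_\Sigma(Z\cap X)\le B$, and put $A:=Z\cap X$. We claim $Z\subseteq Z_A$. Start at any $z\in Z$ and repeatedly move to a child that lies in $Z$; feasibility guarantees such a child exists at every non-leaf. The network is a finite DAG, so this walk ends at a leaf of $Z$, which is a taxon in $A$. Hence $z$ has offspring in $A$. Since all weights are non-negative, $\w_\Sigma(Z)\le\w_\Sigma(Z_A)=\MAP(A)$, and $A$ respects the budget. So the table value equals the optimum when $B\le\B$.

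For the case $B>\B$ we use $\DP_2[\rho,\emptyset,\B]$. Here the condition reads $c_\Sigma(X\setminus Z)\ge\B$, which is equivalent to $c_\Sigma(Z\cap X)\le B$ by the identity $\B=c_\Sigma(X)-B$. The same argument then applies verbatim.

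Backtracking through the recurrence recovers a maximizing $Z$, and $A=Z\cap X$ is an optimal solution. The only delicate point is the inclusion $Z\subseteq Z_A$, which rests on the descending-walk argument above. Everything else is bookkeeping on top of \Cref{lem:MAPPD-table}.
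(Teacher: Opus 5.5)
Your proof is correct and follows the paper's route. Both apply \Cref{lem:MAPPD-table} at the root. For one inequality both send a budget-respecting taxon set to its ancestor set, and for the other both send a feasible $Z$ to $Z\cap X$. Your descending-walk argument also justifies $\w_\Sigma(Z)\le\MAP(Z\cap X)$, which the paper only asserts.
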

\begin{proof}
	By \Cref{lem:MAPPD-table}, we know that the tables store the desired entries.
	Let~$\rho$ be the root of the network.
	It remains to show that, given any integer~$d$, there is a set~$S \subseteq X$ of cost at most~$B$ and with~$\MAP(S) = d$ if and only if~$\DP_1[\rho,\emptyset,B] \geq d$ if~$B \leq \B$ and~$\DP_2[\rho,\emptyset,\B] \geq d$ if~$B > \B$.
	We only show the equivalence for~$B \leq \B$ and omit the analogous other case.
	
	Let there be a set~$S \subseteq X$ of cost at most~$B$ optimizing~$\MAP(S)$.
	Define~$Z$ as the set of ancestors of~$S$.
	We observe that~$Z$ is feasible and~$c_\Sigma(X \cap Z) = c_\Sigma(S) \leq B$.
	Consequently, $\DP_1[\rho,\emptyset,B] \geq \w_\Sigma(Z) = \MAP(S)$.
	Now, if~$\DP_1[\rho,\emptyset,B] = d$, then there is a set~$Z$ with~$d = \w_\Sigma(Z) \leq \MAP (X \cap Z)$.
	Since~$c_\Sigma(X \cap Z) \leq B$, we conclude that~$X \cap Z$ is the desired set.
	This completes the proof of the correctness of the algorithm for \bMAPPD.
\end{proof}

Having established correctness, we now quantify the algorithm's efficiency.
\begin{lemma}
	\label{lem:MAPPD-rt}
	\SbMAPPD takes~$\Oh(3^\nsw  \cdot \min^2(B,\B) \cdot n )$~time.
\end{lemma}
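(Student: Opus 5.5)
The bound comes from counting table entries and charging the work per vertex to \Cref{lem:DP}.

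First we fix the table dimensions. For every $v \in V$, the second index ranges over subsets $Y \subseteq \GW(v)$, so there are at most $2^{|\GW(v)|} \le 2^{\nsw}$ choices. The budget index ranges over $[\min(B,\B)]_0$. The algorithm fills only one of the two tables: $\DP_1$ if $B \le \B$ and $\DP_2$ otherwise. Hence the relevant budget range always has size $\min(B,\B)+1$.

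Leaf entries are computed directly from \Cref{eq:bMAPPD-base-1,eq:bMAPPD-base-2}. Since $|\GW(x)| = 1$, this costs $\Oh(\min(B,\B))$ per leaf. For an internal vertex~$v$, \Recc{eq:bMAPPD-rec} needs two values of $\DP_i'$. The second one is evaluated at the set $(Y \setminus \parents(v)) \cup \{v\}$. We must check that this set is a legal index set for the children, i.e.\ a subset of some ground set $M$ of size at most $\nsw$ that contains the children's bags. Take $M := \bigcup_j \GW(w_j) \subseteq \GW(v) \cup \{v\}$. This is a subset of $\GW(v) \cup \{v\}$, where $v$ enters only through its children and $\parents(v)$ is removed. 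Every child bag $\GW(w_j)$ has size at most $\nsw$. In the $\DP'$ combination each $Y_j$ must lie in $\GW(w_j)$ (entries with $Y_j \not\subseteq \GW(w_j)$ are set to $-\infty$).

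We then apply \Cref{lem:DP} with ground set $M$ and $k = \min(B,\B)$. This computes all values $\DP_i'[v,\cdot,\cdot]$ in $\Oh(3^{|M|} \cdot \min^2(B,\B) \cdot \outdeg(v))$ time. After that, each entry $\DP_i[v,Y,b]$ takes constant time, since it is a maximum of two looked-up values plus the precomputed $\w_\Sigma(E_v^-)$. Summing $\outdeg_\Gamma(v)$ over all $v$ gives $n-1$, so the total is $\Oh(3^{\nsw} \cdot \min^2(B,\B) \cdot n)$ provided $|M| \le \nsw$ (up to an additive constant in the exponent).

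\textbf{Main obstacle.} The delicate point is exactly this bound $|M| \le \nsw$, because $M$ may contain both $v$ and elements of $\GW(v)$. I would argue as follows. Every element of $\bigcup_j \GW(w_j)$ other than $v$ is a $G$-parent of a vertex in some $\Gamma_{\le w_j}$ and lies outside $\Gamma_{\le v}$. It therefore belongs to $\GW(v)$ as well. This gives $M \subseteq \GW(v) \cup \{v\}$ and $|M| \le \nsw + 1$, which changes the running time only by a constant factor~$3$.

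An alternative, if needed, is to apply \Cref{lem:DP} separately to the two index families: one with $M = \GW(v)$ restricted to $Y$, and one with $M = (\GW(v) \setminus \parents(v)) \cup \{v\}$. The second family has size at most $\nsw + 1 - \indeg(v) \le \nsw$ because $\indeg(v) \ge 1$. Either way the claimed $\Oh(3^{\nsw} \cdot \min^2(B,\B) \cdot n)$ bound follows. Preprocessing, i.e.\ computing all bags $\GW(v)$ bottom-up over $\Gamma$, is polynomial and dominated by this bound.
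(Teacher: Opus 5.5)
Your proposal is correct and follows the paper's proof: apply \Cref{lem:DP} once per vertex to obtain all entries $\DP_i'[v,\cdot,\cdot]$ in $\Oh(3^{|M|}\cdot\min^2(B,\B)\cdot\outdeg(v))$ time, then sum the $\Gamma$-out-degrees to $n-1$. You are also more careful than the paper about the ground set: the second term of the recurrence indexes the children with $v$, so $M\subseteq\GW(v)\cup\{v\}$ can have size $\nsw+1$, which costs only a constant factor (or none, if the two index families are treated separately); the paper's one-line proof leaves this implicit.
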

\begin{proof}
	For each vertex~$v$, by \Cref{lem:DP}, all entries in $\DP_i'[v,\cdot,\cdot]$ can be computed in~$\Oh(3^\nsw  \cdot \min^2(B,\B) \cdot \outdeg(v))$~time, where $\outdeg(v)$ is the outdegree of~$v$ in~$\Gamma$. Since $\sum_v \outdeg(v) = n-1$, this proves the claimed running time.
\end{proof}

\paragraph*{Algorithm for optimizing \bMxTPD (\SbMxTPD)}
To prove \Cref{thm:MAPPD+MaxTree} for \bMxTPD, we use a similar idea, but it is not possible to index table entries with the same structure, as we need to ensure that for a set of taxa, we only count a switching tree (of maximum value) and not the entire spanning sub-network.

For a set of edges~$F\subseteq E$, let~$\Net[F]$ be the sub-network~$(V(F), F)$.
A set of edges~$F\subseteq E$ is \emph{$Y$-compatible} if~$F$ contains an outgoing edge of each~$y\in Y$, and~$\Net[F]$ is a directed forest whose leaves are in~$X$.
(See \Cref{fig:feasible-compatible} for an example.)
We let~$E(v)$, for some vertex~$v$, be the set of edges that contains all edges in~$\Gamma_{\le v}$ and edges between~$\Gamma_{\le v}$ and~$\GW(v)$.
We store in~$\DP_1[v,Y,b]$ the maximum value~$\w_\Sigma(F)$ of any $Y$-compatible set~$F \subseteq E(v)$ with~$c_\Sigma(V(F) \cap X) \le b$.
In~$\DP_2[v,Y,b]$, we store the maximum value~$\w_\Sigma(F)$ of any $Y$-compatible set~$F \subseteq E(v)$ with~$c_\Sigma(X(\Gamma_{\le v}) \setminus V(F)) \ge b$.
Again, the latter case captures that the cost of all taxa that are not in~$V(F)$ is at least~$b$.
	
For leaves~$x\in X$, in~$\DP_i[x,Y,b]$, we store the values as in~(\ref{eq:bMAPPD-base-1}) and~(\ref{eq:bMAPPD-base-2}) for \bMAPPD.
(This makes sense, as at the leaves the two variants make no difference.)
	
Now, let~$v$ be an internal node of \Net, either a tree-vertex or a reticulation, and let~$w_1,\dots,w_{\outdeg(v)}$ be the $\Gamma$-children of~$v$.
We use a different recurrence than in the previous algorithm. We set
\begin{align}
	\label{eq:bMxTPD-rec}
	\DP_i[v,Y,b] =~&
	\max\left\{
		\DP_i'[v,Y,b];
		\max_{p_v \in \parents(v)}
		\w(p_v v) + \DP_i'[v,(Y \setminus \{p_v\}) \cup \{v\},b]
	\right\}.
\end{align}

Let~$\rho$ be the root of~\Net.
If~$B \le \B$ then the optimal $\MxPD$-value is stored in~$\DP_1[\rho,\emptyset,B]$ and otherwise in~$\DP_2[\rho,\emptyset,\B]$.
Backtracking allows finding the set that maximizes~$\MxPD$.

\begin{restatable}[\appendixstar]{lemma}{lemMxTPDtable}
\label{lem:MxTPD-table}
    $\DP_1[v,Y,b]$ stores the maximum value~$\w_\Sigma(F)$ of any $Y$-compatible set~$F \subseteq E(v)$ with~$c_\Sigma(V(F) \cap X) \le b$.
    $\DP_2[v,Y,b]$ stores the maximum value~$\w_\Sigma(F)$ of any $Y$-compatible set~$F \subseteq E(v)$ with~$c_\Sigma(X(\Gamma_{\le v}) \setminus V(F)) \ge b$.
\end{restatable}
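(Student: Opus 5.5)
We prove \Cref{lem:MxTPD-table} by induction over the tree-extension~$\Gamma$ from the leaves upward, mirroring the proof of \Cref{lem:MAPPD-table}. We give the argument for~$\DP_1$. For~$\DP_2$ the same argument goes through after replacing the condition ``cost of taxa in~$V(F)$ is at most~$b$'' by ``cost of taxa of~$\Gamma_{\le v}$ outside~$V(F)$ is at least~$b$''. Budgets then add over the $\Gamma$-children, since the leaf sets~$X(\Gamma_{\le w_j})$ partition~$X(\Gamma_{\le v})\setminus\{v\}$.

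\textbf{Base case.} For a leaf~$x$ we have~$E(x)=\{v_x x\}$, so~$F$ is either empty or equal to~$\{v_x x\}$. If~$Y=\{v_x\}$, then~$F$ must contain an outgoing edge of~$v_x$, which forces~$F=\{v_x x\}$; this choice is allowed only if~$c(x)\le b$. If~$Y=\emptyset$, both choices are allowed. This matches~(\ref{eq:bMAPPD-base-1}).

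\textbf{Inductive step, soundness (``$\le$'').} Let~$v$ be internal with $\Gamma$-children~$w_1,\dots,w_{\outdeg(v)}$. Note that~$E(v)$ is the disjoint union of~$E_v^-$ and the sets~$E(w_j)$. Every edge of~$\Gamma_{\le v}$ or leaving~$\Gamma_{\le v}$ into~$\GW(v)$ either enters~$v$ or has its head in some~$\Gamma_{\le w_j}$, and the~$E(w_j)$ are disjoint because the subtrees are disjoint. Now take a partition~$(Y_1,\dots)$ and a budget split~$(b_j)$, and let~$F_j$ be the optimal $Y_j$-compatible sets in~$E(w_j)$ supplied by induction.

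\emph{First term of \Recc{eq:bMxTPD-rec}.} Set~$F=\bigcup_j F_j$. Each~$y\in Y$ lies in some~$Y_j$, so~$F$ has an outgoing edge at~$y$. Leaves stay in~$X$, and costs add.

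\emph{Second term.} Here~$v$ lies in some~$Y_j$, so some~$F_j$ contains an outgoing edge of~$v$. Set~$F=\{p_v v\}\cup\bigcup_j F_j$. This makes~$p_v$ covered and~$v$ a non-leaf.

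The main obstacle is showing that such an~$F$ is again a forest, i.e.\ that no vertex gets in-degree~$2$; acyclicity is automatic since~$\Net$ is a DAG. Heads of edges in~$E(w_j)$ lie in~$V(\Gamma_{\le w_j})$, and these sets are disjoint. So in-degrees can only collide at~$v$, and~$v$ receives exactly the one edge~$p_v v$ in the second term and none in the first. One further point needs care: a vertex~$u\in\GW(v)$ may appear as a tail in several~$F_j$. This is harmless, since out-degree is unrestricted in a forest.

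\textbf{Inductive step, completeness (``$\ge$'').} Given an optimal $Y$-compatible~$F\subseteq E(v)$, we decompose it. If~$F$ contains no edge into~$v$, then~$v$ is not in~$V(F)$ as a head. Then~$v\notin V(F)$ at all: $v$ cannot be a leaf of~$\Net[F]$ since~$v\notin X$, and if~$v$ had an outgoing edge in~$F$ but no incoming one it would be a root, which is allowed. Hence we must also permit~$v$ as a root and put it into~$Y_j$ of the child containing its outgoing edge's head; I expect this to need~$v$ added to some~$Y_j$ even in the first term. I would double-check the recurrence for this case and, if needed, argue that removing~$v$'s outgoing edges keeps~$F$ compatible without decreasing its value. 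Edge weights are nonnegative, and removing these edges cannot create new leaves outside~$X$ except~$v$ itself, which is then removed. Hence an optimal~$F$ can be assumed to have no root outside~$Y\cup\{\rho\}$.

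Otherwise~$F$ contains exactly one edge~$p_v v$, and~$v$ has an outgoing edge in~$F$ because~$v$ is not a leaf in~$X$. Set~$F_j=F\cap E(w_j)$, and let~$Y_j$ consist of the~$y\in Y\setminus\{p_v\}$ (together with~$v$ in the second case) whose outgoing~$F$-edge we assign to~$E(w_j)$, choosing one such edge per element. Each~$F_j$ is $Y_j$-compatible: it is a sub-forest, and its leaves are leaves of~$F$, hence in~$X$. Tails in~$\GW(w_j)$ whose outgoing edges in~$F_j$ are not required by~$Y_j$ are fine, since compatibility only imposes lower requirements. Finally, set~$b_j=c_\Sigma(V(F_j)\cap X)$. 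The recurrence then attains~$\w_\Sigma(F)$, which completes the induction.
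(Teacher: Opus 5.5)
Your plan matches the paper's proof. You induct over $\Gamma$ and prove soundness by gluing the child forests: heads of edges in the sets $E(w_j)$ lie in disjoint subtrees, so only $v$ could receive two incoming edges, and it receives at most $p_vv$. You prove completeness by splitting $F$ into the sets $F\cap E(w_j)$, according to whether $F$ contains an edge into $v$. Those parts are fine.

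The paragraph on the case where $F$ has no edge into $v$, however, makes a false claim and proposes an invalid patch.

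\begin{itemize}
\item The claim $v\notin V(F)$ is false: $v$ may be a root of $\Net[F]$ with outgoing edges, as you then notice yourself.
\item Deleting the outgoing edges of $v$ does not work. Weights are only nonnegative, so deleting edges can strictly decrease $\w_\Sigma(F)$. A bound $\DP_1[v,Y,b]\ge\w_\Sigma(F')$ for a lighter $F'$ therefore says nothing about $F$.
\item The assertion that an optimal $F$ can be assumed to have no root outside $Y\cup\{\rho\}$ is also false. All edges of $E(v)$ have their heads in $\Gamma_{\le v}$, so every vertex of $\GW(v)\cap V(F)$ is a root of $\Net[F]$. For $Y=\emptyset$, any $F$ using an edge $p_vv$, or any other edge entering $\Gamma_{\le v}$ from outside, has such a root. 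Excluding these edges can lower the value.
\end{itemize}

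No change to the recurrence is needed, and the right argument is already in your last paragraph. Suppose $F$ contains an edge from $v$ into $\Gamma_{\le w_j}$. Then $v\in\GW(w_j)$, so this edge lies in $E(w_j)$. Hence $F\cap E(w_j)$ is $Y_j$-compatible without putting $v$ into $Y_j$, because compatibility only requires outgoing edges at the vertices of $Y_j$ and places no restriction on roots. This is how the paper handles the case. It is essential at $v=\rho$: there the second term of \Recc{eq:bMxTPD-rec} is empty, so forests containing edges out of $\rho$ can only come from the first term. Delete the detour and apply your remark about unrequired tails in $\GW(w_j)$ to $v$ as well.

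One minor point, which the paper also glosses over. With $b_j=c_\Sigma(V(F_j)\cap X)$ you only get $\sum_j b_j\le b$, whereas $d(b)$ demands equality. Put the slack into one $b_j$; this is harmless because, by the induction hypothesis, $\DP_1[w_j,Y_j,\cdot]$ is nondecreasing in the budget. For $\DP_2$, shrink the $b_j$ instead, since $\DP_2$ is nonincreasing in the budget.
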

\begin{proof}[Proof Sketch]
	The proof follows the same induction and the same two directions as the proof of \Cref{lem:MAPPD-table}, but uses compatible edge sets instead of feasible vertex sets.
	The first term of the recurrence combines child forests without selecting an incoming edge of~$v$, whereas the second term selects one incoming edge, adds its weight, and requires~$v$ to have an outgoing edge in one of the child forests.
	The child edge sets lie in disjoint $\Gamma$-subtrees, so their union remains a directed forest with leaves in~$X$; conversely, every compatible forest can be decomposed according to whether it contains an incoming edge of~$v$.
	The argument for~$\DP_2$ differs only in its treatment of the cost bound.
\end{proof}

\begin{lemma}
	\label{lem:MxTPD-correct}
	\SbMxTPD is correct.
\end{lemma}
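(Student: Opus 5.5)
By \Cref{lem:MxTPD-table}, the tables store the values described there. So the plan is to mirror the proof of \Cref{lem:MAPPD-correct}, with $Y$-compatible edge sets taking the place of feasible vertex sets. We treat only the case $B \le \B$ and $\DP_1[\rho,\emptyset,B]$; the case $B>\B$ with $\DP_2[\rho,\emptyset,\B]$ is identical up to the complementary cost condition. Note that $E(\rho)=E$ and $\GW(\rho)=\emptyset$. Hence $\DP_1[\rho,\emptyset,B]$ is the maximum of $\w_\Sigma(F)$ over edge sets $F\subseteq E$ such that $\Net[F]$ is a directed forest whose leaves lie in $X$ and $c_\Sigma(V(F)\cap X)\le B$. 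We show that this maximum equals $\max\{\MxPD(S) : S\subseteq X,\ c_\Sigma(S)\le B\}$.

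\emph{Lower bound.} Let $S$ be an optimal set and let $\sigma$ be a switching with $\PDsub{\Tree_\sigma}(S)=\MxPD(S)$. Let $F$ be the set of edges of $\Tree_\sigma$ that have offspring in $S$ within $\Tree_\sigma$, i.e.\ the union of the root-to-$S$ paths in $\Tree_\sigma$. Since $\Tree_\sigma$ is a tree (every reticulation keeps exactly one incoming edge), $\Net[F]$ is a directed tree. Its leaves are exactly the elements of $S$, because every edge of $F$ lies on a path ending in $S$, and taxa have out-degree zero. Thus $F$ is $\emptyset$-compatible, $V(F)\cap X=S$, and $\w_\Sigma(F)=\PDsub{\Tree_\sigma}(S)$. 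The degenerate case $S=\emptyset$ gives $F=\emptyset$ with value $0$.

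\emph{Upper bound.} Let $F$ attain $\DP_1[\rho,\emptyset,B]$ and set $S:=V(F)\cap X$, so that $c_\Sigma(S)\le B$. Since $\Net[F]$ is a forest, every vertex has at most one incoming edge in $F$. We therefore pick a switching $\sigma$ that, for each reticulation $r$ with an incoming edge in $F$, maps $r$ to that edge, and is arbitrary elsewhere. Then $F\subseteq E(\Tree_\sigma)$, because tree edges are always kept. The remaining step is to check that every edge $uv\in F$ has offspring in $S$ within $\Tree_\sigma$. Following $F$-edges downward from $v$ must terminate, since $\Net$ is acyclic, and it terminates at a leaf of $\Net[F]$, which lies in $X$ and hence in $S$. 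This path uses only $F$-edges, so it lies in $\Tree_\sigma$. Consequently $\w_\Sigma(F)\le \PDsub{\Tree_\sigma}(S)\le\MxPD(S)$, using that weights are non-negative.

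The two bounds together give equality, and backtracking through the table recovers $F$ and thus $S$. The main subtlety is the upper-bound direction, specifically the fact that the forest condition (at most one $F$-edge entering each reticulation) is exactly what makes $F$ realisable inside a single switching tree. A second point to check is that the ``leaves in $X$'' condition rules out counting edges that lead to non-taxon leaves; I would address this explicitly. Beyond that, the argument is routine.
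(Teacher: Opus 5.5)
Your proof is correct and follows the same route as the paper: it reads off $\DP_1[\rho,\emptyset,B]$ via \Cref{lem:MxTPD-table}, turns the root-to-$S$ paths of an optimal switching tree into an $\emptyset$-compatible edge set, and conversely recovers the taxon set $S = V(F)\cap X$ from an optimal compatible set. Your upper-bound direction is more careful than the paper's, which simply asserts that $\Net[F]$ is a tree, whereas compatibility only guarantees a forest. You handle this by choosing a switching that contains $F$ and checking that every $F$-edge has offspring in $S$.
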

\begin{proof}
	By \Cref{lem:MxTPD-table}, we know that the tables store the desired entries.
	Let~$\rho$ be the root of the network.
	It remains to show that there is a set~$S \subseteq X$ of cost at most~$B$ and~$\MxPD(S) = d$ if and only if~$\DP_1[\rho,\emptyset,B] \geq d$ if~$B \leq \B$ and~$\DP_2[\rho,\emptyset,\B] \geq d$ if~$B > \B$.
	We only show the equivalence for~$B \leq \B$ and omit the analogous other case.
	
	Let there be a set~$S \subseteq X$ of cost at most~$B$ and $\MxPD
	(S) = d$.
	Thus, there is a rooted tree~$\Net[F]$ connecting the root and~$S$ with~$\w_\Sigma(F) = d$.
	Then, $c_\Sigma(X\cap V(F)) = c_\Sigma(S) \leq B$ and~$\DP_1[\rho,\emptyset,B] \geq \w_\Sigma(F) = d$.
	Now, if~$\DP_1[\rho,\emptyset,B] = d$, then there is an~$\emptyset$-compatible set~$F$ with~$d = \w_\Sigma(F)$ and~$c_\Sigma(V(F) \cap X) \leq B$.
	As~$\Net[F]$ is a tree, we conclude that~$V(F) \cap X$ is the desired set.
	This completes the proof of the correctness of \SbMxTPD.
\end{proof}

The running time of \SbMxTPD can be shown equivalently to \Cref{lem:MAPPD-rt}.
This completes the correctness of \Cref{thm:MAPPD+MaxTree}.

\subsection{Computing values of \MnTPD}
\label{sec:algorithms-mntpd}

In the following, we analyze \MnTPD.
We do not believe that the budgeted optimization variant of \MnTPD can be solved in~$\Oh^*(c^{\nsw} \cdot B^d)$~time for constants~$c$ and~$d$. Therefore, we only show that $\MnPD$-values can be computed in~$\Oh^*(4^\nsw)$~time, which is still a reasonable result as computing $\MnPD$-values is \NP-hard in general~\cite{bordewichNetworks}.

\begin{theorem}
	\label{thm:MinTreePD}
	Given a phylogenetic network~$\Net = (V,E, \w)$ on $X$ and a tree-extension~$\Gamma$ with node scanwidth~$\nsw$,
	the value~$\MnPD(A)$ can be computed in~$\Oh(4^\nsw \cdot n)$~time, for any~$A\subseteq X$.
\end{theorem}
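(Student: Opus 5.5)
We will set up a dynamic program over the tree-extension~$\Gamma$, in the spirit of \SbMxTPD. The difference is that now we minimize, and taxa in~$A$ are forced.

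\emph{Reformulation.} Fix a switching~$\sigma$. The value $\PDsub{\Tree_\sigma}(A)$ is the total weight of edges of~$\Tree_\sigma$ that have offspring in~$A$. Equivalently, call a vertex \emph{active} if it has offspring in~$A$ in~$\Tree_\sigma$; then $\PDsub{\Tree_\sigma}(A)$ is the sum, over active non-root vertices~$u$, of~$\w(p u)$, where $p$ is the unique $\Tree_\sigma$-parent of~$u$. Minimizing over switchings is therefore the same as choosing, for every vertex~$u$, one incoming edge (forced for tree-vertices and leaves), subject to one rule. Every $a \in A$ is active, and a vertex is active if and only if some child whose chosen edge comes from it is active. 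We minimize the total weight of the chosen edges at active vertices.

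\emph{Table.} The key point is that a vertex~$p$ in a bag~$\GW(v)$ interacts with $\Gamma_{\le v}$ only through two facts:
\begin{itemize}
\item[(i)] whether $p$ is required to be active;
\item[(ii)] whether some child of~$p$ inside $\Gamma_{\le v}$ has already chosen~$p$ and is active, which certifies that $p$ is active.
\end{itemize}
Condition (i) must hold exactly, and not just as a lower bound: an inactive $p$ forbids any active child from choosing~$p$. So I will index $\DP[v,Y,Z]$ with $Z \subseteq Y \subseteq \GW(v)$. Here $Y$ is the set of bag vertices assumed active, and $Z \subseteq Y$ is the set of those whose activity must be witnessed by a child inside $\Gamma_{\le v}$. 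The entry is the minimum weight of a choice of incoming edges for $V(\Gamma_{\le v})$ that satisfies the following:
\begin{itemize}
\item every $a \in A$ in the subtree is active;
\item every active vertex in the subtree has an active chosen child;
\item no active vertex picks a parent in $\GW(v)\setminus Y$;
\item every $z\in Z$ receives an active child;
\item the activity status of every vertex in the subtree is consistent.
\end{itemize}
Vertices in $Y\setminus Z$ may or may not receive a witness; this is harmless because, in the min-tree objective, an extra witness does not change the cost. The number of pairs $Z\subseteq Y\subseteq \GW(v)$ is $3^{\nsw}$. Combining children via \Cref{lem:DP} needs, for each $Y$, a partition of $Z$ among the children. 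Summed over all $Y$, this gives $\sum_Y 3^{|Y|}\cdot 2^{\nsw-|Y|}=4^{\nsw}$, which is where the base~$4$ comes from. There is no budget dimension here, so the $k^2$ factor in \Cref{lem:DP} disappears.

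\emph{Recurrence.} At a $\Gamma$-vertex~$v$ we branch on whether $v$ is active.
\begin{itemize}
\item If $v$ is inactive, we combine $\DP'[v,Y',Z']$, where $Y' = Y\cap \GW$ of the children and $v\notin Y'$, so that no active child may choose~$v$.
\item If $v$ is active, we choose a parent $p\in\parents(v)\cap Y$ and pay $\w(pv)$. We then place $v$ into both $Y'$ and $Z'$ of the children unless $v\in A$; a leaf in $A$ has no children and is self-witnessing. Finally we remove $p$ from $Z$, since $p$ is now witnessed.
\end{itemize}
The base cases at leaves are: $x\in A$ must be active; $x\notin A$ is inactive. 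The answer is $\DP[\rho,\emptyset,\emptyset]$, with the root always treated as active, contributing weight $0$.

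\emph{Correctness and main obstacle.} Correctness is by induction over $\Gamma$, exactly as in \Cref{lem:MAPPD-table}. The real work is the forward direction: every switching tree must induce consistent table states. For this I use the fact that in a tree-extension all $\Net$-parents of vertices of $\Gamma_{\le v}$ outside the subtree lie in $\GW(v)$, which is an ancestor set of $v$ in~$\Gamma$. The main obstacle is to ensure that choosing an inactive parent, or leaving an active vertex unwitnessed, is never profitable for the minimization. I would handle this by the exactness of $Y$ together with the witnessing requirement via $Z$, and then check that $Y\setminus Z$ bag vertices removed from the bag (when the walk moves above them) have been witnessed. To make that check work I would refine the invariant so that each active vertex is put into $Z$ at the moment it is processed, as in the recurrence above.
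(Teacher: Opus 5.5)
Your dynamic program is the paper's, with the names of the two nested bag subsets swapped. Your $Y$ (bag vertices that active vertices of $\Gamma_{\le v}$ may choose) is the paper's $Z$ (permitted roots of the forest). Your $Z$ (bag vertices that must be witnessed inside $\Gamma_{\le v}$) is the paper's $Y$ (bag vertices that must have an outgoing edge in $F$). Your recurrence is exactly \Recc{eq:bMnTPD-rec}: choose $p\in\parents(v)\cap Y$, pay $\w(pv)$, add $v$ to both sets, and drop $p$ from the witness set. Two variations are harmless: you handle $X\setminus A$ as forced-inactive leaves instead of first pruning to $A=X$, and you treat $\rho$ as active at zero cost instead of reading off $\DP[\rho,\emptyset,\{\rho\}]$ (for $A\neq\emptyset$; the case $A=\emptyset$ is trivial). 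The consistency worry in your last paragraph needs no further refinement. The recurrence already handles it: a vertex declared inactive at its own $\Gamma$-node is excluded from the allowed set of every descendant entry. A vertex declared active is put into the witness set of its own subtree, which contains all of its $N$-children.

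The one step that is wrong as written is the running-time identity. By the binomial theorem, $\sum_{Y\subseteq\GW(v)}3^{|Y|}\cdot 2^{\nsw-|Y|}=5^{\nsw}$, not $4^{\nsw}$, so taken literally your count does not prove the theorem. The factor $2^{\nsw-|Y|}$ has no source in your algorithm. For a fixed $Y$, apply \Cref{lem:DP} with $k=0$ and ground set $Y$ (respectively $Y\cup\{v\}$ in the active branch). This distributes the witness set over the $\Gamma$-children in $\Oh(3^{|Y|}\cdot\outdeg(v))$ time. Then $\sum_{Y\subseteq\GW(v)}3^{|Y|}=4^{|\GW(v)|}$, and summing $\outdeg(v)$ over all $v$ gives $\Oh(4^{\nsw}\cdot n)$. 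This is precisely the paper's argument in \Cref{lem:MnTPD-rt}.
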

\looseness=-1
To prove the result, we define a dynamic programming algorithm over~$\Gamma$.
At the core, it follows the idea of \SbMxTPD, but it uses a more restrictive mathematical object.

Let a network~$\Net = (V,E, \w)$ and a set~$A\subseteq X$ be given.
Remove all vertices which do not have offspring in~$A$ with their incident edges.
We can thus assume that~$A=X$.

We can get a tree-extension for this new network by doing the following for each vertex~$v$ that does not have offspring in~$A$.
Let~$p_v$ be the parent of~$v$ and~$w_1,\dots,w_{\outdeg(v)}$ the $\Gamma$-children of~$v$, possibly with~${\outdeg(v)}=0$.
We add edges~$p_v w_i$ for each~$i \in [{\outdeg(v)}]$ and remove~$v$ with the incident edges.
This tree-extension may not be optimal, but has node scanwidth at most~$\nsw$ and is thus sufficient for us.

We recall that~$\Net[F]$ is the sub-network~$(V(F), F)$, for a set of edges~$F\subseteq E$ and that~$E(v)$, for some vertex~$v$, is the set of edges in~$\Gamma_{\le v}$ and edges between~$\Gamma_{\le v}$ and~$\GW(v)$.	

\paragraph*{Algorithm for computing $\MnPD$~(\SbMnTPD)}
We define a dynamic programming algorithm with table entries~$\DP[v,Y,Z]$, for dimensions~$v\in V(\Net)$ and $Y \subseteq Z \subseteq \GW(v)$.
A set of edges~$F\subseteq E$ is \emph{$(Y,Z)$-compatible} if~$F$ is non-empty, $F$~contains an outgoing edge of~$y$, for each~$y\in Y$, and~$\Net[F]$ is a directed forest whose set of leaves \emph{equals} $X(\Gamma_{\leq v})$ and whose roots are in~$Z$---not necessarily covering all of~$Z$.
(See \Cref{fig:feasible-compatible} for an example.)
We store in~$\DP[v,Y,Z]$ the \emph{minimum} value~$\w_\Sigma(F)$ of any $(Y,Z)$-compatible set~$F \subseteq E(v)$.
	
Leaves~$x\in X$ have exactly one parent~$v_x$.
Observe that~$\GW(x) = \{v_x\}$.
We store
\begin{align}
	\label{eq:bMnTPD-base}
	\DP[x,Y,Z] =~&
	\begin{cases}
		\w(v_x x) & \text{if~$Z = \{v_x\}$, (Independent of~$Y$.)}\\
		\infty & \text{else, if~$Y = Z = \emptyset$.}
	\end{cases}
\end{align}
	
Now, let~$v$ be an internal node of \Net, either a tree-vertex or a reticulation, and let~$w_1,\dots,w_{\outdeg(v)}$ be the $\Gamma$-children of~$v$.
In the spirit of \Cref{lem:DP} (with the integer~$k=0$), we write~$\DP'[v,Y,Z] = \min_{(Y_1,\dots,Y_{\outdeg(v)}) \in P(Y)} \sum_{j=1}^{{\outdeg(v)}} \DP[w_j,Y_j,Z\cap\GW(w_j)]$ for brevity.
We note that this definition slightly differs from the definition of~$\DP_i'$ in the other two algorithms.
To compute further values, we set
\begin{align}
	\label{eq:bMnTPD-rec}
	\DP[v,Y,Z] =~&
	\min\left\{
	\DP'[v,Y,Z];
	\min_{p_v \in Z}
	\w(p_v v) + \DP'[v,(Y \setminus \{p_v\}) \cup \{v\},Z \cup \{v\}]
	\right\}.
\end{align}
Here, we set~$\w(p_v v) = \infty$ whenever~$p_v$ is not a parent of~$v$.
Therefore, we can equivalently minimize over~$p_v \in Z \cap \parents(v)$.
We observe that~$p_v$ is not necessarily in~$Y$.

Let~$\rho$ be the root of~\Net.
The optimal $\MnPD$-value is stored in~$\DP[\rho,\emptyset,\{\rho\}]$.
(We allow~$Z = \{\rho\}$ at~$\rho$, though~$\GW(\rho)$ is actually empty, in order to allow~$\rho$ to be the root.)

\begin{restatable}[\appendixstar]{lemma}{lemMnTPDtable}
\label{lem:MnTPD-table}
    $\DP[v,Y,Z]$ stores the minimum value~$\w_\Sigma(F)$ of any $(Y,Z)$-compatible set~$F \subseteq E(v)$.
\end{restatable}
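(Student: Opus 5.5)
We prove the claim by induction over $\Gamma$ from the leaves upward, as for \Cref{lem:MAPPD-table,lem:MxTPD-table}. For each vertex $v$ we show two inequalities. First, every finite value produced by the recurrence is $\w_\Sigma(F)$ for some $(Y,Z)$-compatible $F\subseteq E(v)$. Second, for every $(Y,Z)$-compatible $F\subseteq E(v)$, the recurrence attains a value of at most $\w_\Sigma(F)$.

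Two facts about the setting are used throughout. After preprocessing, $A=X$, so every vertex has offspring in $X$. The edge sets $E(w_j)$ of distinct $\Gamma$-children $w_j$ of $v$ are pairwise disjoint, and every edge of $E(v)$ other than the incoming edges $E_v^-$ lies in some $E(w_j)$. This holds because an edge $uu'$ with $u'\in\Gamma_{\le w_j}$ has $u\in\Gamma_{\le w_j}$ or $u\in\GW(w_j)$. Moreover, $\GW(w_j)\subseteq \GW(v)\cup\{v\}$.

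\textbf{Base case.} Let $x$ be a leaf. Here $E(x)=\{v_xx\}$, and the leaf set of $\Net[F]$ must equal $\{x\}$. So $F=\{v_xx\}$ is the only candidate. It is a forest whose single root $v_x$ must lie in $Z$, which forces $Z=\{v_x\}$. The requirement for $Y\subseteq Z$ is then met automatically, since $v_xx$ is an outgoing edge of $v_x$. This agrees with (\ref{eq:bMnTPD-base}).

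\textbf{Soundness of the recurrence.} Take child solutions $F_j$ that are $(Y_j,Z\cap\GW(w_j))$-compatible, with $(Y_j)$ a partition of $Y$ (first term) or of $(Y\setminus\{p_v\})\cup\{v\}$ (second term, with $Z\cup\{v\}$). Their union is a directed forest, because the $F_j$ are edge-disjoint and the only vertices they can share are vertices of $\GW(v)\cup\{v\}$, which occur in them only as roots. Its leaf set is $\bigcup_j X(\Gamma_{\le w_j}) = X(\Gamma_{\le v})$, and its roots lie in $Z\cup\{v\}$. In the first term, $v$ is not a root, since $v\notin Z$. In the second term we add $p_vv$ with $p_v\in Z$. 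The vertex $v$ has an outgoing edge because $v$ was placed in some $Y_j$, so $v$ is not a leaf. After adding $p_vv$, the vertex $v$ has in-degree one, so $v$ is not a root either, and the forest property is kept.

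\textbf{Optimality of the recurrence.} Let $F$ be $(Y,Z)$-compatible. Since $\Net[F]$ is a forest, $F$ contains at most one edge of $E_v^-$. Split $F$ into $F\cap E_v^-$ and the parts $F_j:=F\cap E(w_j)$.
\begin{itemize}
\item If $F$ contains no edge of $E_v^-$, then $v$ cannot be a root of $\Net[F]$, because $v\notin\GW(v)\supseteq Z$. Hence $v$ has no outgoing edge in $F$. Each $F_j$ is then $(Y_j,Z\cap\GW(w_j))$-compatible, where $Y_j$ collects the $y\in Y$ whose chosen outgoing edge lies in $E(w_j)$. This partition is well defined after fixing one such edge for each $y$.
\item If $F$ contains $p_vv$, then $p_v\in Z$, because $p_v$ is a root of $\Net[F]$: it lies in $\GW(v)$ and has no incoming edges in $E(v)$. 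Also $v$ is not a leaf of $\Net[F]$, since the leaves lie in $X$. So $v$ receives an outgoing edge in some $F_j$, and we add $v$ to that $Y_j$. The roots of $F_j$ then lie in $(Z\cup\{v\})\cap\GW(w_j)$.
\end{itemize}
Each $F_j$ has leaf set exactly $X(\Gamma_{\le w_j})$. Every leaf of $\Net[F]$ lies in $X(\Gamma_{\le v})$ and therefore in a unique subtree. Conversely, a non-leaf vertex of $\Net[F]$ stays a non-leaf in $F_j$, because its outgoing edges lie in the same $E(w_j)$ whenever the vertex is in $\Gamma_{\le w_j}$. The induction hypothesis then bounds each table entry by $\w_\Sigma(F_j)$.

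\textbf{Main obstacle.} The delicate step is the bookkeeping of $Z$ and of non-emptiness. When a vertex $u$ of $\GW(v)$ is shared between children, it must be an allowed root in each child, and it must not secretly become a non-root, which is impossible since its incoming edges are not in $E(v)$. The non-emptiness condition interacts with children whose leaf set is empty. By the preprocessing every vertex has offspring in $X$, so every $\Gamma$-subtree with a non-leaf vertex contains leaves, and $F$ restricted to it is nonempty. I would check this carefully, possibly handling a degenerate child with an empty leaf set through a convention that $\DP$ equals $0$ there.
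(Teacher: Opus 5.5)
Your proposal is correct and follows the same route as the paper. It uses induction over $\Gamma$ with the leaf case as base, a soundness direction that glues child forests (adding $p_vv$ in the second term), and an optimality direction that splits a compatible $F$ into the parts $F\cap E(w_j)$, with a case split on whether $F$ contains an incoming edge of $v$. You are somewhat more careful than the paper in three places: you argue that $v\notin V(F)$ when no edge of $E_v^-$ is used, that the glued union is a forest, and that non-emptiness is automatic after preprocessing. The one detail you leave implicit is that $p_v$ must be dropped from the $Y_j$ in the second case (or its chosen outgoing edge fixed as $p_vv$), which the paper states explicitly.
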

\begin{proof}[Proof Sketch]
	The proof again follows the induction used for \Cref{lem:MAPPD-table}, but minimizes over $(Y,Z)$-compatible forests and uses~$Z$ to record their permitted roots.
	If no incoming edge of~$v$ is selected, the child forests retain their roots in~$Z$, whereas selecting an edge from~$p_v$ to~$v$ connects components rooted at~$v$ to~$p_v$ and therefore passes~$v$ as a temporary permitted root to the child subproblems.
	Conversely, every compatible forest is decomposed according to whether it contains an incoming edge of~$v$, and the corresponding term of \Recc{eq:bMnTPD-rec} has weight no larger.
	The leaf recurrence, which forces the unique incoming edge of a leaf, supplies the base case.
\end{proof}

\begin{lemma}
	\label{lem:MnTPD-correct}
	\SbMnTPD is correct.
\end{lemma}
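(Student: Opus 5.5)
We need to show that $\DP[\rho,\emptyset,\{\rho\}] = \MnPD(A)$. By \Cref{lem:MnTPD-table}, this entry is the minimum of $\w_\Sigma(F)$ over all $(\emptyset,\{\rho\})$-compatible sets $F \subseteq E(\rho) = E$. After the preprocessing we may assume $A = X$ and that every vertex has offspring in $X$. So, by $(\emptyset,\{\rho\})$-compatibility, these are exactly the nonempty edge sets $F$ such that $\Net[F]$ is a directed forest with leaf set exactly $X$ and all roots in $\{\rho\}$. In other words, $\Net[F]$ is a tree rooted at $\rho$ whose leaves are precisely the taxa. It therefore suffices to prove two inequalities between $\min_F \w_\Sigma(F)$ and $\min_\sigma \PDsub{\Tree_\sigma}(X)$. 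We should also note why the preprocessing does not change $\MnPD(A)$: deleting vertices without offspring in $A$ removes only edges that never lie on a root-to-$A$ path in any switching tree. Every switching of the reduced network extends to one of the original network with the same $\PD(A)$, and every switching of the original network restricts to one of the reduced network with the same $\PD(A)$.

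For ``$\le$'', take an optimal switching $\sigma$ and let $F$ be the set of edges of $\Tree_\sigma$ with offspring in $X$ (in $\Tree_\sigma$). Then $\w_\Sigma(F) = \PDsub{\Tree_\sigma}(X)$. Moreover, $\Net[F]$ is a subtree of $\Tree_\sigma$ containing $\rho$, since every taxon is reachable from $\rho$ in $\Tree_\sigma$ (each reticulation keeps one incoming edge and the network is acyclic). Its leaves are exactly $X$, because every edge of $F$ leads to a taxon and taxa have out-degree zero. Hence $F$ is compatible, and $\DP[\rho,\emptyset,\{\rho\}] \le \MnPD(A)$.

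For ``$\ge$'', take a minimizing compatible $F$. Since $\Net[F]$ is a tree, each vertex of $V(F)$ other than $\rho$ has exactly one incoming edge in $F$. Define a switching $\sigma$ by choosing this edge for every reticulation in $V(F)$, and an arbitrary incoming edge for every other reticulation. Then $F \subseteq E(\Tree_\sigma)$, so every root-to-taxon path in $\Tree_\sigma$ is the unique path in $\Net[F]$ (paths in a tree are unique). Consequently, the edges of $\Tree_\sigma$ with offspring in $X$ are exactly $F$. If some other edge $e$ of $\Tree_\sigma$ had offspring in $X$, then following the selected edges upward from a taxon would give a second root-to-taxon path through $e$, contradicting uniqueness of paths in $\Tree_\sigma$. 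Thus $\PDsub{\Tree_\sigma}(X) = \w_\Sigma(F)$, and the minimum over switchings is at most $\DP[\rho,\emptyset,\{\rho\}]$.

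The main obstacle is the second direction: we must argue carefully that the tree $\Net[F]$ is exactly the part of $\Tree_\sigma$ relevant to $X$. The key point is that in a switching tree every vertex has a unique path from $\rho$. A secondary technicality is the boundary convention $Z=\{\rho\}$ with $\GW(\rho)=\emptyset$, which only serves to allow the root $\rho$ in the compatibility definition. Backtracking then recovers the optimal switching.
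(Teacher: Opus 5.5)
Your proof is correct and follows the same route as the paper: by \Cref{lem:MnTPD-table} the entry $\DP[\rho,\emptyset,\{\rho\}]$ is the minimum weight of an $(\emptyset,\{\rho\})$-compatible edge set, and you match these sets with switching trees in both directions. Your version is more careful than the paper's one-line identification in two ways: you prune each switching tree to its edges with offspring in $X$, which is needed because a switching tree can have non-taxon leaves even after preprocessing, so its full edge set need not be compatible; and you justify why the preprocessing preserves $\MnPD(A)$.
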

\begin{proof}
	Assume that~$\DP[\rho,\emptyset,\{\rho\}] = d \in \mathbb{N}$.
	Then, each~$(\emptyset,\{\rho\})$-compatible set~$F \subseteq E(\rho) = E(\Net)$ satisfies~$\w_\Sigma(F) \geq d$.
	As the set of edges of each switching tree of~$\Net$ is~$(\emptyset,\{\rho\})$-compatible, we conclude~$\MnPD(X) \ge d$.
	To see that~$\MnPD(X) = d$ implies~$\DP[\rho,\emptyset,\{\rho\}] = d$, it is sufficient to observe that the family of~$(\emptyset,\{\rho\})$-compatible sets is the family of sets of edges of all switching trees of~\Net.
	This shows correctness.
\end{proof}

We conclude this section by deriving the total running time.
\begin{lemma}
	\label{lem:MnTPD-rt}
	\SbMnTPD takes~$\Oh(4^\nsw \cdot n)$~time.
\end{lemma}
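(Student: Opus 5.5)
We count table entries and bound the work per vertex, mirroring \Cref{lem:MAPPD-rt}. For a vertex~$v$, the entries $\DP[v,Y,Z]$ are indexed by pairs $Y \subseteq Z \subseteq \GW(v)$. Each element of $\GW(v)$ is in $Y$, in $Z\setminus Y$, or outside~$Z$, so there are at most $3^{|\GW(v)|} \le 3^{\nsw}$ such pairs. The root carries the single extra entry $Z=\{\rho\}$. The preprocessing step deletes vertices without offspring in~$A$ and reattaches their $\Gamma$-children to the $\Gamma$-parent. It takes linear time and does not increase the node scanwidth, so we may assume a tree-extension of width at most~$\nsw$ throughout.

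The main work is evaluating $\DP'[v,Y,Z]$ for all pairs. We fix~$Z$ and then invoke \Cref{lem:DP} with $k=0$. The table $Y \mapsto \DP[w_j, Y, Z \cap \GW(w_j)]$ is indexed by subsets $Y$ of the ground set $M=Z$ (entries with $Y \not\subseteq \GW(w_j)$ are set to~$\infty$). Minimizing over partitions of~$Y$ into the children's parts is then exactly the min-version of \Recc{eq:DP}. By \Cref{lem:DP}, all values for fixed~$Z$ cost $\Oh(3^{|Z|}\cdot \outdeg(v))$. Summing over $Z \subseteq \GW(v)$ gives $\sum_{Z} 3^{|Z|} = 4^{|\GW(v)|} \le 4^{\nsw}$ by the binomial theorem, so this step takes $\Oh(4^{\nsw}\cdot\outdeg(v))$ per vertex.

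The second term of \Recc{eq:bMnTPD-rec} needs $\DP'[v,Y',Z\cup\{v\}]$, where the ground set is $Z \cup \{v\}$. This set may fail to be a subset of $\GW(v)$, but it is contained in $\GW(v)\cup\{v\}$. The children's bags satisfy $\GW(w_j) \subseteq \GW(v) \cup \{v\}$, and each has size at most~$\nsw$. The main obstacle is accounting for this extra element without doubling the base. We compute $\DP'$ for all ground sets $Z' \subseteq \GW(v)\cup\{v\}$ that actually occur. The only ones needed are $Z$ and $Z\cup\{v\}$ with $Z\subseteq\GW(v)$, together with sets that are intersections with some child bag $\GW(w_j)$. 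The cost of the $Z \cup\{v\}$ family is $\sum_Z 3^{|Z|+1} = 3\cdot 4^{|\GW(v)|}$, which is still $\Oh(4^{\nsw})$. The outer minimum over $p_v \in Z\cap\parents(v)$ contributes at most a factor $\indeg(v)\le\nsw$ per entry. To absorb this factor into the $\Oh$-bound rather than an $\Oh^*$, we reorganize the sum. We fix $p_v$, and $p_v$ only affects~$Y$ via $Y\setminus\{p_v\}$. So for each of the $\Oh(3^{\nsw})$ pairs $(Y,Z)$, the minimum over $p_v$ can be taken over precomputed $\DP'$ values. The total is $\Oh(3^{\nsw}\cdot \indeg(v))$, and $\indeg(v)$ sums to $|E| = \Oh(n)$ over all vertices in a phylogenetic network.

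Leaves are handled in constant time by \Recc{eq:bMnTPD-base}. Summing $\Oh(4^{\nsw}\cdot(\outdeg_\Gamma(v)+1) + 3^{\nsw}\cdot\indeg(v))$ over all~$v$ uses $\sum_v \outdeg_\Gamma(v) = n-1$ and $\sum_v\indeg(v)=|E|=\Oh(n)$. This gives the claimed $\Oh(4^{\nsw}\cdot n)$, and reading off $\DP[\rho,\emptyset,\{\rho\}]$ is constant time.
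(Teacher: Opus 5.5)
Your argument matches the paper's proof. You fix $Z \subseteq \GW(v)$, apply \Cref{lem:DP} with $k=0$ at cost $\Oh(3^{|Z|}\cdot\outdeg(v))$, use $\sum_{Z\subseteq\GW(v)}3^{|Z|}=4^{|\GW(v)|}$, and sum the $\Gamma$-out-degrees to $n-1$. In two places you are more careful than the paper. You pad the child tables with $\infty$ when $Y\not\subseteq\GW(w_j)$. You also note that the second term of \Recc{eq:bMnTPD-rec} needs $\DP'$ over the ground set $Z\cup\{v\}\subseteq\GW(v)\cup\{v\}$. The paper passes over this silently, and it costs only a factor~$3$.

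One justification is false, however. It is not true that $\sum_v\indeg(v)=|E|=\Oh(n)$ for the networks considered here: they are non-binary, and reticulations may have arbitrary in-degree. For instance, take $k\ge 2$ and a root with children $t_1,\dots,t_k$, where each $t_i$ has edges to all of $r_1,\dots,r_k$ and each $r_j$ has a single leaf child. This is a valid phylogenetic network with $3k+1$ vertices and $k^2+2k$ edges.

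Your reorganization does not help either. It yields the same $\Oh(3^{\nsw}\cdot\indeg(v))$ per vertex as the naive count. The simple fix is \Cref{obs:in-deg}, which gives $\indeg(v)\le|\GW(v)|\le\nsw$. Hence the minimum over $p_v$ costs $\Oh(\nsw\cdot 3^{\nsw})=\Oh(4^{\nsw})$ per vertex, which is already absorbed into the bound; equivalently, $|E|\le\nsw\cdot n$. With that replacement, your proof is complete.
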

\begin{proof}
    To get the claimed running time, we do the following for each internal node~$v$.
	We iterate over all subsets~$Z$ of~$\GW(v)$.
	For a given~$Z$, we compute all values of~$\DP[v,\cdot,Z]$ as in~\Recc{eq:bMnTPD-rec} with the help of~\Cref{lem:DP} for~$k=0$.
	This takes~$\Oh(3^{|Z|} \cdot {\outdeg(v)})$~time, for each vertex~$v$ and each~$Z \subseteq \GW(v)$, where $\outdeg(v)$ is the outdegree of~$v$ in~$\Gamma$.
	Since~$\sum_{A \subseteq B} 3^{|A|} = 4^{|B|}$ for any set~$B$ (a standard consequence of the binomial theorem; see, e.g., \cite{graham}) and since $\sum_v \outdeg(v) = n-1$, we conclude that the overall running time is~$\Oh(4^\nsw \cdot n)$ to compute all table entries.
\end{proof}

\section{Software and experiments}
\label{sec:software}

Having established the algorithms and their running-time guarantees, we now turn to their practical behavior on realistic networks.
 
\begin{figure}[t]
	\centering
	\includegraphics[width=0.99\textwidth]{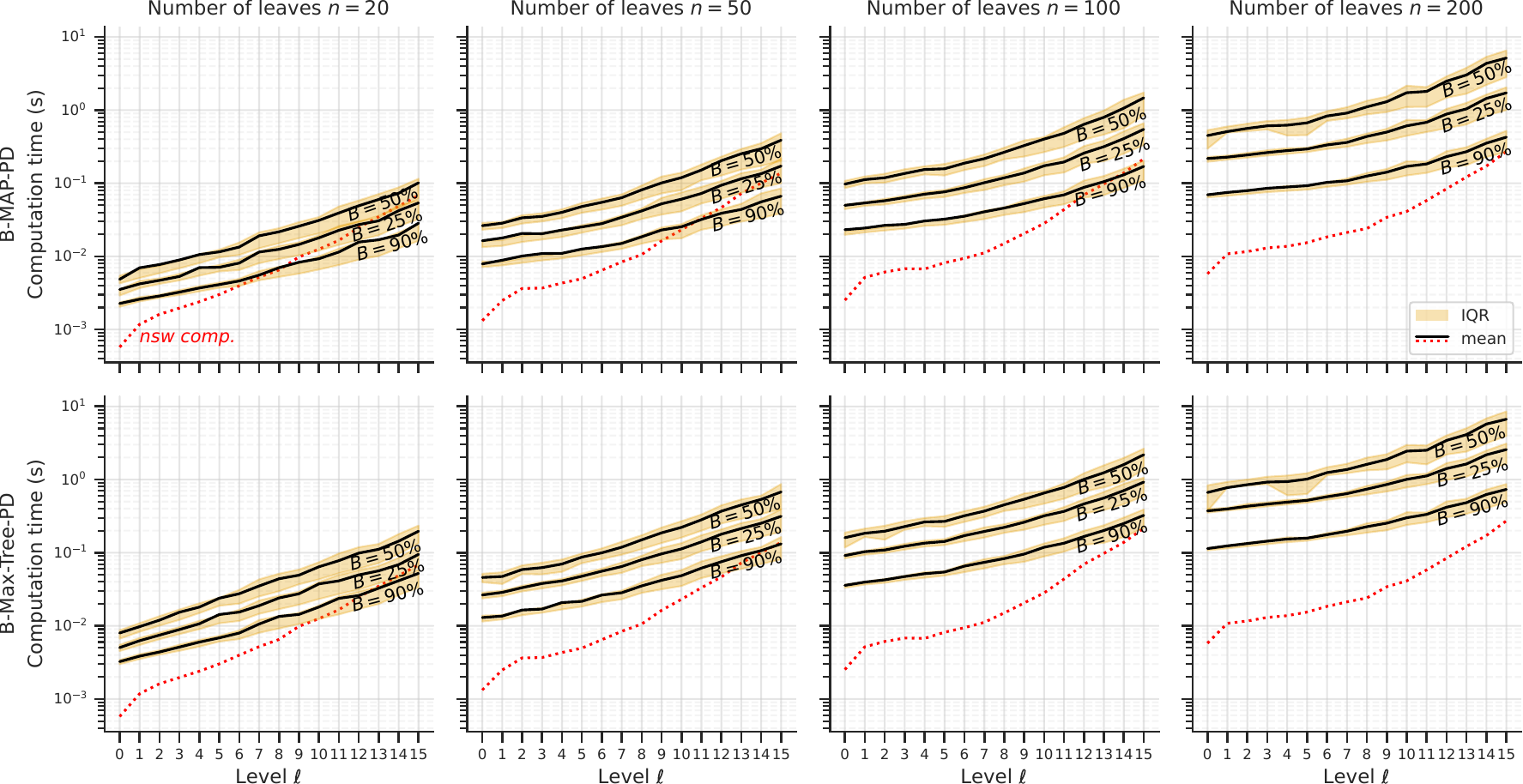}
	\caption{Computation times of \SbMAPPD and \SbMxTPD for sets of 100 $n$-leaf level-$\ell$ networks, using budgets~$B$ of $25\%$, $50\%$, and $90\%$ of the total cost. Both the mean and interquartile range (IQR) are depicted. The dotted line annotated `nsw comp.' shows the mean time to compute an optimal tree-extension; this time is not included in the other measurements.}
	\label{fig:mappd_maxtree}
\end{figure}

\paragraph*{Implementation}

We implemented our new node scanwidth algorithms and reduction rules within the existing Python package \scanwidth~\cite{holtgrefe2026exact}, which previously supported only edge scanwidth algorithms, thereby making the node scanwidth framework available as a standalone tool.\footnote{\scanwidth was also extended with heuristics for node scanwidth, adapted from scanwidth heuristics~\cite{holtgrefe2026exact}.}

Our three new phylogenetic diversity algorithms were implemented within the Python package \panda~\cite{PaNDA}, which originally supported only an edge scanwidth-based algorithm for computing all-paths diversity on phylogenetic networks.\footnote{\panda was also substantially extended with improved documentation, a more modular and extensible architecture, and integration with \phylozoo~\cite{Holtgrefe2026PhyloZoo} for increased interoperability.}

\paragraph*{Experiments}
\textbf{Simulated data.}\quad To assess computational efficiency of our algorithms, we conducted a simulation study on the effect of different network parameters on running time, using the same benchmark networks as in~\cite{PaNDA}. The dataset consists of $6{,}400$ phylogenetic networks simulated under a birth-death-hybridization model using the R package \textsf{SiPhyNetwork}~\cite{justison2023siphynetwork}. It contains $100$ level-$\ell$ networks with $n$ leaves for each combination of $n \in \{20, 50, 100, 200\}$ and $\ell \in \{0, \ldots, 15\}$. Here, the \emph{(edge-)level} is the most commonly used measure of tree-likeness in the phylogenetics literature, defined as the maximum, over all bi-connected components, of the number of reticulation edges minus the number of reticulation vertices.
Since node scanwidth is expected to be particularly advantageous over edge scanwidth for networks containing vertices of larger outdegree, we modified the networks to be non-binary by contracting the shortest 10\% of edges, while preserving adherence to our network definition.

\medskip

\noindent
\textbf{Experimental results.}\quad
Initial experiments confirmed that our dynamic programming algorithm---combined with the reduction rules---computes the node scanwidth of all considered instances with ease. Since it closely follows the edge scanwidth framework already evaluated extensively in~\cite{holtgrefe2026exact}, we focus here on the full pipeline for computing phylogenetic diversity.

To evaluate \SbMAPPD and \SbMxTPD, we assigned each taxon an integer cost sampled from a log-normal distribution with parameters $\mu = 2$ and $\sigma = 0.8$. This allows the realistic model of most taxa being of low cost, but some being quite expensive.
We then ran both algorithms to compute optimal phylogenetic diversity values under budgets $B$ equal to $25\%$, $50\%$, and $90\%$ of the total cost.
\Cref{fig:mappd_maxtree} depicts the resulting computation times, with the horizontal axis representing the network level.
In \Cref{fig:mintree} we evaluated \SbMnTPD.
Since this problem has no budget parameter, the only choice is the set of taxa $A$, which we set to $X$. Other choices are faster, as restricting to a subset of taxa effectively reduces the network size.

\begin{figure}[t]
	\centering
	\includegraphics[width=0.99\textwidth]{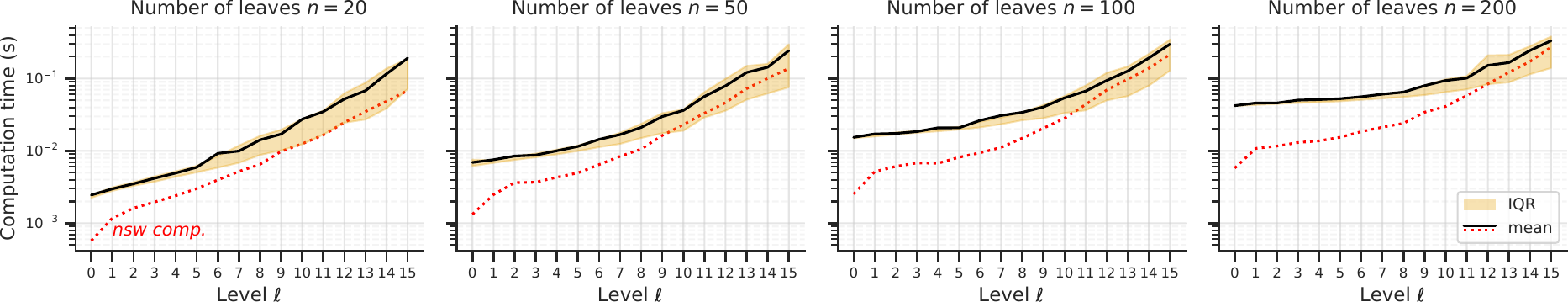}
	\caption{Computation times of \SbMnTPD for sets of 100 $n$-leaf level-$\ell$ networks, for the case where $A = X$. Both the mean and interquartile range (IQR) are depicted. The dotted line annotated `nsw comp.' shows the mean time to compute an optimal tree-extension; this time is not included in the other measurements.}
	\label{fig:mintree}
\end{figure}

Our implementation exhibits strong scalability, solving even the largest instances within seconds. As expected, for the two budgeted problems, the 50\% budget yields the longest running times, while the 90\% budget is the fastest, since our algorithm is parameterized by $\min(B,\B)$, where $\B$ denotes the complementary budget.

We further note that our new implementation substantially improves upon the previous edge scanwidth-based algorithm in \panda~\cite{PaNDA}, which required around 100 seconds on the largest unit-cost networks. While part of the gain is due to the smaller parameter \nsw, the implementation itself also contributes significantly to the observed speedup. Moreover, the node scanwidth algorithm exhibits running times comparable to the edge scanwidth algorithm of~\cite{holtgrefe2026exact} and does not appear to be the computational bottleneck on larger instances.

Finally, we performed additional large-scale evaluations on 20 networks with one thousand taxa, obtained by identifying the roots of five randomly selected 200-taxon networks from our dataset.
For the worst-case setting, where the budget is 50\%, this yielded average running times of 126 seconds for \bMAPPD and 138 seconds for \bMxTPD, while the \MnTPD computation required on average 1.1 seconds when evaluating the worst-case setting $A = X$. Together, these results showcase that our approaches scale far beyond the network sizes currently feasible for phylogenetic network inference methods.

\section{Conclusions}
\label{sec:discussion}

This work set out to reconcile two things that usually pull against each other: models of phylogenetic diversity faithful enough to capture reticulate evolution with non-uniform protection costs, and running times that remain tractable at biologically relevant scales. Node scanwidth is what makes the reconciliation possible.

While previous approaches often relied on the reticulation number~\cite{MAPPD,MaxNPD}, level~\cite{AVGTree}, or scanwidth~\cite{PaNDA,AVGTree2}---parameters that can be prohibitively larger---node scanwidth offers a smaller measure of a network's tree-likeness.
Treewidth is smaller still but yields considerably worse bases in the exponential part of the running time and less transparent algorithms~\cite{MAPPD,arnborg1991easy}. Node scanwidth thus occupies a useful middle ground: small enough to be practical on the networks arising in empirical biological datasets~\cite{holtgrefe2026exact}, yet structured enough to admit a clean dynamic-programming formulation with a small base.

A relevant observation arising from our $\Oh^*(3^{\nsw} \cdot \min^2(B,\B))$ solvers is that the addition of budgetary constraints does not lead to an explosion in the running time. Our implementation validates these results and establishes a new baseline for efficiency, beating the state-of-the-art~\cite{PaNDA} on unitary instances while additionally handling arbitrary non-binary network topologies and non-uniform conservation costs.

Several theoretical questions remain. First, we wonder whether \bMxTPD{} can be solved in~$\Oh^*(2^B)$~time, which would give a budget-only parameterization to complement our combined one.
Second, we suspect that \MnTPD is fundamentally harder than the other two variants and its budgeted optimization variant cannot be solved in~$\Oh^*(c^{\nsw} \cdot B^d)$~time for any constants~$c$ and~$d$. The intuition is that minimizing over switchings introduces a quantifier alternation absent from the other two problems: one must certify that no switching tree does better, rather than exhibit one that does well. This same alternation is why we further ask whether the budgeted optimization version of \MnTPD is hard with respect to $\Sigma^2_p$, the second level of the polynomial hierarchy~\cite{russell1998symmetric}. If the $\Oh^*(c^{\nsw} \cdot B^d)$ barrier is real, it would be interesting to investigate whether \MnTPD can at least be solved in~$\Oh^*(c^{\ell} \cdot B^d)$~time for constants~$c$ and~$d$, where $\ell$ is the \emph{vertex-level} of the network: the largest number of reticulation vertices in any of its bi-connected components.
This is immediate for binary networks but appears significantly more difficult for arbitrary networks.

Beyond these specific questions, the broader message is that the right structural parameter can make a realistically rich conservation model computationally tractable. By handling highly reticulated, non-binary networks with heterogeneous costs within a single framework, and by validating this framework on instances far larger than current network-inference methods produce, we hope to have brought phylogenetic diversity on networks one step closer to becoming a practical tool for protecting vanishing biodiversity.
\newpage

\setcounter{page}{1}
\renewcommand\thepage{\roman{page}}

\section*{Declarations}
    
\subsection*{Ethics approval and consent to participate}
Not applicable.
 
\subsection*{Consent for publication}
Not applicable.
 
\subsection*{Availability of data and materials}
The phylogenetic diversity software and the data to reproduce the experiments are available at \url{https://github.com/nholtgrefe/panda}. The node scanwidth software is available at \url{https://github.com/nholtgrefe/scanwidth}.
 
\subsection*{Competing interests}
The authors declare that they have no competing interests.
 
\subsection*{Funding}
NH was supported by the Dutch Research Council (NWO), grant OCENW.M.21.306. JS was partially supported by NWO grant~OCENW.GROOT.2019.015 (``Optimization for and with Machine Learning'') and ERC Advanced Grant~101199930 (``New Horizons of Parameterized Complexity'').
 
\subsection*{Authors' contributions}
Both authors contributed equally to the conception of the study, the development of the algorithms and theory, and the writing of the manuscript. NH implemented the algorithms and performed the computational experiments. Both authors read and approved the final manuscript.
 
\subsection*{Acknowledgements}
Generative AI was used to assist with programming and with language editing of the manuscript. The authors remain fully responsible for the content of the manuscript.

We thank anonymous reviewers of this paper for helpful comments.

\bibliography{ref}
\newpage

\setcounter{page}{1}
\renewcommand\thepage{A.\arabic{page}}

\appendix
 
\section{Omitted proofs}
\label{sec:appendix-proofs}
 
\subsection{Proof of \Cref{lem:DP}}
\lemDP*
\begin{proof}
	We show the correctness for the maximum case and omit the analogous minimum case.
	The solution to prevent considering a partition of ${\outdeg(v)}$ parts, which would be too costly, is to create an auxiliary table~$\DP'$ with another dimension~$j \in [{\outdeg(v)}]$.
	We set
	\begin{equation}
		\DP'[1,v,S,k]
		=
		\DP[w_1,S,k]\\
	\end{equation}
	and for further tables we use the recurrence
	\begin{equation}
		\DP'[j+1,v,S,k]
		=
		\max_{S' \subseteq S}
		\max_{k' \leq k}
		\DP[w_{j+1},S',k']
		+
		\DP'[j,v,S \setminus S',k-k'].
	\end{equation}
	Finally, we set
	\begin{equation}
		\DP[v,S,k]
		=
		\DP'[{\outdeg(v)},v,S,k].
	\end{equation}
	
	We can see correctness by observing that there is a partition~$(S_1,\dots,S_{\outdeg(v)}) \in P(S)$ if and only if we can select~$S_{j+1} = S'$ in the recursion for each~$j \in [{\outdeg(v)}-1]$.
	An analogous statement holds for the values of~$k$.

    The number of pairs $(S, S')$ with $S' \subseteq S \subseteq M$ is $3^{|M|}$---each item can be in $S'$, $S \setminus S'$, or in $M \setminus S$---and for each such pair we consider at most $k^2$ pairs $(k, k')$, for each of the $\outdeg(v)$ steps of the recurrence. This gives the claimed running time.
\end{proof}
We note that this approach can easily be generalized to more than one set and more than one integer.

\subsection{Proof of \Cref{lem:MAPPD-table}}
\lemMAPPDtable*
\begin{proof}
	We only show the statement for~$\DP_1$ and omit the analogous other case. Without loss of generality, let~$\DP_1[v,Y,b]$ store~$d$.
	We assume as an induction hypothesis that~$\DP_1[w_j,\cdot,\cdot]$ stores the correct value for each~$j \in [{\outdeg(v)}]$.
	We show that there is a set~$Z \subseteq V(\Gamma_{\le v})$ such that~$Y\cup Z$ is feasible, $c_\Sigma(Z\cap X) \le b$, and~$\w_\Sigma(Z) = d$.
	Afterward, we show that every set~$Z \subseteq V(\Gamma_{\le v})$ with~$Y\cup Z$ being feasible and $c_\Sigma(Z\cap X) \le b$ satisfies~$\w_\Sigma(Z) \le d$.
	
	Since~$\DP_1[v,Y,b] = d$, by~\Cref{eq:bMAPPD-rec}, $\DP_1'[v,Y,b] = d$ or~$\DP_1'[v,Y \setminus \parents(v) \cup \{v\},b] = d - \w_\Sigma(E_v^-)$.
	In the former case,
	$$
	\max_{(Y_1,\dots,Y_{\outdeg(v)}) \in P(Y)} \max_{(b_1,\dots,b_{\outdeg(v)}) \in d(b)} \sum_{i=1}^{{\outdeg(v)}} \DP_1[w_i,Y_i,b_i] = d
	$$
	and thus there are~$(Y_1,\dots,Y_{\outdeg(v)}) \in P(Y)$ and~$(b_1,\dots,b_{\outdeg(v)}) \in d(b)$ such\lb that~$\sum_{i=1}^{{\outdeg(v)}} \DP_1[w_i,Y_i,b_i] = d$.
	By the induction hypothesis, this implies that there are sets~$Z_i \subseteq V(\Gamma_{\le w_i})$ such that~$Y_i \cup Z_i$ is feasible and $c_\Sigma(Z_i\cap X) \le b_i$, for each~$i \in [{\outdeg(v)}]$.
	Further, $\w_\Sigma(Z_i) = d_i$ for some integer~$d_i$ and~$\sum_{i=1}^{\outdeg(v)} d_i = d$.
	We define~$Z$ as the (disjoint) union of these~$Z_i$.
	It follows~$c_\Sigma(Z \cap X) = c_\Sigma(X \cap \biguplus_{i=1}^{\outdeg(v)} Z_i) = \sum_{i=1}^{\outdeg(v)} c_\Sigma(X \cap Z_i) \leq \sum_{i=1}^{\outdeg(v)} b_i = b$.
	Further, since every~$Z_i$ is feasible and~$Y_i$ are a partition of~$Y$, also the union~$Z$ is feasible.
	And by definition~$\w_\Sigma(Z) = \sum_{i=1}^{\outdeg(v)} \w_\Sigma(Z_i) = \sum_{i=1}^{\outdeg(v)} d_i = d$.
	Thus, $Z$ is the desired set.
	Now, in the latter case,
	$$
	\max_{(Y_1,\dots,Y_{\outdeg(v)}) \in P(Y \setminus \parents(v) \cup \{v\})} \max_{(b_1,\dots,b_{\outdeg(v)}) \in d(b)} \sum_{i=1}^{{\outdeg(v)}} \DP_1[w_i,Y_i,b_i] = d.
	$$
	Again, by the induction hypothesis there are sets~$Z_i \subseteq V(\Gamma_{\le w_i})$ such that~$Y_i \cup Z_i$ is feasible and $c_\Sigma(Z_i\cap X) \le b_i$, for each~$i \in [{\outdeg(v)}]$.
	Further, $\w_\Sigma(Z_i) = d_i$ for some integer~$d_i$ and~$\sum_{i=1}^{\outdeg(v)} d_i = d - \w_\Sigma(E_v^-)$.
	We define~$Z := \{v\} \cup \biguplus_{i=1}^{\outdeg(v)} Z_i$.
	As~$v \not\in X$, we conclude~$c_\Sigma(Z \cap X) \leq b$ as before.
	Further, $\w_\Sigma(Z) = \w_\Sigma(E_v^-) + \sum_{i=1}^{\outdeg(v)} \w_\Sigma(Z_i) = d$.
	It remains to show that~$Y \cup Z$ is feasible.
	For each vertex in~$Y_i \cup Z_i$ for some~$i$, this follows with the feasibility of~$Y_i \cup Z_i$.
	We observe that~$v \in Y_i$ for some~$i$, as~$(Y_1,\dots,Y_{\outdeg(v)})$ is a partition of~$P(Y \setminus \parents(v) \cup \{v\})$.
	The feasibility of~$Y \cup Z$ now follows with the fact that every vertex in~$\parents(v)$ has~$v$ as a child and~$v \in Z$.
	Thus, $Z$ is the desired set, and this direction of the proof is complete.
	
	Now, let~$Z \subseteq V(\Gamma_{\le v})$ satisfy that~$Y\cup Z$ is feasible and $c_\Sigma(Z\cap X) \le b$.
	We show~$\w_\Sigma(Z) \le d$.
	We distinguish two cases, $v \in Z$ or $v \not\in Z$.
	We start with the latter, that is~$v \not\in Z$.
	Define~$Z_i := Z \cap V(\Gamma_{\leq w_i})$, for each~$i \in [{\outdeg(v)}]$.
	Since there are no edges between~$\Gamma_{\leq w_i}$ and~$\Gamma_{\leq w_j}$ for any pair~$i\neq j$, the sets~$Z_i$ are feasible.
	We define~$Y_i$ to be the subset of vertices in~$Y$ which have a child in~$Z_i$, but no child in~$Z_j$ for any~$j<i$.
	By definition of these sets, we conclude~$Y_i \cup Z_i$ is feasible.
	We define~$b_i = c_\Sigma(X \cap Z_i)$ and conclude~$\w_\Sigma(Z_i) \leq \DP_1[w_i,Y_i,b_i]$.
	Since~$(Y_1,\dots,Y_{\outdeg(v)})$ is a partition of~$Y$ and~$(b_1,\dots,b_{\outdeg(v)}) \in d(b)$, we know by~\Cref{eq:bMAPPD-rec} that~$d = \DP_1[v,Y,b] \geq \DP_1'[v,Y,b] \geq \sum_{i=1}^{{\outdeg(v)}} \DP_1[w_i,Y_i,b_i] \geq \sum_{i=1}^{{\outdeg(v)}} \w_\Sigma(Z_i) = \w_\Sigma(Z)$,
	which is what we intended to show.
	It remains to consider the case~$v \in Z$.
	We define~$Z_i$, $Y_i$ and~$b_i$ analogous to before.
	Because~$Y \cup Z$ is feasible and by the definition of a tree-extension~$v$ does not have children in~$Y$, we know that~$v$ has a child in~$Z$.
	Without loss of generality, $v$ has a child in~$Z_1$.
	We remove~$\parents(v)$ from all~$Y_i$ and add~$v$ to~$Y_1$.
	Now, $Y_i \cup Z_i$ is still feasible for each~$i \in [{\outdeg(v)}]$ and~$(Y_1,\dots,Y_{\outdeg(v)}) \in P(Y \setminus \parents(v) \cup \{v\})$.
	To show~$d \geq \w_\Sigma(Z)$, we can now use the equations~$d = \DP_1[v,Y,b] \geq \w_\Sigma(E_v^-) + \DP_1'[v,Y \setminus \parents(v) \cup \{v\},b] \geq \sum_{i=1}^{{\outdeg(v)}} \DP_1[w_i,Y_i,b_i] \geq \sum_{i=1}^{{\outdeg(v)}} \w_\Sigma(Z_i) = \w_\Sigma(Z)$.
\end{proof}
 
\subsection{Proof of \Cref{lem:MxTPD-table}}
\lemMxTPDtable*
\begin{proof}
	We only show the statement for~$\DP_1$ and omit the analogous other case.
	As an induction hypothesis, we use that~$\DP_1[w_i,\cdot,\cdot]$ stores the correct value for each~$w_i$.
	
	Let~$\DP_1[v,Y,b]$ store~$d$.
	\Cref{eq:bMxTPD-rec} implies that~$\DP_1'[v,Y,b] = d$ or~$\DP_1'[v,Y\setminus\{p_v\}\cup\{v\},b] = d-\w(p_v v)$ for some~$p_v \in \parents(v)$.
	Assume the former, then
	$$
	\max_{(Y_1,\dots,Y_{\outdeg(v)}) \in P(Y)} \max_{(b_1,\dots,b_{\outdeg(v)}) \in d(b)} \sum_{i=1}^{{\outdeg(v)}} \DP_1[w_i,Y_i,b_i] = d
	$$
	and thus there are~$(Y_1,\dots,Y_{\outdeg(v)}) \in P(Y)$ and~$(b_1,\dots,b_{\outdeg(v)}) \in d(b)$ such\lb that~$\sum_{i=1}^{{\outdeg(v)}} \DP_1[w_i,Y_i,b_i] = d$.
	By the induction hypothesis, this implies that there\lb are~$Y_i$-compatible sets of edges~$F_i \subseteq E(w_i)$ with~$c_\Sigma(V(F_i) \cap X) \leq  b_i$\lb and~$\sum_{i=1}^{{\outdeg(v)}} \w_\Sigma(F_i) = d$.
	We define~$F$ as the disjoint union of these~$F_i$ and directly receive the bound on the cost and diversity.
	To see that~$F$ is~$Y$-compatible, we observe that~$V(\Gamma_{\leq w_i})$ are pairwise disjoint and so the forests~$\Net[F_i]$ have at most the tails of the edges in~$\GW(w_i)$ in common.
	Thus, $\Net[F]$ is a directed forest whose leaves are in~$X$ and so~$F$ is a desired set.
	Now, let $\DP_1'[v,Y\setminus\{p_v\}\cup\{v\},b] = d-\w(p_v v)$ for some~$p_v \in \parents(v)$.
	Let~$(Y_1,\dots,Y_{\outdeg(v)})$ be a partition of~$Y\setminus\{p_v\}\cup\{v\}$ and let~$F_i$ be a~$Y_i$-compatible set of edges with~$c_\Sigma(V(F_i) \cap X) \leq b_i$.
	Define~$F$ as~$\{p_v v\} \cup \biguplus_{i=1}^{\outdeg(v)} F_i$.
	The bound on the cost and diversity follows as before.
	Without loss of generality, let~$v \in Y_1$.
	Since~$F_1$ is~$Y_1$ compatible, $F_1$ contains an outgoing edge of~$v$, and therefore also~$F$.
	We see with analogous arguments as before that~$\Net[F]$ is a forest with leaves in~$X$.
	Thus, $F$ is the desired set.
	
	Now, let~$F$ be a~$Y$-compatible set of edges for which~$c_\Sigma(V(F) \cap X) \leq b$.
	We show that~$\w_\Sigma(F) \leq d$.
	We distinguish two cases, whether~$F$ contains an incoming edge of~$v$ or not.
	Since~$(V(F),F)$ is a directed forest, $F$ contains at most one incoming edge of~$v$.
	We begin with the case that~$F$ does not contain an incoming edge of~$v$.
	We define~$F_i$ as the subset of edges in~$F$ in which all edges are directed toward a vertex in~$\Gamma_{\leq w_i}$.
	This is a partition of~$F$.
	Let~$b_i = c_\Sigma(V(F_i) \cap X)$.
	We partition~$Y$ in the following way.
	Let~$Y_i$ contain~$y \in Y$ if~$y \in V(F_i)$ and~$y \not\in V(F_j)$ for any~$j<i$.
	Clearly, these sets partition~$Y$.
	To see that~$F_i$ is~$Y_i$-compatible, we observe that~$F_i$ is~$Y \cap V(F)$-compatible, that~$Y_i \subseteq Y\cap V(F)$, and that if~$F_i$ is~$S$-compatible implies~$F_i$ being~$S'$-compatible for each~$S' \subseteq S$.
	Thus, $\w_\Sigma(F) = \sum_{i=1}^{\outdeg(v)} \w_\Sigma(F_i) \leq \sum_{i=1}^{\outdeg(v)} \DP_1[w_i,Y_i,b_i] \leq \DP_1'[v,Y,b] \leq \DP_1[v,Y,b] = d$.
	It remains to show the correctness when~$F$ contains~$p_v v$ for some parent~$p_v$ of~$v$.
	We define~$F_i$ as before and observe that~$p_v v$ is in none of the sets.
	But because~$F$ is~$Y$-compatible, there is an outgoing edge~$e$ of~$v$ in~$F$ and, without loss of generality, $e$ in~$F_1$.
	We define~$b_i$ and~$Y_i$ as above, but add~$v$ to~$Y_1$.
	Then, $Y_i$ as defined above is a partition of~$Y \setminus \{p_v\} \cup \{v\}$.
	We conclude $\w_\Sigma(F) = \sum_{i=1}^{\outdeg(v)} \w_\Sigma(F_i) \leq \sum_{i=1}^{\outdeg(v)} \DP_1[w_i,Y_i,b_i] \leq \w(p_v v) + \DP_1'[v,Y \setminus \{p_v\} \cup \{v\},b] \leq \DP_1[v,Y,b] = d$.
	Therefore, all table entries store the correct value.
\end{proof}

\subsection{Proof of \Cref{lem:MnTPD-table}}

\lemMnTPDtable*
\begin{proof}
	To see the correctness of the base case, we observe that the incoming edge of each leaf is in any switching tree and therefore also in the switching tree of minimum weight.
	This forces~$Z = \{v_x\}$ and does not allow the case with~$Z = \emptyset$.
	
	We show the correctness of the table~$\DP[v,Y,Z]$.
	That is, we show that if~$\DP[v,Y,Z] = d$, then there is a~$(Y,Z)$-compatible set of edges~$F \subseteq E(v)$ with~$\w_\Sigma(F) = d$.
	Afterward, we show that~$\DP[v,Y,Z] \le \w_\Sigma(F)$, for any~$(Y,Z)$-compatible set of edges~$F \subseteq E(v)$.
	As an induction hypothesis, we assume that~$\DP[w,\cdot,\cdot]$ stores the correct value for all $\Gamma$-children~$w$ of~$v$.

	Let~$\DP[v,Y,Z] = d$.
	Then, we have~$\DP'[v,Y,Z] = d$ or~$\min_{p_v \in Z}
	\w(p_v v) + \DP'[v,Y \setminus \{p_v\} \cup \{v\},Z \cup \{v\}] = d$, by \Recc{eq:bMnTPD-rec}.
	In the former case, there is a partition~$(Y_1,\dots,Y_{\outdeg(v)})$ of~$Y$ such that~$\DP[v,Y,Z] = \sum_{i=1}^{\outdeg(v)} \DP[w_i,Y_i,Z\cap\GW(w_i)]$ and there are~$(Y_i,Z \cap \GW(w_i))$-compatible sets of edges~$F_i \subseteq E(w_i)$ with~$\w_\Sigma(F_i) = \DP[w_i,Y_i,Z\cap\GW(w_i)]$.
	We define~$F = \biguplus_{i=1}^{\outdeg(v)} F_i$.
	Since the sets~$E(w_i)$ are pairwise disjoint, $F_i$ and~$F_j$ are disjoint whenever~$i \ne j$.
	It follows directly that~$F$ is~$(Y,Z)$-compatible and that~$\w_\Sigma(F) = \sum_{i=1}^{\outdeg(v)} \w_\Sigma(F_i) = d$.
	Now, assume that there is a vertex~$p_v \in Z\cap \parents(v)$ such that~$\DP'[v,Y \setminus \{p_v\} \cup \{v\},Z \cup \{v\}] = d - \w(p_v v)$.
	Thus, there is a partition~$(Y_1,\dots,Y_{\outdeg(v)})$ of~$Y \setminus \{p_v\} \cup \{v\}$ such that~$\DP[v,Y,Z] = \sum_{i=1}^{\outdeg(v)} \DP[w_i,Y_i,Z\cap\GW(w_i)]$ and there are~$(Y_i,Z \cap \GW(w_i))$-compatible sets of edges~$F_i \subseteq E(w_i)$ with~$\w_\Sigma(F_i) = \DP[w_i,Y_i,Z\cap\GW(w_i)]$.
	We define~$F = \{p_v v\} \cup \biguplus_{i=1}^{\outdeg(v)} F_i$.
	Without loss of generality, let~$v \in Y_1$ and thus~$F_1$ and~$F$ contain an outgoing edge of~$v$.
	For all other vertices in~$Y$, this is clear.
	Since~$E(w_i)$ does not contain incoming edges of~$v$, there is exactly one incoming edge, $p_v v$, of~$v$ in~$F$.
	Let~$z$ be a root of~$\Net[F_i]$ for any~$i \in [{\outdeg(v)}]$.
	If~$z$ is in~$Z$, then nothing remains to show.
	If~$z = v$, then, in~$\Net[F]$, the edge~$p_v v$ causes~$p_v$ to be a root, which is fine since~$p_v \in Z$.
	Thus, $\Net[F]$ is a tree with roots in~$Z$ and the set of leaves equals~$X(\Gamma_{\leq v})$.
	Because~$\w_\Sigma(F) = \w(p_v v) + \sum_{i=1}^{\outdeg(v)} \w_\Sigma(F_i) = \w(p_v v) + \DP'[v,Y \setminus \{p_v\} \cup \{v\},Z \cup \{v\}] = d$,
	this proves this direction of the statement.

	Now, let~$F \subseteq E(v)$ be a~$(Y,Z)$-compatible set of edges.
	We need to show that~$\DP[v,Y,Z] \leq \w_\Sigma(F)$.
	We distinguish two cases, depending on whether $F$ contains an incoming edge of~$v$ or not.
	If~$F$ does not contain an incoming edge of~$v$, then~$F_i$ with~$F_i := F \cap E(w_i)$ for each~$i \in [{\outdeg(v)}]$ form a partition of~$F$.
	We define~$Y_i$ as the set of vertices in~$Y \cap V(F_i)$ which do not occur in~$V(F_j)$ for any~$j<i$.
	Because the sets~$E(w_i)$ are pairwise disjoint, $\Net[F_i]$ is a tree with roots in~$Z\cap\GW(w_i)$ and whose leaf set is exactly~$X(\Gamma_{\leq w_i})$, for each~$i \in [{\outdeg(v)}]$.
	Therefore, $\DP[w_i,Y_i,Z\cap\GW(w_i)] \leq \w_\Sigma(F_i)$.
	We conclude with \Recc{eq:bMnTPD-rec}, that~$\DP[v,Y,Z] \leq \DP'[v,Y,Z] \leq \sum_{i=1}^{\outdeg(v)} \DP[w_i,Y_i,Z\cap\GW(w_i)] \leq \sum_{i=1}^{\outdeg(v)} \w_\Sigma(F_i) = \w_\Sigma(F)$.
	Now, let~$F$ contain~$p_v v$.
	Since~$\Net[F]$ is a tree, $F$~contains at most one incoming edge per vertex.
	Further, since~$p_v$ is not in~$\Gamma_{\leq v}$, we conclude that~$p_v$ is a root of~$\Net[F]$ and thus~$p_v \in Z$.
	We define~$F_i$ and~$Y_i$ as above and we remove~$p_v$ from all~$Y_i$.
	Because the leaves of~$\Net[F]$ are in~$X$ and~$v$ is an internal vertex, there is an outgoing edge~$e$ of~$v$ in~$F$.
	Without loss of generality, $e \in F_1$ and we add~$v$ to~$Y_1$.
	Then, the sets~$Y_i$ form a partition of~$Y \setminus \{p_v\} \cup \{v\}$.
	We use the arguments from above to see now that~$F_i$ is~$(Y_i,Z\cap\GW(w_i))$-compatible, for each~$i \in [{\outdeg(v)}]$.
	We conclude with \Recc{eq:bMnTPD-rec}, that~$\DP[v,Y,Z] \leq \w(p_v v) + \DP'[v,Y \setminus \{p_v\} \cup \{ v \},Z \cup \{v\}] \leq \w(p_v v) + \sum_{i=1}^{\outdeg(v)} \DP[w_i,Y_i,Z\cap\GW(w_i)] \leq \w(p_v v) + \sum_{i=1}^{\outdeg(v)} \w_\Sigma(F_i) = \w_\Sigma(F)$.
	Therefore, the table stores the correct values.
\end{proof}

\section{A reduction rule for removing some lowest reticulations}
\label{sec:rr-appendix}

We describe an additional reduction rule below, which reduces the number of reticulations in certain types of more complicated bi-connected components. See \Cref{fig:reticulation-removal} for an example.

\begin{rr}
	\label{rr:reticulation-removal}
	Apply \Cref{rr:deg2-chain,rr:cut-vertex} exhaustively.
	Let~$r$ be a reticulation without outgoing edges, let~$C$ be the parents of~$r$ that are chain vertices, and let~$P$ be the remaining parents of~$r$.
	Let~$P_C$ be the set of parents of~$C$.
	If~$P \cup P_C$ does not contain a reticulation and~$\anc(v^*) \supseteq P \cup P_C$, for some~$v^* \in P \cup P_C$---that is, each vertex~$v \in P \cup P_C$, $v\ne v^*$ has in $G$ a path to~$v^*$---then remove~$r$ and~$C$.
\end{rr}
\begin{figure}[t]
	\centering
    \includegraphics[width=0.90\linewidth]{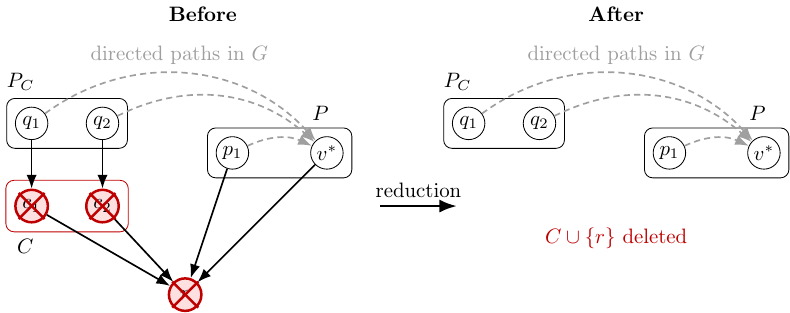}
	\caption{An illustrative example of \Cref{rr:reticulation-removal}.
		The dashed arrows denote directed paths in~$G$ that may leave the displayed subgraph.
		We note that these paths do not need to be disjoint.}
	\label{fig:reticulation-removal}
\end{figure}
\begin{lemma}
	\label{lem:reticulation-removal}
	\Cref{rr:reticulation-removal} is correct and can be applied exhaustively in~$\Oh(n^2)$~time.
	
	Let~$G'$ be the resulting DAG and let~$\Gamma'$ be an optimal tree-extension.
	Let~$Q$ be a path with vertices~$C$ in an arbitrary order and~$r$ at last with root~$q$.
	Then, connecting~$\Gamma'$ and~$Q$ with an edge~$v^* q$ gives an optimal tree-extension~$\Gamma$ of node scanwidth~$\max(\nswwithoutN (\Gamma'),\indeg(r))$.
\end{lemma}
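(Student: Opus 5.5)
Three things need checking: that the constructed $\Gamma$ is a tree-extension of $G$, that its node scanwidth is $\max(\nswwithoutN(\Gamma'),\indeg(r))$, and that this value is optimal. The running time will follow from a simple scan.

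\emph{Lower bound.} Since $G'$ is a subgraph of $G$, \Cref{obs:subnet} gives $\nswwithoutN_G \ge \nswwithoutN_{G'} = \nswwithoutN(\Gamma')$. Since $\GW(r)$ contains all parents of $r$, \Cref{obs:in-deg} gives $\nswwithoutN_G \ge \indeg(r)$. So it suffices to show that $\Gamma$ is a tree-extension of width at most this maximum.

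\emph{Validity of $\Gamma$.} Every edge of $G$ not in $G'$ has one of three forms:
\begin{itemize}
\item edges $pr$ with $p\in P$;
\item edges $cr$ with $c\in C$;
\item edges $p_c c$ with $p_c\in P_C$.
\end{itemize}
Each chain vertex $c$ has $r$ as its unique child, and $r$ has no children. Hence the only new ancestor requirements are that every vertex of $P\cup P_C$ is a $\Gamma$-ancestor of $Q$, and that each $c$ lies above $r$ on $Q$; the latter holds by construction. For the former, $Q$ hangs below $v^*$, so it suffices that $P\cup P_C\subseteq\anc_\Gamma(v^*)$. This holds because each $v\in P\cup P_C$ has a $G$-path to $v^*$, and I will argue that this path stays in $G'$. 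Indeed, the removed vertices $C\cup\{r\}$ reach only $r$, which is a sink, so no such path can pass through them. A tree-extension of $G'$ respects all $G'$-paths, which gives the claim.

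\emph{Width bound.} For vertices $u\in V(G')$ that are not $\Gamma$-ancestors of $v^*$, the subtree $\Gamma_{\le u}$ is the same in $\Gamma$ and $\Gamma'$, so $\GW(u,\Gamma)=\GW(u,\Gamma')$. For $u\in\anc_\Gamma(v^*)$, the subtree now also contains $C\cup\{r\}$, so the bag gains the parents of these vertices that lie outside the subtree, namely vertices of $P\cup P_C$. But every vertex of $P\cup P_C$ is a $\Gamma$-ancestor of $v^*$. We will therefore have to show that any such parent outside $\Gamma_{\le u}$ already lies in $\GW(u,\Gamma')$.

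This is the main obstacle. My plan is to use the $G'$-path from $p$ to $v^*$. The path starts at $p$, which is outside $\Gamma_{\le u}$, and ends at $v^*$, which is inside it. Its first edge entering $\Gamma_{\le u}$ has a tail outside the subtree, which is therefore a bag vertex. However, this yields \emph{some} bag vertex, not necessarily $p$ itself. The fix I would try is to charge one new bag element against one old one, but this injection may fail when several paths share a tail. If it does, the fallback is to choose $v^*$ minimal in $\Gamma'$ among the $\Gamma'$-descendants of $P\cup P_C$, or to reattach $Q$ at the lowest $\Gamma'$-vertex that is a common descendant of $P\cup P_C$. I expect the hypotheses that $P\cup P_C$ contains no reticulation and that the graph is biconnected to make the injection possible; in particular, a tree-vertex $p$ has a unique parent, which the path argument can exploit.

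For the new vertices themselves, each $c\in C$ and $r$ has a bag contained in $P\cup P_C\cup C$. Along $Q$, taking $C$ in order with $r$ last, the bag of $r$ is exactly its parent set $P\cup C$, of size $\indeg(r)$. The bag of the $i$-th vertex of $C$ on $Q$ (counted from $q$) consists of the first $i-1$ chain vertices together with $P$ and the parents of the last $|C|-i+1$ chain vertices, and since each chain vertex has a single parent, its size is at most $|P|+|C|=\indeg(r)$.

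\emph{Running time.} Each candidate $r$ is checked by one reachability search in $\Oh(n)$ time, and there are at most $n$ candidates, which gives $\Oh(n^2)$ overall.
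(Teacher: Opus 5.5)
\paragraph{The unresolved step fails, and the statement is false.} The step you flag as the main obstacle is a genuine gap, and it cannot be closed. For a $\Gamma'$-ancestor $u$ of $v^*$, a vertex of $P\cup P_C$ lying strictly above $u$ need not be in $\GW(u,\Gamma')$. In fact the statement is false, and so is the claimed correctness of the rule.

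Take $G$ on $\rho,p,z,a,v^*,r,y$ with edges $\rho p$, $\rho z$, $pa$, $pr$, $av^*$, $v^*r$, $v^*y$, $zy$.
\begin{itemize}
\item $G$ is biconnected with the single root $\rho$, and its only chain vertices $a$, $z$ are non-adjacent. So Rules~\ref{rr:deg2-chain} and~\ref{rr:cut-vertex} do not apply.
\item $r$ is a sink reticulation with $C=P_C=\emptyset$ and $P=\{p,v^*\}$, which contains no reticulation, and $p\to a\to v^*$. So the rule removes $r$.
\item $G'=G-r$ is a cycle whose single reticulation $y$ has in-degree~$2$. Hence $\nswwithoutN_{G'}=2$, attained by the path $\Gamma'=\rho,p,a,v^*,z,y$. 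Also $\indeg(r)=2$.
\item Attaching $r$ below $v^*$ turns $\GW(v^*,\Gamma')=\{a,\rho\}$ into $\GW(v^*,\Gamma)=\{a,\rho,p\}$. The only $G'$-child of $p$ is $a$, which lies above $v^*$. The $G'$-path $p\to a\to v^*$ enters $\Gamma'_{\le v^*}$ through $av^*$, so the bag vertex it certifies is $a$, which is already counted.
\end{itemize}
This is exactly the injection failure you worried about.

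\paragraph{Your fallbacks cannot rescue it.} No tree-extension of $G$ has width~$2$. In every tree-extension, the bag of $v^*$ contains:
\begin{itemize}
\item its parent $a$;
\item $p$, which is an ancestor of $a$ and has the child $r$ below $v^*$;
\item $z$ or $\rho$, because the subtree of $v^*$ contains $y$, and if it also contains $z$, then $\rho$ (an ancestor of $p$) is a parent lying outside it.
\end{itemize}
Hence $\nswwithoutN_G=3>\max(\nswwithoutN_{G'},\indeg(r))$, even though $P\cup P_C$ has no reticulation and $G$ is biconnected. Hanging a leaf below each of $a,z,r,y$ even gives a binary phylogenetic network whose nontrivial block is $G$.

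The paper's proof has the same hole. It simply asserts $\GW(v,\Gamma)=\GW(v,\Gamma')$ for all $v\in V(\Gamma')$, which fails at $v=v^*$ here.

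Your lower bound, the validity of $\Gamma$, and the bound $\indeg(r)$ on the bags along $Q$ are all correct. What is missing is a hypothesis forcing each $p\in P\cup P_C$ into the bag of every $\Gamma'$-vertex strictly below $p$ on the $\Gamma'$-path to $v^*$. For example, $P\cup P_C\subseteq\{v^*\}\cup\parents(v^*)$ suffices. Then the only candidate new bag element is the parent $\pi$ of $v^*$, and $\pi\in\GW(u,\Gamma')$ for every $\Gamma'$-ancestor $u$ of $v^*$ lying below $\pi$.
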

\begin{proof}
	$\Gamma$ is clearly a tree.
	To see that~$\Gamma$ is a tree-extension, it remains to consider edges~$uw$ with~$u\in V(\Gamma')$ and~$w\in V(Q)$, since the others are clear because~$\Gamma'$ and~$Q$ are tree-extensions.
	Since~$u \in P\cup P_C$, there is a (possibly empty) path from~$u$ to~$v^*$.
	It directly follows that~$u$ has a path to~$w$ in~$\Gamma$, since~$v^* q \in E(\Gamma)$ and~$Q$ is a path.
	By \Cref{obs:deg2-tree}, we know that there is an optimal tree-extension in which~$P_C$ are directly above~$r$.
	
	It remains to show that the bound is correct.
	Each vertex~$v\in V(\Gamma')$ satisfies~$\GW(v,\Gamma) = \GW(v,\Gamma')$.
	Further, $\GW(r,\Gamma) = P \cup C$ and has size~$\indeg(r)$ and the bag of each vertex in~$P_C$ removes and adds exactly one vertex in comparison to its~$\Gamma$-child.
	It follows that the bound on the node scanwidth is correct.
	We conclude that~$\Gamma$ is optimal with \Cref{obs:in-deg,obs:subnet}.
	
	Concerning the running time, we observe that \Cref{rr:reticulation-removal} can be applied at most once per reticulation and for each reticulation~$r$ checking the condition takes~$\Oh(\indeg(r)) \in \Oh(n)$ time.
	We further observe that if a reticulation does not satisfy the condition at some point, then it will also not satisfy the condition later, and so we can iterate once over the vertices.
\end{proof}

\section{An ILP for computing node scanwidth}
\label{sec:ILP}

We use five kinds of variables to find an optimal tree-extension~$\Gamma$.
The variables are set up to capture the following information.
Variable~$s$ takes the node scanwidth of~$\Gamma$ and takes integer values.
All following variables will take binary values, 0~or~1.
For each pair of vertices~$u,v \in V$, we define the following three variables.
Variable~$x_{u,v}$ is set to~$1$ if and only if~$uv$ is an edge in~$\Gamma$.
Variable~$y_{u,v}$ is set to~$1$ if and only if~$v$ can be reached from~$u$ in~$\Gamma$.
Variable~$z^u_v$ is set to~$1$ if and only if~$u \in \GW(v)$ in~$\Gamma$.
To compute transitivity, we resort to a standard trick and define an auxiliary variable~$\alpha_{u,v,w}$, for each triple~$u,v,w$ of vertices, which is set to~1 if and only if there is a path from~$u$ to~$v$ in~$\Gamma$ and an edge~$vw$ in~$\Gamma$.

The \ILP is defined as follows. We write $\delta_{uw \in E} = 1$ if~$uw \in E$ and $\delta_{uw \in E} = 0$ otherwise.
\begin{align}
	& \text{minimize} & s\\
	\label{eq:GW}
	& \text{subject to} & s &~\ge \sum_{u \in V} z^u_v & \forall v \in V\\
	\label{eq:necessary-reachability}
	&& 1 &~\le y_{u,v} & \forall uv \in E\\
	\label{eq:bag-def}
    && 1 &~= y_{v,v} & \forall v \in V\\
	\label{eq:y-basecase}
	&& z^u_v &~\ge y_{u,v} + y_{v,w} + \delta_{uw \in E} - 2 & \forall u,v,w \in V, u \neq v\\
	\label{eq:tree-edges}
	&& |V|-1 &~= \sum_{u\in V}\sum_{v\in V} x_{u,v} & \\
	\label{eq:tree-incoming}
	&& 1 &~\ge \sum_{u\in V} x_{u,v} & \forall v \in V\\
    \label{eq:x_basecase}
	&& 0 &~= x_{v,v} & \forall v \in V\\
	\label{eq:cycle-free}
	&& 1 &~\ge y_{u,v} + y_{v,u} & \forall u,v \in V, u\ne v\\
	\label{eq:alpha-def}
	&& \alpha_{u,v,w} &~\ge y_{u,v} + x_{v,w} -1 & \forall u,v,w \in V\\
	\label{eq:alpha1}
	&& y_{u,w} &~\ge \alpha_{u,v,w} & \forall u,v,w \in V, u\ne w\\
	\label{eq:alpha2}
	&& y_{u,w} &~\le \sum_{v\in V} \alpha_{u,v,w} & \forall u,w \in V, u \neq w\\
	\label{eq:alpha3}
	&& \alpha_{u,v,w} &~\le y_{u,v} & \forall u,v,w \in V\\
	\label{eq:alpha4}
	&& \alpha_{u,v,w} &~\le x_{v,w} & \forall u,v,w \in V\\
	&& x_{u,v}; y_{u,v}; \alpha_{u,v,w}; z^u_v &~\in \{0,1\} & \forall u,v,w \in V\\
	&& s &~\in \mathbb{N}_0
\end{align}

We continue with explaining the correctness of the \ILP formulation.
By definition, the node scanwidth is the size of the maximum bag.
Especially, it is larger than any bag size, which we capture in~\Cref{eq:GW}.
For any edge~$uv$ of the input graph, in any tree-extension, $v$ must be reachable from~$u$, which is forced by~\Cref{eq:necessary-reachability}.
Now, we know that every DAG with~$n$ vertices in which all vertices have in-degree at most~1 is a directed forest.
If additionally the DAG has~$n-1$ edges, then it has to be connected and thus a tree.
These conditions are enforced with~\Cref{eq:tree-edges,eq:tree-incoming,eq:x_basecase}.
To ensure acyclicity, we have~\Cref{eq:cycle-free}.
\Cref{eq:alpha-def} follows our definition of~$\alpha$.
The definition of $\alpha$ is necessary, because we want to define~$y_{u,w} = \max_{v \in V} y_{u,v} \cdot x_{v,w}$.
This non-linearity cannot be directly represented in an \ILP.
So, we need to define~$\alpha_{u,v,w} = y_{u,v} \cdot x_{v,w}$ and resort to the standard technique for such a case, as displayed in~\Cref{eq:alpha1,eq:alpha2,eq:alpha3,eq:alpha4}.

This completes the definition of the \ILP.

\newpage
\thispagestyle{empty}
\renewcommand\thepage{\arabic{page}}

\end{document}